\newtheorem{theorem}{Theorem}
\newtheorem*{theorem*}{Theorem}
\newtheorem{lemma}[theorem]{Lemma}
\newtheorem{corollary}[theorem]{Corollary}
\newtheorem{proposition}[theorem]{Proposition}
\newtheorem*{proposition*}{Proposition}
\newtheorem{claim}{Claim}
\newtheorem{observation}[theorem]{Observation}
\newtheorem{definition}[theorem]{Definition}
\theoremstyle{definition}
\newtheorem{remark}[theorem]{Remark}
\newtheorem{example}{Example}
\newcommand{\edges}[1]{E(#1)}
\newcommand{\vertices}[1]{V(#1)}
\newcommand{\points}[1]{P(#1)}
\newcommand{\pointgraph}[1]{pg(#1)}
\newcommand{\mim}[1]{\textnormal{MIM}(#1)}
\newcommand{\inc}[1]{inc(#1)}
\newcommand{\lat}[1]{S(#1)}
\newcommand{\T}{{\ensuremath{\mathcal T}}}
\newcommand{\covernum}[1]{\textnormal{cn}(#1)}
\newcommand{\bcovernum}[1]{\beta\textnormal{-cn}(#1)}
\newcommand{\fraccovernum}[1]{\textnormal{fcn}(#1)}
\newcommand{\cover}[1]{\textnormal{cn}(#1)}
\newcommand{\coverfrac}[1]{\textnormal{fcn}(#1)}
\newcommand{\vc}[1]{\textnormal{VC}(#1)}
\newcommand{\is}[1]{\textnormal{IS}(#1)}
\newcommand{\fhw}[1]{\textnormal{fhw}(#1)}
\newcommand{\bfhw}[1]{\beta\textnormal{-fhw}(#1)}
\newcommand{\hw}[1]{\textnormal{hw}(#1)}
\newcommand{\bhw}[1]{\beta\textnormal{-hw}(#1)}
\newcommand{\bsw}[1]{\beta\textnormal{-subw}(#1)}
\newcommand{\subw}[1]{\textnormal{subw}(#1)}
\newcommand{\tw}[1]{\textnormal{tw}(#1)}
\newcommand{\defeq}{\vcentcolon=}
\newcommand{\pw}[1]{\textnormal{pw}(#1)}
\newcommand{\fpw}[1]{\textnormal{fpw}(#1)}
\newcommand{\spw}[1]{\textnormal{spw}(#1)}
\newcommand{\mimw}[1]{\textnormal{mimw}(#1)}
\newcommand{\cw}[1]{\textnormal{cw}(#1)}
\newcommand{\ar}[1]{\textnormal{ar}(#1)}
\newcommand{\tuple}[1]{\mathbf{#1}}
\newcommand{\tvalue}[1]{\textnormal{val}(#1)}
\newcommand{\tvaluealg}[1]{\textnormal{val}_{\text{alg}}(#1)}
\newcommand{\opt}[1]{\textnormal{opt}(#1)}
\newcommand{\psiopt}{\psi_{\textnormal{opt}}}
\newcommand{\problem}[1]{{#1}}
\def\qplus{\mathbb{Q}_{\geq 0}}
\begin{document}

\title{Point-width and Max-CSPs\thanks{An extended abstract of this work appeared in the
\emph{Proceedings of the 34th Annual ACM/IEEE Symposium on Logic in Computer
Science (LICS'19)}~\cite{crz19:lics}. Stanislav
\v{Z}ivn\'y was supported by a Royal Society University Research Fellowship.
This project has received funding from the European Research Council (ERC) under
the European Union's Horizon 2020 research and innovation programme (grant
agreement No 714532). The paper reflects only the authors' views and not the
views of the ERC or the European Commission. The European Union is not liable
for any use that may be made of the information contained therein. Work done
while Cl\'ement Carbonnel and Miguel Romero were at the University of Oxford.}}

\author{
Cl\'ement Carbonnel\\
CNRS, University of Montpellier, France\\
\texttt{clement.carbonnel@lirmm.fr}
\and
Miguel Romero\\
Facultad de Ingenier\'ia y Ciencias, Universidad Adolfo Ib\'a\~nez\\
Santiago, Chile\\
\texttt{miguel.romero.o@uai.cl}
\and
Stanislav \v{Z}ivn\'{y}\\
University of Oxford, UK\\
\texttt{standa.zivny@cs.ox.ac.uk}
}

\date{}
\maketitle

\begin{abstract}

The complexity of (unbounded-arity) Max-CSPs under structural restrictions is poorly understood. The two most general hypergraph properties known to ensure tractability of Max-CSPs, $\beta$-acyclicity and bounded (incidence) MIM-width, are incomparable and lead to very different algorithms. 

We introduce the framework of point decompositions for hypergraphs and use it to derive a new sufficient condition for the tractability of (structurally restricted) Max-CSPs, which generalises both bounded MIM-width and $\beta$-acyclicity. On the way, we give a new characterisation of bounded MIM-width and discuss other hypergraph properties which are relevant to the complexity of Max-CSPs, such as $\beta$-hypertreewidth.

\end{abstract}

\section{Introduction}

The \emph{Constraint Satisfaction Problem} (\problem{CSP}) is a well-known
framework for expressing a wide range of both theoretical and real-life
combinatorial problems~\cite{montanari74:constraints,Jeavons98:algebraic,feder98:monotone}. Some
examples are satisfiability~\cite{Schaefer78:complexity}, evaluation of
conjunctive queries~\cite{chandra77:stoc,kolaitis98:pods}, graph
colourings~\cite{Hell90:h-coloring} and homomorphisms~\cite{hell:graphs}. An
instance of the \problem{CSP} is a set of \emph{variables}, a domain of
\emph{values} and a set of \emph{constraints}; each constraint is a relation
applied to a subset of the variables called the constraint scope. Given a
\problem{CSP} instance, the goal is to decide whether one can assign a value to
each variable so that all constraints are satisfied; that is, whether for every
constraint, the assignment restricted to the constraint scope belongs to the
constraint relation. Due to its expressivity, it is not surprising that the
\problem{CSP} is NP-complete in general. This has motivated a long line of research aiming
to find tractable restrictions of the problem, sometimes called \emph{islands of
tractability}. The focus of this paper is on the so-called \emph{structural
restrictions}, which restricts the ways in which the constraints overlap and
intersect each other.

A standard way of analysing structural restrictions is via the \emph{underlying hypergraph} of a \problem{CSP} instance. 
The vertex set of this hypergraph is the set of variables $X$ of the instance and the edges correspond to the scopes of the constraints: 
each constraint whose scope is a subset $S\subseteq X$ yields the edge $S$.  
Given a class $\mathcal{H}$ of hypergraphs, we define the problem
\problem{CSP($\mathcal{H},-$)} as the restriction of the \problem{CSP} to instances 
whose underlying hypergraphs lie in $\mathcal{H}$. Then the goal is to understand for
which classes $\mathcal{H}$ the problem \problem{CSP($\mathcal{H},-$)} is tractable, and 
for which classes $\mathcal{H}$ it is not.

The situation of CSP instances of \emph{bounded arity} (i.e., the maximum edge
size in the class $\mathcal{H}$ is a constant) is by now well-understood. 
In this setting, it follows from~\cite{freuder90:complexity}
and~\cite{Grohe07:jacm} (see also~\cite{GSS01}) that
\problem{CSP($\mathcal{H},-$)} is tractable if and only if $\mathcal{H}$ has \emph{bounded treewidth} 
(under the complexity theoretical assumption that FPT $\ne$ W[1]). 
On the other hand, the case of \emph{unbounded arity}, that is, arbitrary
classes $\mathcal{H}$ of hypergraphs, is more delicate. 
Unlike the bounded-arity case, the complexity of the problem heavily depends on how the constraints in a CSP instance are represented~\cite{chen10:succinct}. 
We focus on one of the most natural and well-studied representation of constraints, namely the \emph{positive} representation, 
where each constraint is represented by the list of tuples satisfying the constraint. 

Bounded treewidth is not the right answer for tractability in the case of unbounded arity, as
one can easily find classes $\mathcal{H}$ of hypergraphs of unbounded treewidth
such that \problem{CSP($\mathcal{H},-$)} is tractable. 
One of the first such classes are the \emph{acyclic} hypergraphs~\cite{BFMMUY81,Beeri83:jacm,Yannakakis81:vldb} (also called \emph{$\alpha$-acyclic}~\cite{Fagin83:jacm-degrees}). 
This tractability result has been extended to more general classes such as hypergraphs of bounded \emph{hypertreewidth}~\cite{Gottlob02:jcss-hypertree} 
and bounded \emph{fractional hypertreewidth}~\cite{Grohe14:talg}. The latter is the most general natural hypergraph property known to be tractable, although the precise borderline of polynomial-time solvability is still unknown (and cannot coincide with bounded fractional hypertreewidth; see~\cite{Marx13:jacm} for a brief discussion on that topic). However, as shown in~\cite{Marx13:jacm}, 
the classes $\mathcal{H}$ for which \problem{CSP($\mathcal{H},-$)} becomes \emph{fixed-parameter tractable} (parameterised by the size of the hypergraph) 
are precisely those of bounded \emph{submodular width}, which are 
more general than classes of hypergraphs of bounded fractional hypertreewidth.

In this paper we study the problem \problem{Max-CSP}\footnote{A usual definition
of a Max-CSP instance is a CSP instance with the goal to maximise the number of satisfied
constraints. As we explain in Section~\ref{sec:max-csp}, we actually consider a
more general framework, sometimes called \emph{finite-valued}
CSPs~\cite{tz16:jacm} or Max-CSPs with payoff
functions~\cite{Raghavendra08:everycsp}. Since our main result is a tractability
result, this makes it only stronger.}, which is a well-known generalisation of \problem{CSPs} for expressing optimisation problems. Now each constraint is of the form $f(\tuple{x})$, where $|\tuple{x}|=r$ and $f$ is an $r$-ary (finite-valued) function $f:D^r\to\qplus$ (we assume that $f$ is given as the set of pairs $\{(\tuple{d},f(\tuple{d})): \tuple{d}\in D^r, f(\tuple{d})>0\}$, which corresponds to the positive representation). 
Given a set of variables $X=\{x_1,\dots,x_n\}$, a domain $D$ of values and a set
$\mathcal{C}$ of (finite-valued) constraints, the goal is to compute the maximum value of 
$f(x_1,\dots,x_n)=\sum_{f_c(\tuple{x})\in \mathcal{C}} f_c(\tuple{x})$, over all possible assignments of values to $X$. 

In the case of bounded arity, tractability of \problem{Max-CSP($\mathcal{H},-$)} is also characterised by bounded treewidth, 
which follows directly from the CSP case. However, the complexity of unbounded-arity Max-CSPs under structural restrictions is poorly understood and the 
techniques used in the CSP context cannot be easily applied. 
Indeed, \problem{Max-CSP($\mathcal{H},-$)} is hard even for classes $\mathcal{H}$ of $\alpha$-acyclic hypergraphs~\cite{Gottlob09:icalp}. 
Moreover, unlike the CSP case, there is no \emph{known} maximal hypergraph property that leads to tractability. 
The two most general hypergraph properties known to ensure tractability of
\problem{Max-CSP($\mathcal{H},-$)} are \emph{$\beta$-acyclicity}\footnote{In fact, the authors in \cite{brault-baron15:stacs} consider a more general framework called the CSP \emph{with default values}, and
focus on counting solutions. However, they briefly discuss how to adapt the
results to the maximisation version.}~\cite{brault-baron15:stacs}, introduced
in~\cite{Fagin83:jacm-degrees}, and having \emph{bounded (incidence)
MIM-width}\footnote{The results for MIM-width in~\cite{Vatshelle12:phd,Saether15:jair} apply
to Max-SAT (and \#SAT), but can be adapted to Max-CSPs. Let us also remark that in~\cite{Vatshelle12:phd,Saether15:jair} a more general notion than that of bounded
MIM-width, namely having polynomial
\emph{PS-width}, is shown to be tractable for Max-SAT and \#SAT. This notion is however not purely structural, as it depends on the entire input instance and not just its hypergraph.}~\cite{Vatshelle12:phd,Saether15:jair}. 
These properties are incomparable~\cite{brault-baron15:stacs} and lead to very different algorithms. 
The main goal of this paper is to provide a common explanation for these two tractable properties, 
and in particular, for all known tractable hypergraph properties for
\problem{Max-CSPs}. 
We believe that such a unified explanation is a necessary first step to a better
understanding of the tractable structural restrictions of \problem{Max-CSPs}, 
and ultimately, to a precise characterisation of the tractability frontier.

\subsection{Contributions}

As our main contribution, we introduce the notions of \emph{point decomposition} and \emph{point-width}
that unify $\beta$-acyclicity and bounded MIM-width. 
We show that Max-CSPs (with positive representation) are tractable for 
hypergraphs of bounded point-width, 
provided a point decomposition of polynomial size and bounded width is also part of the input (Theorem~\ref{thm:alg}). 
Our tractability result explains the tractability of $\beta$-acyclic and bounded MIM-width hypergraphs. 
In particular, we prove that every $\beta$-acyclic hypergraph has a point decomposition of width $1$ and polynomial size (Theorem~\ref{thm:beta-acyc}), which can be computed in polynomial time.
In the case of MIM-width, we obtain a stronger result that may be of independent interest: 
having bounded MIM-width is equivalent to having bounded \emph{flat point-width} (Theorem~\ref{theo:mim-fpw}), 
where the latter is defined via a syntactic restriction of point decompositions. 
Finally, we also discuss some related notions such as
$\beta$-hypertreewidth~\cite{GP04} (Section~\ref{sec:conc}). 

The high-level idea behind our new notion of width is that a point decomposition of width $k\geq 1$ for a hypergraph $H$ provides a mechanism 
to encode \emph{several} tree decompositions of hypertreewidth at most $k$ in a compact and controlled way. In particular, 
a point decomposition will be expressive enough to encode one such a tree decomposition for \emph{each} subhypergraph of $H$. 
Interestingly, the underlying trees of all these tree decompositions can be very different from each other, as long as they respect the ``template'' tree $T$ 
given by the point decomposition. 
For \emph{flat} point decompositions, which capture MIM-width, these underlying trees need to be \emph{subtrees} of the template $T$, and then 
they are more similar to each other. 
The full details of point decompositions and their flat variant are given in
Sections~\ref{sec:pointdec} and~\ref{sec:mim}, respectively.

The algorithm behind our main tractability result (Theorem~\ref{thm:alg}) uses a form of dynamic programming over the point decomposition where in each step 
we need to solve an instance of the \emph{weighted maximum independent set}
problem in \emph{chordal graphs} (which is known to be tractable and in fact
solvable in linear time~\cite{frank1975some}, see also~\cite{Tarjan85:dm}). 
We can think of this procedure as doing dynamic programming \emph{simultaneously} over all the tree decompositions of the subhypergraphs of $H$ 
encoded in the point decomposition. 

\subsection{Related work}

It is also possible to parameterise CSPs and Max-CSPs by a class of admissible underlying \emph{structures}, instead of hypergraphs, 
which offers a more fine-grained analysis. 
In the case of CSPs of bounded arity, 
a complete classification of the tractable cases in terms of the underlying relational structures follows from~\cite{Dalmau02:width} and~\cite{Grohe07:jacm}. 
Recently, a similar classification has been obtained for (finite-valued)
Max-CSPs in terms of the underlying \emph{valued} structures~\cite{CRZ18}. 

Another important type of restrictions (and perhaps the most studied one) are the \emph{non-uniform} restrictions, where 
the constraint relations (or functions) are restricted to be fixed. 
In this case, the situation is fairly clear and now, after two decades of
intense research, complete classifications have
been obtained for CSPs~\cite{Bulatov17:focs,Zhuk17:focs}, 
and (finite-valued) Max-CSPs~\cite{tz16:jacm}.

\subsection{Structure} 
The paper is organised as follows.
Section~\ref{sec:prelim} introduces
the necessary notation on hypergraphs and Max-CSPs. Section~\ref{sec:pointdec}
defines point decompositions and point-width. The main tractability result is given in
Section~\ref{sec:alg}. Sections~\ref{sec:beta} and~\ref{sec:mim} show that $\beta$-acyclicity
and bounded MIM-width are special cases of bounded point-width, respectively. We
conclude in Section~\ref{sec:conc}.

\section{Preliminaries}
\label{sec:prelim}

\subsection{Hypergraphs, points and covers}

We assume that the reader is familiar with elementary graph
theory and refer to Diestel's textbook for more details~\cite{Diestel10:graph}. 
Given a graph $G$, we use $V(G)$ and $E(G)$ to denote its sets of \emph{vertices} and \emph{edges}, respectively. 
The subgraph of a graph $G$ \emph{induced} by a set $X\subseteq V(G)$, denoted by $G[X]$, has 
vertex set $X$ and edge set $\{\{u,v\}\in E(G): u,v\in X\}$. We use the same notation for directed graphs. 

\medskip
\noindent
{\bf Hypergraphs.\,} 
A (finite) hypergraph is a finite set of non-empty finite sets called
\emph{edges}. The set of \emph{vertices} of a hypergraph $H$, denoted by
$V(H)$, is the union of all its edges. Note that in this definition,
every vertex of a hypergraph belongs to at least one edge. A
\emph{subhypergraph} of a hypergraph $H$ is a subset of $H$. We use $\lat{H}$ to
denote the set of all vertex sets of subhypergraphs of $H$. 

\medskip
\noindent
{\bf Points.\,}
A \emph{point} of a hypergraph $H$ is a pair $(v,e)$ with $e \in H$ and $v \in
e$. We use $\points{H}$ to denote the set of all points of $H$. Given $P
\subseteq \points{H}$ and $e \in H$, the \emph{restriction of $e$ to $P$},
denoted by $e|_{P}$, is the set $\{v \in e : (v,e) \in P\}$. By extension the
\emph{restriction of $H$ to $P$}, denoted by $H|_{P}$, is the hypergraph $\{
  e|_{P} : e \in H, \, e|_{P} \neq \emptyset\}$. If $H'$ is a subhypergraph of
$H$ and $P \subseteq P(H)$, we use the notation $H'|_P$ as a shorthand for
$H'|_{P \cap \points{H'}}$.

\medskip
\noindent
{\bf Covers.\,}
An \emph{edge cover} of a hypergraph $H$ is a subhypergraph $C$ of $H$ such that
$\vertices{C} = \vertices{H}$. The \emph{cover number} of $H$, denoted by
$\covernum{H}$, is the smallest cardinality of an edge cover of $H$. We denote
by $\bcovernum{H}$ the maximum of $\covernum{H'}$ over all subhypergraphs $H'$
of $H$. 

\subsection{Max-CSP}
\label{sec:max-csp}

A finite-valued \emph{function} of arity $r=\ar{f}$ over a domain $D$ is a mapping $f : D^r \to \qplus$. A finite-valued \emph{constraint} over a set $X$ of variables is an expression of the form $f(\tuple{x})$, where $f$ is a finite-valued function and $\tuple{x} \in X^{\ar{f}}$. The set of variables appearing in $\tuple{x}$ is called the \emph{scope} of the constraint $f(\tuple{x})$. An instance $I$ of the \problem{Max-CSP} problem is a finite set $X = \{x_1,\ldots,x_n\}$ of variables, a finite domain $D$ of values, and an objective function of the form
\[
f_I(x_1,\ldots,x_n) = \sum_{i=1}^q f_i(\tuple{x_i})
\]
where each $f_i(\tuple{x_i})$, $1 \leq i \leq q$ is a finite-valued constraint. The goal is to compute the maximum value of $f_I$ over all possible assignments to
$X$, which we denote by $\opt{I}$. In this paper we assume that each function $f_i$, $1 \leq i \leq q$ is given in the input as the table of all pairs $(\tuple{d},f_i(\tuple{d}))$ where
$\tuple{d} \in D^{\ar{f_i}}$ and $f_i(\tuple{d}) > 0$ (the so-called \emph{positive representation}). 
It follows that the total size $\lVert I \rVert$ of a \problem{Max-CSP} instance $I$ is roughly
\[
    \sum_{i=1}^q  \bigg( \ar{f_i} \log(|X|) + \sum_{\substack{\tuple{d} \in D^{\ar{f_i}}\\ f_i(\tuple{d}) > 0}} ( \ar{f_i}\log(|D|) + |\text{enc}(f_i(\tuple{d}))| ) \bigg)
\]
where $\text{enc}(\cdot)$ is a reasonable encoding for rational numbers.

Actually, Max-CSPs are commonly defined with
only $\{0,1\}$-valued functions, or with $\{0,w\}$-valued functions, where $w$
could be different in different functions; the
latter are called weighted Max-CSPs. What we defined as Max-CSPs is a more
general framework, sometimes called \emph{finite-valued} CSPs~\cite{tz16:jacm}
or Max-CSPs with payoff functions~\cite{Raghavendra08:everycsp}.

The hypergraph of a \problem{Max-CSP} instance is the set of scopes of its constraints. 
Without loss of generality, we will always assume that no two constraints share
the same scope and for every constraint $f_i(\tuple{x_i})$, the entries of
$\tuple{x_i}$ are pairwise distinct. 
In particular, there is a bijection between the constraints of a \problem{Max-CSP} instance 
and the edges of its hypergraph.
Given a family $\mathcal{H}$ of hypergraphs, we denote by
\problem{Max-CSP($\mathcal{H},-$)} the restriction of \problem{Max-CSP} to the
instances whose hypergraph belongs to $\mathcal{H}$.

\section{Point decompositions and point-width}
\label{sec:pointdec}

\begin{figure*}[t]
\begin{center}
\begin{tikzpicture}
\usetikzlibrary{shapes}
\tikzstyle{tree edge}=[<-,very thick]
\tikzstyle{arc}=[->,black]
  \node (x0) at (-1,-1) {$x_0$};
  \node (x1) at (-1,0.9) {$x_1$};
  \node (x2) at (-2.414,-2.414) {$x_2$};
  \node (x3) at (0.414,-2.414) {$x_3$};

  \draw (x0) -- node [right] {$e_1$} (x1);
  \draw (x0) -- node [left,xshift=+0.6cm,yshift=0.0cm] {$e_2$} (x2);
  \draw (x0) -- node [right] {$e_3$} (x3);

  \draw [rounded corners=10mm] (-1,1.9) -- (-3.114,-3.114) -- (1.1141,-3.114) -- cycle;
  \node (e) at (-2.2,0) {$e$};

  \draw (4,0) ellipse (0.7cm and 1.5cm); 
  \node (t4) at (4,2.0) {$t_4$};
  \node (x400) at (4,0.9) {\footnotesize $\emptyset$};
  \node (x41) at (4,0.4) {\footnotesize $x_1$};
  \node (x42) at (4,0) {\footnotesize $x_2$};
  \node (x40) at (4,-0.5) {\footnotesize $x_0$};
  \node (x43) at (4,-1.0) {\footnotesize $x_3$};

  \node[shape=ellipse,draw,minimum size=19mm,minimum width=9mm] (sb41203) at (4,-0.25) {};

  \node at (4,-2) {\footnotesize $(x_1,e_1)$};
  \node at (4,-2.3) {\footnotesize $(x_0,e_1)$};
  \node at (4,-2.6) {\footnotesize $(x_1,e)$};
  \node at (4,-2.9) {\footnotesize $(x_2,e)$};
  \node at (4,-3.2) {\footnotesize $(x_0,e)$};
  \node at (4,-3.5) {\footnotesize $(x_3,e)$};

  \draw (6,0) ellipse (0.7cm and 1.5cm); 
  \node (t3) at (6,2.0) {$t_3$};
  \node (t300) at (6,0.8) {\footnotesize $\emptyset$};
  \node (t32) at (6,0.3) {\footnotesize $x_2$};
  \node (t30) at (6,-0.2) {\footnotesize $x_0$};
  \node (t3) at (6,-0.7) {\footnotesize $x_3$};

  \node[shape=ellipse,draw,minimum size=10mm,minimum width=7mm] (sb320) at (6,0.1) {};
  \node[shape=ellipse,draw,minimum size=17mm,minimum width=10mm] (sb3203) at (6,-0.2) {};

  \node at (6,-2) {\footnotesize $(x_2,e_2)$};
  \node at (6,-2.3) {\footnotesize $(x_0,e_2)$};
  \node at (6,-2.6) {\footnotesize $(x_2,e)$};
  \node at (6,-2.9) {\footnotesize $(x_0,e)$};
  \node at (6,-3.2) {\footnotesize $(x_3,e)$};

  \draw (8,0) ellipse (0.7cm and 1.5cm); 
  \node (t2) at (8,2.0) {$t_2$};
  \node[shape=circle,draw,inner sep=1pt] (sb20) at (8,0) {\footnotesize $x_0$};
  \node (t200) at (8,0.5) {\footnotesize $\emptyset$};
  \node (t23) at (8,-0.5) {\footnotesize $x_3$};

  \node[shape=ellipse,draw,minimum size=11mm,minimum width=7mm] (sb203) at (8,-0.25) {};

  \node at (8,-2) {\footnotesize $(x_0,e_1)$};
  \node at (8,-2.3) {\footnotesize $(x_0,e_2)$};
  \node at (8,-2.6) {\footnotesize $(x_0,e_3)$};
  \node at (8,-2.9) {\footnotesize $(x_3,e_3)$};
  \node at (8,-3.2) {\footnotesize $(x_0,e)$};
  \node at (8,-3.5) {\footnotesize $(x_3,e)$};

  \draw (10,0) ellipse (0.7cm and 1.5cm); 
  \node (t1) at (10,2.0) {$t_1$};
  \node (t100) at (10,0.3) {\footnotesize $\emptyset$};
  \node[shape=circle,draw,inner sep=1pt] (t13) at (10,-0.2) {\footnotesize $x_3$};

  \node at (10,-2) {\footnotesize $(x_3,e_3)$};
  \node at (10,-2.3) {\footnotesize $(x_3,e)$};

  \draw (12,0) ellipse (0.7cm and 1.5cm); 
  \node (t0) at (12,2.0) {$t_0$};
  \node (t000) at (12,0) {\footnotesize $\emptyset$};
  

  \draw[tree edge](4.7,0) -- (5.3,0);
  \draw[tree edge] (6.7,0) -- (7.3,0);
  \draw[tree edge] (8.7,0) -- (9.3,0);
  \draw[tree edge] (10.7,0) -- (11.3,0);

  \draw {(4,0.6) [rounded corners] -- (3.75,0.6) -- (3.75,-0.7) -- (4.2,-0.7) -- (4.2,-0.25) -- (3.8,-0.25) -- (3.8,0.2) -- (4.2,0.2) -- (4.2,0.6) -- (3.90,0.6)};


  \draw [arc] (sb41203) to [bend right=45] (sb3203);

  \draw[arc] (4.2,-0.6) to [bend right=45] (sb203);
  \draw[arc] (4.2,-0.5) to [bend right=0] (sb3203);
  \draw[arc] (4.2,+0.4) to [bend right=-45] (sb20);

  \draw[arc] (sb320) to [bend right=-35] (sb20);
  \draw[arc] (sb320) to [bend right=10] (sb203);
  \draw[arc] (sb3203) to [bend right=+30] (sb203);
  
  \draw[arc] (sb20) to [bend right=-35] (t13);
  \draw[arc] (sb20) to [bend right=-35] (t000);
  \draw[arc] (sb203) to [bend right=+35] (t13);

  \draw [arc] (t13) to [bend right=30] (t000);
\end{tikzpicture}
\end{center}
\caption{The hypergraph $H$ and its point decomposition from Examples~\ref{exa:first}--\ref{ex:beta}.}\label{fig:ex}
\end{figure*}
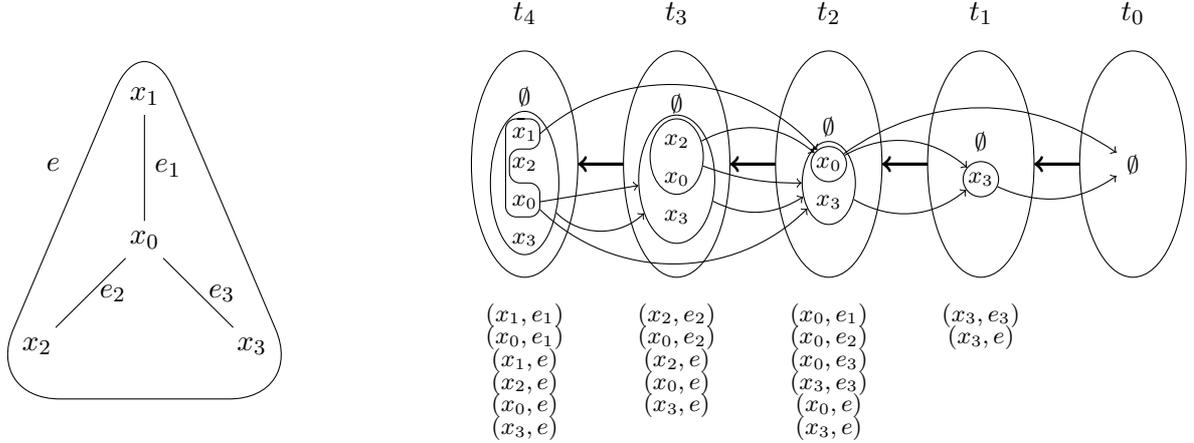

Let $H$ be a hypergraph. Let $\T=(T, (B_t)_{t\in V(T)})$ be a pair such that 
$T$ is a rooted tree and $B_t \subseteq \points{H}$ is a set of points, for every $t \in V(T)$. 
For $t\in V(T)$, we call the set $B_t$ the \emph{bag of $t$} and the pairs $(t, S)$ with $S \in
\lat{H|_{B_t}}$ the \emph{sub-bags of $t$}. 
We denote by $<_T$ the strict partial
order on $V(T)$ such that $t_1 <_T t_2$ if and only if $t_1$ is a
descendant of $t_2$ in $T$. 
A \emph{$\T$-structure} is a directed graph $A$ whose
vertex set is the set of all sub-bags of $V(T)$ and such that for every
arc $((t_1,S_1),(t_2,S_2))$ in $A$ we have $t_1 <_T t_2$. 

\begin{example}
\label{exa:first}
Consider the hypergraph $H=\{e, e_1, e_2, e_3\}$, where $e=\{x_0, x_1, x_2, x_3\}$ 
and $e_i = \{x_0, x_i\}$, for every $i\in\{1,2,3\}$; see Figure~\ref{fig:ex} on the left. 
In particular, $V(H)=\{x_0, x_1, x_2, x_3\}$. 
The right-hand side of Figure~\ref{fig:ex} depicts a pair $\T=(T,(B_t)_{t\in
V(T)})$, where $T$ is a path (depicted by bold arcs\footnote{We view the tree
$T$ as undirected although there is an implicit direction by the parent/child 
relationship. For clarity, in Figure~\ref{fig:ex}, we directed the
(bold) edges of the tree $T$ away from the root, which is $t_0$.}) rooted at $t_0$,
and the points in each bag $B_t$ are listed below each node.  
The sub-bags of each node of $T$ are depicted within the node.  
For instance, for the node $t_4$ we have $H|_{B_{t_4}}=\{\{x_1,x_0\},\{x_1,x_2,x_0,x_3\}\}$. 
Hence the sub-bags of $t_4$ are $(t_4,\emptyset)$, $(t_4,\{x_1,x_0\})$ and $(t_4,\{x_1,x_2,x_0,x_3\})$. 
The arcs between sub-bags represent a possible $\T$-structure $A$.  
\end{example}

\begin{definition}[Decomposability]
\label{def:decomp}
Let $A$ be a $\T$-structure for a pair $\T=(T, (B_t)_{t\in V(T)})$. We say that $A$ is
\emph{decomposable} if for any two arcs $(s_1,s)$, $(s_2,s)$ in $A$, if
\begin{enumerate}
\item[(i)] $s_1,s_2$ are sub-bags of different vertices of $V(T)$, and
\item[(ii)] there exist two sub-bags $s'_1, s'_2$ (not necessarily distinct) of the same
vertex $t \in V(T)$, and directed paths in $A$ from
$s'_1$ to $s_1$, and from $s'_2$ to $s_2$
\end{enumerate}
then either $(s_1,s_2) \in \edges{A}$ or $(s_2,s_1) \in \edges{A}$.
\end{definition}

Observe that if $A$ is \emph{not} decomposable due to arcs $(s_1,s)$, $(s_2,s)$, where 
$s_1,s_2$ are sub-bags of $t_1,t_2\in V(T)$, respectively, 
then either $t_1<_T t_2$ or $t_2<_T t_1$ must hold 
(otherwise, condition (ii) would fail). 
Let say that $t_1<_T t_2$. 
Note that it could be possible that $t=t_1$, 
in which case, the directed path from $s'_1$ to $s_1$ is simply the empty path, i.e., $s'_1=s_1$. 
If additionally, $s'_2=s_1$, we obtain the simplest case of non-decomposability, in which there is a directed path in $A$ from $s_1$ to $s_2$ (and $(s_1,s_2)\notin E(A)$).

\begin{example}
\label{exa:decomp}
The $\T$-structure $A$ from Example~\ref{exa:first} and Figure~\ref{fig:ex} is decomposable. Consider for instance the arcs $(s_1,s)$ and $(s_2,s)$ with $s=(t_2,\{x_0,x_3\})$, $s_1=(t_4,\{x_1,x_0\})$ and $s_2=(t_3,\{x_2,x_0,x_3\})$. We have that $s_1$ and $s_2$ are sub-bags of different vertices of $T$, and condition (ii) of decomposability holds if we take $s'_1 = s_1$ and $s'_2 = (t_4,\{x_0,x_1,x_2,x_3\})$. In this case decomposability requires that at least one of $(s_1,s_2)$ or $(s_2,s_1)$ is an arc of $A$, which is true for $(s_1,s_2)$.
\end{example}

The intuition behind decomposability is as follows. 
Suppose we have a sub-bag $s$ in the $\T$-structure and two incoming arcs $(s_1,s), (s_2,s)$ in $A$, where $s_1,s_2$ are sub-bags of distinct vertices $t_1,t_2\in V(T)$. 
Let $T_{s_1}$ be the set of nodes of $V(T)$ that can ``reach'' $s_1$, i.e., that contain a sub-bag $s'_1$ from which $s_1$ is reachable in $A$. 
Similarly, we define $T_{s_2}$. 
Then decomposability means that whenever $s_1$ and $s_2$ are ``incomparable'' with respect to $A$ (i.e., neither $(s_1,s_2)$ nor $(s_2,s_1)$ is an arc), 
then $T_{s_1}$ and $T_{s_2}$ must be disjoint. 

\begin{definition}[Realisations]
\label{def:real}
Let $A$ be a $\T$-structure for a pair $\T=(T, (B_t)_{t\in V(T)})$. 
 A \emph{realisation} of $A$ is a subgraph $A'$ of
$A$ induced by a subset $X \subseteq V(A)$ such that 
\begin{enumerate}
\item[(i)] $X$ contains at
most one sub-bag of each $t \in V(T)$, and 
\item[(ii)]  $A'$ has exactly one sink, which must
be a sub-bag of the root of $T$.
\end{enumerate}
\end{definition}

For any realisation $A'$ of a $\T$-structure $A$, we define
$T_{A'}$ as the rooted tree whose vertex set is 
\[V(T_{A'})=\{t\in V(T): \text{$\exists$ a sub-bag $(t,S)\in V(A')$}\},\]
and whose edges are defined as follows. 
Suppose $t_1,t_2\in V(T_{A'})$ due to sub-bags $(t_1,S_1), (t_2,S_2)\in V(A')$, respectively. 
Then $t_2$ is the parent of $t_1$, i.e., $(t_1,t_2)\in E(T_{A'})$, if $t_2$ is the least vertex with respect to $<_T$ of the set 
\[\{t\in V(T): \text{$\exists (t,S)\in V(A')$ and $((t_1,S_1),(t,S))\in E(A')$}\}.\]

\begin{example}
\label{exa:real}
For the $\T$-structure $A$ in Figure~\ref{fig:ex}, consider the subgraph $A_1$ of $A$ induced by the sub-bags 
$(t_4, \{x_1,x_0\})$, $(t_3,\{x_2,x_0,x_3\})$, $(t_2,\{x_0,x_3\})$, $(t_1,\{x_3\})$ and $(t_0,\emptyset)$. 
We have that $A_1$ is a realisation as the only sink is $(t_0,\emptyset)$. 
Note that if we remove from $A_1$ the sub-bag $(t_1,\{x_3\})$ then we obtain a subgraph that is not a realisation as now $(t_2,\{x_0,x_3\})$ becomes a sink. 
Observe also that $T_{A_1}$ is precisely $T$. Another possible realisation is the subgraph $A_2$ of $A$ induced by the sub-bags 
$(t_4, \{x_1,x_0\})$, $(t_3,\{x_2,x_0\})$, $(t_2,\{x_0\})$ and $(t_0,\emptyset)$. In this case, $T_{A_2}$ is the tree with vertices $\{t_0,t_2,t_3,t_4\}$ and edges $(t_2,t_0)$, $(t_3,t_2)$ and $(t_4,t_2)$.
\end{example}

For a $\T$-structure $A$ and a subhypergraph $H'$ of $H$, we denote by $A[H']$ 
the subgraph of $A$ induced by the set $\{(t,V(H'|_{B_t})): t\in V(T)\}$. 
We denote by $A[H']_\emptyset$ the directed graph obtained from $A[H']$ after removing every connected component $C$ in $A[H']$ that satisfies the following: 
for every sub-bag $(t,S)\in C$, we have that $t$ is not the root of $T$ and $S=\emptyset$. 
In other words, $A[H']_\emptyset$ contains precisely the connected components of $A[H']$ that contain a sub-bag of the root of $T$ or a sub-bag $(t,S)$ with $S\neq \emptyset$. 

\begin{example}
\label{exa:empty}
The subgraph $A_2$ of $A$ from Example~\ref{exa:real} is precisely $A[H']_\emptyset$, where $H'=\{e_1,e_2\}$. 
Note that $(t_1,\emptyset)$ needs to be removed from $A[H']$ in order to obtain $A[H']_\emptyset$. While $A[H']_\emptyset$ is a realisation, 
$A[H']$ is not, as $(t_1,\emptyset)$ is a sink. 
\end{example}

\begin{definition}[Point decomposition]
\label{def:point}
A \emph{point decomposition} of a hypergraph $H$ is a triple $(T, (B_t)_{t \in V(T)}, A)$ where $T$ is a rooted tree, each set $B_t\subseteq \points{H}$ is a set of points of $H$, $A$ is a decomposable $\T$-structure, 
where $\T=(T, (B_t)_{t \in V(T)})$, and
\begin{enumerate}
\item[(i)] For every edge $e \in H$, there exists $t \in V(T)$ such that $\points{\{e\}}=\{(v,e):v\in e\} \subseteq B_t$.
\item[(ii)] For every subhypergraph $H'$ of $H$, the subgraph $A[H']_\emptyset$ of $A$ is a realisation. 
\item[(iii)] For every realisation $A'$ of $A$ and $v \in \cup_{(t,S) \in V(A')} S$, the set 
\[
\{ t \in V(T_{A'}) : \text{$\exists (t,S)\in V(A')$ and $v\in S$}\}
\]
induces a connected subtree of $T_{A'}$.
\end{enumerate}
\end{definition}

A point decomposition is \emph{flat} if every arc in $A$ is between sub-bags of nodes adjacent in $T$. The \emph{width} of a point decomposition $(T, (B_t)_{t \in V(T)}, A)$ of a hypergraph $H$ is given by $\max_{t \in V(T)}\bcovernum{H|_{B_t}}$, the \emph{point-width} of $H$, denoted by $\pw{H}$, is the minimum width over all its point decompositions, and the \emph{flat point-width} of $H$, denoted by $\fpw{H}$, is the minimum width over all its flat point decompositions.

Throughout the paper we assume a straightforward encoding for point decompositions, where each bag is given as a list of points, the tree $T$ is given as a rooted graph whose vertex set is the set of all bags, and the $\T$-structure $A$ is given as a directed graph whose vertex set is the set of all sub-bags. We denote by $\lVert P \rVert$ the encoding size of a point decomposition $P$. We remark that checking whether a triple $(T, (B_t)_{t \in V(T)}, A)$ is a point decomposition may be a difficult task due to conditions (ii) and (iii). Whether it can be done in polynomial time is an interesting question, which we leave for future work.

\begin{example}
\label{exa:point}
Figure~\ref{fig:ex} shows a point decomposition of the hypergraph $H$ to the left. Note that $\bcovernum{H|_{B_{t_i}}}=1$, for $1\leq i\leq 4$, and then
the width of the decomposition is $1$. Hence $\pw{H}=1$. Note that the decomposition is not flat.  
\end{example}

As mentioned in the introduction, the intuition is that a $\T$-structure $A$ in a point decomposition of width $k$  
encodes various tree decompositions of hypertreewidth at most $k$ (cf.
Appendix~\ref{sec:widths} for a precise definition of tree decomposition and hypertreewidth), 
and in particular, one for each subhypergraph $H'$ of $H$. 
Such a tree decomposition for $H'$ is given by the tree $T_{A[H']_\emptyset}$ and the bags correspond to the sub-bags in $A[H']_\emptyset$. 

Finally, let us remark that once we know the $\T$-structure of a point decomposition, the particular form of the tree $T$ is irrelevant.  
Indeed, we can always assume that $T$ is a path: if it is not the case, we can extend $<_T$ to a total order $<_{\text{tot}}$ on $\vertices{T}$, which is precisely $<_{T'}$ for a certain path $T'$, and then replace $T$ by $T'$ in the point decomposition.
However, in the case of \emph{flat} point decompositions this is not true.
Hence, in general, we shall not impose any assumption on the tree $T$. 

\section{The algorithm}
\label{sec:alg}

In this section we describe a polynomial-time algorithm for solving \problem{Max-CSPs} when the input instance is paired with a point decomposition of bounded width of its hypergraph. We start with a number of simple definitions and observations before proving the main result in Theorem~\ref{thm:alg}.

\begin{definition}[Partial realisations]
Let $H$ be a hypergraph and $(T, (B_t)_{t \in \vertices{T}}, A)$ be a point decomposition of $H$. A \emph{partial realisation} of $A$ is a subgraph $A'$ of $A$ induced by a subset $X \subseteq \vertices{A}$ such that (i) $X$ contains at most one sub-bag of each $t \in \vertices{T}$, (ii) $A'$ has exactly one sink $s$ and (iii) there is a (possibly empty) directed path in $A$ from $s$ to a sub-bag of the root of $T$.
\end{definition}

The rooted tree $T_{A'}$ of a partial realisation $A'$ is defined the same way as for realisations: its vertex set is the set of all $t \in \vertices{T}$ with at least one sub-bag in $\vertices{A'}$, and the parent of $t_1 \in \vertices{T_{A'}}$ with $(t_1,S_1) \in \vertices{A'}$ is the least vertex with respect to $<_T$ in the set $\{t\in V(T): \text{$\exists (t,S)\in V(A')$ and $((t_1,S_1),(t,S))\in E(A')$}\}$. The next observation is a minor extension of condition (iii) of point decompositions to partial realisations.

\begin{observation}
\label{obs:connpartial}
Let $H$ be a hypergraph, $(T, (B_t)_{t \in \vertices{T}}, A)$ be a point decomposition of $H$, $A'$ be a partial realisation of $A$ and $v \in \cup_{(t,S) \in \vertices{A'}}S$. Then, the set \[
\{ t \in V(T_{A'}) : \text{$\exists (t,S)\in V(A')$ and $v\in S$}\}
\]
induces a connected subtree of $T_{A'}$.
\end{observation}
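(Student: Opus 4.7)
The plan is to extend the partial realisation $A'$ to a full realisation $A''$ of $A$ and then apply condition (iii) of Definition~\ref{def:point}. By the definition of partial realisation, there is a directed path $s = s_0, s_1, \ldots, s_k$ in $A$ whose endpoint $s_k$ is a sub-bag of the root of $T$. (If the path is empty then $s$ is already a sub-bag of the root, $A'$ is itself a realisation, and the statement is immediate from condition (iii).) Let $A''$ be the subgraph of $A$ induced by $V(A') \cup \{s_1, \ldots, s_k\}$, and let $t_s$ denote the node of $T$ such that $s$ is a sub-bag of $t_s$.

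The first step is to check that $A''$ is a realisation. The key observation is that every sub-bag in $V(A')$ reaches the unique sink $s$ via a directed path in $A'$ (a standard property of finite DAGs), and every arc of $A$ strictly increases the underlying node of $T$ with respect to $<_T$. Hence every sub-bag in $V(A')$ sits at a node $\leq_T t_s$, with $s$ the only one at $t_s$; in contrast, each $s_i$ for $1 \leq i \leq k$ sits at a node $t_{s_i} >_T t_s$. Therefore $V(A'')$ contains at most one sub-bag per node of $T$, giving condition (i) of realisations. For condition (ii), the outgoing arcs of every sub-bag in $V(A') \setminus \{s\}$ are unchanged in $A''$; $s$ now has the outgoing arc to $s_1$; each $s_i$ with $i<k$ has the outgoing arc to $s_{i+1}$; and $s_k$ has no outgoing arc in $A$ at all, since nothing is strictly greater than the root in $<_T$. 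Hence $s_k$ is the unique sink and it is a sub-bag of the root. The second step is to observe that $T_{A'}$ coincides with the subtree of $T_{A''}$ rooted at $t_s$: for $t_1 \in V(T_{A'}) \setminus \{t_s\}$, any additional outgoing arc in $A''$ from the sub-bag of $t_1$ targets some $s_i$ at a node $>_T t_s$, which is strictly larger than the existing $T_{A'}$-parent of $t_1$ (sitting at a node $\leq_T t_s$), so the $<_T$-minimum target, and hence the parent, is unchanged; the parent of $t_s$ in $T_{A''}$ is some $t_{s_j} \notin V(T_{A'})$; and each $t_{s_i}$ has no outgoing arc in $A$ to any sub-bag of $V(A')$, so its parent in $T_{A''}$ lies in $\{t_{s_{i+1}}, \ldots, t_{s_k}\}$, also outside $V(T_{A'})$.

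Finally, applying condition (iii) of Definition~\ref{def:point} to $A''$ and $v$, the set $V'_v := \{t \in V(T_{A''}) : \exists (t,S) \in V(A''),\, v \in S\}$ induces a connected subtree of $T_{A''}$. Since each added $s_i$ lies at a node outside $V(T_{A'})$, we have $V'_v \cap V(T_{A'}) = V_v$, where $V_v$ is the set from the statement. For any two $u_1, u_2 \in V_v$, the unique path between them in $T_{A''}$ is contained in $V'_v$ by connectivity, and since $t_s$ is a common ancestor of $u_1, u_2$ in $T_{A''}$, this path remains inside the subtree rooted at $t_s$, that is, inside $V(T_{A'})$. The path is therefore a path in $T_{A'}$ contained in $V_v$, proving that $V_v$ induces a connected subtree of $T_{A'}$. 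The main obstacle is the second step of the middle paragraph, namely the careful bookkeeping needed to confirm that the parent relations of $T_{A'}$ are exactly preserved when passing to $T_{A''}$; once this is established, connectivity transfers essentially for free.
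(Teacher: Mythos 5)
Your proof is correct and follows essentially the same strategy as the paper's: extend the partial realisation $A'$ to a full realisation by adjoining a directed path from the sink $s$ to a sub-bag of the root, observe that $T_{A'}$ is precisely the subtree of the extended tree rooted at $t_s$, and invoke condition (iii) of Definition~\ref{def:point}. The paper states these facts without the detailed verification you provide, so your write-up is a more careful version of the same argument.
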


\begin{proof}
Let $s$ be the unique sink of $A'$. If $s$ is a sub-bag of the root of $T$ then $A'$ is a realisation and the claim follows from condition (iii) of point decompositions. Otherwise, let $(s,s_1,\ldots,s_n)$ be a directed path in $A$ from $s$ to a sub-bag $s_n$ of the root of $T$. The subgraph $A^*$ of $A$ induced by $\vertices{A'} \cup \{s_1,\ldots,s_n\}$ is a realisation and $T_{A'}$ is precisely the subtree of $T_{A^*}$ rooted at $s$, so the observation follows.
\end{proof}

\begin{definition}[Guards]
\label{def:guard}
Let $H$ be a hypergraph, $(T, (B_t)_{t \in \vertices{T}}, A)$ be a point decomposition of $H$ and $(t,S)$ be a sub-bag of $t \in \vertices{T}$. A \emph{guard} of $(t,S)$ is an inclusion-minimal subhypergraph $H'$ of $H$ such that $\vertices{H'|_{B_t}} = S$.
\end{definition}

Given a \problem{Max-CSP} instance $I$ with
hypergraph $H$ and $e \in H$, we will use $f_e(\tuple{x_e})$ to denote the
unique constraint with scope $e$. (As usual $X$ and $D$ denote the variables and the domain of $I$, respectively.) 
Given a constraint $f_e(\tuple{x_e})$ with $e
\in H$, its \emph{support} is the relation $R_e \defeq \{  \tuple{d} \in D^{|e|}
: f_e(\tuple{d}) > 0\}$. Without ambiguity we will sometimes treat $R_e$ as a set
of assignments to $e$. As usual, for an assignment $\psi$ with domain $Y$ and a subset $Y'\subseteq Y$, 
we denote by $\psi|_{Y'}$ the restriction of $\psi$ to $Y'$. Similarly, for a set $R$ of assignments over $Y$, 
we denote by $R|_{Y'}$ the set $\{\psi|_{Y'}: \psi\in R\}$.  
If $\psi:X' \to D$ is an assignment to $X' \subseteq
X$, we define $\tvalue{\psi} = \sum_{e \in H : e \subseteq X'}
f_e(\psi(\tuple{x_e}))$ and call $\psi$ a \emph{partial assignment to $X$}. In
particular, for any partial assignment $\psi$ to $X$, we have that $\tvalue{\psi} \leq
\opt{I}$. 

Given a partial assignment $\psi:X' \to D$, we say that
$\psi$ \emph{satisfies} an edge $e \in H$ if $\psi|_{X' \cap e} \in R_e|_{X' \cap e}$, and satisfies a subhypergraph if it satisfies all of its edges. Note that $\psi$ can satisfy edges that are not completely contained in $X'$. 
For $1\leq i\leq n$, with $n\geq 2$, let $R_i$ be a set of partial assignments from $X_i\subseteq X$ to $D$. 
The \emph{join} of $R_1,\dots,R_n$ is the set of all partial assignments $\psi:\bigcup_{i=1}^n X_i\to D$ 
such that $\psi|_{X_i}\in R_i$, for every $1\leq i\leq n$. 
Observe that a partial assignment $\psi:X' \to D$ satisfies a subhypergraph $H'\subseteq H$ if and only if 
$\psi$ restricted to $\bigcup_{e\in H'}(X'\cap e)$ belongs to the join of $\{R_e|_{X'\cap e}\}_{e\in H'}$.

\begin{definition}[Consistent assignments]
\label{def:sbcons}
Let $H$ be the hypergraph of a \problem{Max-CSP} instance and $(T, (B_t)_{t \in \vertices{T}}, A)$ be a point decomposition of $H$. If $s=(t,S)$ is a sub-bag of $t \in \vertices{T}$, an \emph{$s$-valid assignment} is an assignment $\psi : S \to D$ such that $\psi$ satisfies some guard $C$ of $s$. A \emph{consistent assignment to a partial realisation $A'$} of $A$ is a function $\phi$ that maps every sub-bag $s=(t,S) \in \vertices{A'}$ to an $s$-valid assignment such that for any two sub-bags $(t_1,S_1)$, $(t_2,S_2)$ with $t_1,t_2$ adjacent in $T_{A'}$, $\phi((t_1,S_1))|_{S_1 \cap S_2} = \phi((t_2,S_2))|_{S_1 \cap S_2}$.
\end{definition}

The following is a direct consequence from Observation~\ref{obs:connpartial}.

\begin{observation}
\label{obs:localtoglobal}
Let $H$ be the hypergraph of a \problem{Max-CSP} instance, $(T, (B_t)_{t \in \vertices{T}}, A)$ be a point decomposition of $H$, $\phi$ be a consistent assignment to some partial realisation $A'$ of $A$ and $X' \defeq \cup_{(t,S) \in \vertices{A'}} S$. Then, there exists an assignment $\psi : X' \to D$ such that for every $s=(t,S) \in \vertices{A'}$, $\phi(s) = \psi|_S$.
\end{observation}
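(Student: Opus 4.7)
The plan is to define $\psi$ pointwise by choosing, for each variable $v \in X'$, the common value assigned to $v$ by all the sub-bags of $V(A')$ that contain $v$, and then to argue that this value is well-defined by combining the tree-connectivity guaranteed by Observation~\ref{obs:connpartial} with the consistency condition on adjacent sub-bags in $T_{A'}$.

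More concretely, fix $v \in X'$. By condition (i) of partial realisations, for each $t \in V(T_{A'})$ there is exactly one sub-bag $(t,S_t) \in V(A')$, so the set
\[
T_v \defeq \{t \in V(T_{A'}) : v \in S_t\}
\]
is nonempty. By Observation~\ref{obs:connpartial}, $T_v$ induces a connected subtree of $T_{A'}$. Given any $t, t' \in T_v$, I would take a path $t = t_0, t_1, \ldots, t_m = t'$ in $T_{A'}$ with every $t_i \in T_v$, so that $v \in S_{t_i} \cap S_{t_{i+1}}$ for each $i$. Since $t_i$ and $t_{i+1}$ are adjacent in $T_{A'}$, the consistency condition of Definition~\ref{def:sbcons} gives
\[
\phi((t_i,S_{t_i}))(v) \;=\; \phi((t_{i+1},S_{t_{i+1}}))(v),
\]
and chaining these equalities shows that $\phi((t,S_t))(v) = \phi((t',S_{t'}))(v)$. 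Hence all sub-bags in $V(A')$ whose second coordinate contains $v$ assign $v$ the same value, which I call $\psi(v)$. Setting $\psi : X' \to D$ in this way, the equality $\phi(s) = \psi|_S$ for every $s=(t,S) \in V(A')$ then holds by construction.

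The only substantive step is the well-definedness of $\psi(v)$, and the main obstacle there is precisely what Observation~\ref{obs:connpartial} is designed to remove: without a connected subtree structure on $T_v$, consistency on adjacent sub-bags would be too local to force a global common value. Once connectivity is in hand, the argument is a straightforward path-chasing.
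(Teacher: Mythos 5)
Your argument is correct and matches what the paper implicitly intends: the paper states Observation~\ref{obs:localtoglobal} with no proof, declaring it a ``direct consequence'' of Observation~\ref{obs:connpartial}, and the expected argument is exactly the glue-along-a-connected-subtree path-chase you spelled out — fix $v$, note $T_v$ is nonempty and connected by Observation~\ref{obs:connpartial}, and chain the adjacency-consistency equalities along a path inside $T_v$. Your explicit verification that $T_v$ is nonempty (because $v \in X'$), that each $t \in V(T_{A'})$ carries a unique sub-bag, and that $\phi(s) = \psi|_S$ follows by construction is all sound; nothing is missing.
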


\begin{definition}
Let $H$ be the hypergraph of a \problem{Max-CSP} instance, $(T, (B_t)_{t \in \vertices{T}}, A)$ be a point decomposition of $H$, $\phi$ be a consistent assignment to a partial realisation $A'$ of $A$ and $\psi$ be as in Observation~\ref{obs:localtoglobal}. The \emph{value} of $(\phi,A')$ is the quantity
\[
\tvalue{\phi,A'} \defeq \sum_{e \in H : \exists (t,S) \in \vertices{A'}, \, e \subseteq S} f_e(\psi(\tuple{x_e})).
\]
\end{definition}
The general idea behind the algorithm is to traverse the tree $T$ of the point decomposition bottom-up, keeping track for each sub-bag $s$ and $s$-valid assignment $\psi$ of the best value achievable by a partial realisation $A'$ with sink $s$ and consistent assignment to $A'$ that agrees with $\psi$ on $s$. The fact that $A$ is decomposable ensures that joining multiple partial realisations to a common sink always produces a partial realisation, as long as their initial sinks form an independent set in a certain (easily computable) chordal graph. This property enables a dynamic programming approach. It will follow from conditions (i), (ii) and (iii) in the definition of point decompositions that the maximum of the values computed by this algorithm at the root of $T$ is, in fact, the optimum of the \problem{Max-CSP} instance.

\begin{proposition}
\label{prop:realopt}
Let $I$ be a \problem{Max-CSP} instance with hypergraph $H$ and $(T, (B_t)_{t \in \vertices{T}}, A)$ be a point decomposition of $H$. The maximum of $\tvalue{\phi,A'}$ over all realisations $A'$ of $A$ and consistent assignments $\phi$ to $A'$ is exactly $\opt{I}$.
\end{proposition}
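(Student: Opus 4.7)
The plan is to establish the two inequalities separately. The direction $\max_{\phi,A'} \tvalue{\phi,A'} \leq \opt{I}$ is essentially immediate using Observation~\ref{obs:localtoglobal}: given any realisation $A'$ and any consistent assignment $\phi$ to $A'$, the observation produces a single assignment $\psi : X' \to D$, with $X' = \cup_{(t,S) \in \vertices{A'}} S$, such that $\phi((t,S)) = \psi|_S$ for every sub-bag in $\vertices{A'}$. Extending $\psi$ arbitrarily to some $\hat\psi : X \to D$ and using nonnegativity of all $f_e$, the sum defining $\tvalue{\phi,A'}$ is a sub-sum of $\tvalue{\hat\psi}$, hence $\tvalue{\phi,A'} \leq \tvalue{\hat\psi} \leq \opt{I}$.

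For the reverse inequality, I would exhibit a realisation and a consistent assignment that together achieve $\opt{I}$. Fix an optimal solution $\psi^\star : X \to D$ and let $H' \defeq \{e \in H : \psi^\star|_e \in R_e\}$; then $\opt{I} = \tvalue{\psi^\star} = \sum_{e \in H'} f_e(\psi^\star(\tuple{x_e}))$. Set $A' \defeq A[H']_\emptyset$, which is a realisation by condition~(ii) of Definition~\ref{def:point}, and for each sub-bag $s = (t,S) \in \vertices{A'}$ define $\phi(s) \defeq \psi^\star|_S$. Consistency of $\phi$ along $T_{A'}$ is then automatic since all values are restrictions of a single global assignment.

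It remains to check $s$-validity of each $\phi(s)$ and that $\tvalue{\phi,A'} \geq \opt{I}$. For the former, I would pick any inclusion-minimal $C \subseteq H'$ with $V(C|_{B_t}) = S$ (such a $C$ exists because $S = V(H'|_{B_t})$ by the construction of $A[H']$); minimality inside $H'$ transfers to minimality inside $H$ since any proper subhypergraph of $C$ is automatically a subhypergraph of $H'$, so $C$ is a guard of $s$, and each $e \in C$ satisfies $\psi^\star|_e \in R_e$, giving $\phi(s)|_{S \cap e} \in R_e|_{S \cap e}$. For the value inequality, I would use condition~(i) of Definition~\ref{def:point}: for every $e \in H'$ some $t \in \vertices{T}$ has $\points{\{e\}} \subseteq B_t$, so $e \subseteq V(H'|_{B_t})$, and this sub-bag is nonempty and therefore retained when passing from $A[H']$ to $A[H']_\emptyset$. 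Hence every $e \in H'$ is contained in the second coordinate of some vertex of $A'$ and contributes $f_e(\psi^\star(\tuple{x_e}))$ to $\tvalue{\phi,A'}$. The main obstacle, and the only place where real care is needed, is this last step: one must connect condition~(i), the definition of $A[H']_\emptyset$ (which can prune whole components having empty second coordinate), and the global behaviour of $\psi^\star$, to conclude that no edge of $H'$ is lost from the support of $A'$.
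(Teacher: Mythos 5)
Your proposal is correct and follows essentially the same two-directional argument as the paper: one direction applies Observation~\ref{obs:localtoglobal} to bound $\tvalue{\phi,A'}$ by the optimum, and the other builds a realisation $A[H']_\emptyset$ from the set $H'$ of edges satisfied by an optimal assignment, using conditions~(i) and~(ii) of Definition~\ref{def:point}. The ``main obstacle'' you flag at the end is in fact already resolved by your preceding sentence: condition~(i) gives, for each $e \in H'$, a node $t$ with $\points{\{e\}} \subseteq B_t$, hence a sub-bag $(t, V(H'|_{B_t}))$ with nonempty second coordinate (containing $e$), and the pruning in passing to $A[H']_\emptyset$ only removes connected components in which \emph{every} non-root sub-bag has empty second coordinate, so this sub-bag survives.
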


\begin{proof}
Let $M$ be the maximum of $\tvalue{\phi,A'}$ over all realisations $A'$ of $A$ and consistent assignments $\phi$ to $A'$. 

We first prove $M \geq \opt{I}$. Let $\psiopt$ be an assignment to the variables of $I$ such that $\tvalue{\psiopt} = \opt{I}$, and let $H' \subseteq H$ be the set of edges satisfied by $\psiopt$. Consider the subgraph $A[H']_{\emptyset}$ of $A$, which by condition (ii) of point decompositions is a realisation. We define $\phi^*$ as the function that maps each $(t,S) \in \vertices{A[H']_{\emptyset}}$ to $\psiopt|_S$. Since $\psiopt$ satisfies $H'$, it satisfies at least one guard for each sub-bag $(t,S) \in \vertices{A[H']_{\emptyset}}$. Therefore, $\phi^*$ is a consistent assignment to $A[H']_{\emptyset}$. By condition (i) of point decompositions, for every edge $e \in H'$ there exists $(t,S) \in \vertices{A[H']_{\emptyset}}$ such that $e \subseteq S$, and hence $M \geq \tvalue{\phi^*,A[H']_{\emptyset}} = \opt{I}$.

We now prove $\opt{I} \geq M$. Let $A'$ be a realisation of $A$ and $\phi$ be a consistent assignment to $A'$ such that $\tvalue{\phi,A'} = M$. 
By Observation~\ref{obs:localtoglobal}, there exists an assignment $\psi$ to $X' \defeq \cup_{(t,S) \in \vertices{A'}} S$ such that 
\[
    \tvalue{\psi}  = \sum_{e \in H : e \subseteq X'} f_e(\psi(\tuple{x_e}))\ \geq 
    \sum_{e \in H : \exists (t,S) \in \vertices{A'}, \, e \subseteq S} f_e(\psi(\tuple{x_e})) = \tvalue{\phi,A'} = M
\]
and hence $\opt{I} \geq M$.
\end{proof}

If $A'$ is a partial realisation and $s \in \vertices{A'}$, we use $A'[s]$ to denote the partial realisation induced by the sub-bags $s'$ of $A'$ such that there is a (possibly empty) directed path in $A'$ from $s'$ to $s$.

\begin{observation}
\label{obs:branches}
Let $H$ be the hypergraph of a \problem{Max-CSP} instance, $(T, (B_t)_{t \in \vertices{T}}, A)$ be a point decomposition of $H$, $\phi$ be a consistent assignment to a partial realisation $A'$ of $A$ with sink $s=(t,S)$ and $\psi$ be as in Observation~\ref{obs:localtoglobal}. Let $W$ be the set of all sub-bags $s'=(t',S')$ in $\vertices{A'}$ such that $t'$ is a child of $t$ in $T_{A'}$. Then,
\[
        \tvalue{\phi,A'} = \sum_{e \in H: e \subseteq S}f_e(\psi(\tuple{x_e}))
      +
   \sum_{\substack{s' \in W\\ s'=(t',S')}} \left(\tvalue{\phi|_{\vertices{A'[s']}},A'[s']} - \sum_{e \in H: e \subseteq S \cap S'}f_e(\psi(\tuple{x_e})) \right).
\]
\end{observation}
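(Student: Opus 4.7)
The plan is to verify the identity by checking that each edge $e \in H$ contained in at least one sub-bag of $A'$ is counted with total coefficient exactly $f_e(\psi(\tuple{x_e}))$ on the right-hand side. Let $E_{A'} \defeq \{e \in H : \exists (t'',S'') \in \vertices{A'}, \ e \subseteq S''\}$, so the left-hand side equals $\sum_{e \in E_{A'}} f_e(\psi(\tuple{x_e}))$, and I can work edge by edge.

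The key structural tool is Observation~\ref{obs:connpartial}: for every vertex $v$ appearing in some sub-bag of $A'$, the set of nodes of $T_{A'}$ whose (unique) sub-bag in $V(A')$ contains $v$ is a connected subtree. Taking the intersection of these subtrees over all $v \in e$ for a fixed $e \in E_{A'}$, the Helly property of subtrees of a tree yields that $T_e \defeq \{t'' \in V(T_{A'}) : e \subseteq S_{t''}\}$ (where $S_{t''}$ is the unique sub-bag of $t''$ in $V(A')$) is itself a nonempty connected subtree of $T_{A'}$. Since $s=(t,S)$ is the sink of $A'$, the root of $T_{A'}$ is $t$, and the children of $t$ in $T_{A'}$ are exactly the first components of the sub-bags in $W$. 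Hence $T_{A'} \setminus \{t\}$ decomposes as the disjoint union of the subtrees rooted at the children of $t$, and these subtrees are precisely $V(T_{A'[s']})$ for $s' \in W$.

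The case analysis is now straightforward. If $e \subseteq S$, then $t \in T_e$, so $e$ contributes $f_e(\psi(\tuple{x_e}))$ to the first sum on the right-hand side. Moreover, for each child $s'=(t',S') \in W$, connectedness of $T_e$ together with $t \in T_e$ implies that $T_e$ meets the subtree $V(T_{A'[s']})$ if and only if $t' \in T_e$, i.e.\ if and only if $e \subseteq S'$ (equivalently $e \subseteq S \cap S'$); in that case $e$ appears both in $\tvalue{\phi|_{\vertices{A'[s']}}, A'[s']}$ and in the subtracted term $\sum_{e \subseteq S \cap S'} f_e(\psi(\tuple{x_e}))$, so its net contribution from the second sum is zero. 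If instead $e \not\subseteq S$, then $t \notin T_e$, and connectedness forces $T_e$ to lie entirely inside a single subtree $V(T_{A'[s']})$ for a unique $s' \in W$; the edge then contributes $f_e(\psi(\tuple{x_e}))$ to $\tvalue{\phi|_{\vertices{A'[s']}}, A'[s']}$, and, since $e \not\subseteq S$ implies $e \not\subseteq S \cap S''$ for every $s'' \in W$, nothing is subtracted.

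The main (though mild) obstacle is justifying that $T_e$ is well defined and connected: this requires noting that each $t'' \in V(T_{A'})$ has a unique sub-bag in $V(A')$ (condition (i) of partial realisations) and invoking the Helly property on the subtrees produced by Observation~\ref{obs:connpartial}. Once this is established, summing the per-edge contributions over all $e \in E_{A'}$ gives exactly the stated identity.
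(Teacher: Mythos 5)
Your edge-by-edge accounting follows the same route as the paper: split $\tvalue{\phi,A'}$ according to whether an edge $e$ is covered in $S$, in exactly one branch $A'[s']$, or in both, and control which case occurs via Observation~\ref{obs:connpartial}. The case analysis itself is correct. The problem is the unsupported claim that the subtrees of $T_{A'}$ rooted at the children of $t$ ``are precisely $V(T_{A'[s']})$ for $s' \in W$''. The ``Hence'' preceding it is misleading: this identification is not a consequence of the sentence before it, and it is where all the real work in this observation lives.

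Concretely, the inclusion $\{\text{descendants of } t' \text{ in } T_{A'}\} \subseteq V(T_{A'[s']})$ is easy (each parent edge of $T_{A'}$ comes from an arc of $A'$, so the parent chain gives a directed path to $s'$). The reverse inclusion, however, can fail for an arbitrary $\T$-structure: a sub-bag $(t'',S'')$ may have directed paths in $A'$ to two distinct $s'_1,s'_2 \in W$, which would put $t''$ into both $V(T_{A'[s'_1]})$ and $V(T_{A'[s'_2]})$ even though it lies in only one subtree rooted at a child of $t$. What rules this out is \emph{decomposability} of $A$: by construction of $T_{A'}$ no arc of $A$ joins two members of $W$, so a sub-bag with directed paths to both $s'_1$ and $s'_2$ (which both have arcs into $s$) would violate Definition~\ref{def:decomp}. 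This is exactly the argument the paper uses to show the sets $\vertices{A'[s']}$, $s' \in W$, are pairwise disjoint, from which your identification of subtrees with $V(T_{A'[s']})$ follows. Your plan never invokes decomposability, and without it the key step does not hold, so this is a genuine gap rather than a routine omission. A minor secondary point: you do not need the Helly property to conclude that $T_e$ is a nonempty connected subtree of $T_{A'}$; the intersection of connected subtrees of a tree is connected whenever it is nonempty, and nonemptiness is immediate from $e \in E_{A'}$.
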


\begin{proof}
By definition of $T_{A'}$ there is no arc $(s_1,s_2)$ in $A$ with $s_1,s_2 \in W$. Since $A$ is decomposable, it follows that the sets $\vertices{A'[s']}$, $s' \in W$, are pairwise disjoint. Furthermore, by Observation~\ref{obs:connpartial}, if there exist an edge $e \in H$ and two
  sub-bags $s_1,s_2 \in W$ with $e \subseteq \left ( \cup_{(t^*,S^*) \in
  \vertices{A'[s_1]}} S^* \right) \cap \left( \cup_{(t^*,S^*) \in
  \vertices{A'[s_2]}} S^* \right)$ then $e \subseteq S$. Similarly, if there exist $e \in H$ and $s_1=(t_1,S_1) \in W$ such that $e \subseteq \left( \cup_{(t^*,S^*) \in \vertices{A'[s_1]}} S^* \right) \cap S$, then $e \subseteq S_1$. Putting everything together we have

\begin{align*}
\tvalue{\phi,A'} &= \sum_{e \in H : \exists (t^*,S^*) \in \vertices{A'}, \, e \subseteq S^*} f_e(\psi(\tuple{x_e}))\\
&= \sum_{e \in H : e \subseteq S} f_e(\psi(\tuple{x_e})) + 
\sum_{s' \in W} \left( \sum_{e \in H, e \not\subseteq S : \exists (t^*,S^*) \in \vertices{A'[s']}, \, e \subseteq S^*}  f_e(\psi(\tuple{x_e})) \right)\\
&= \sum_{e \in H: e \subseteq S}f_e(\psi(\tuple{x_e})) + 
\sum_{\substack{s' \in W\\ s'=(t',S')}} \left(\tvalue{\phi|_{\vertices{A'[s']}},A'[s']} - \sum_{e \in H: e \subseteq S \cap S'}f_e(\psi(\tuple{x_e})) \right)
\end{align*}
as claimed.
\end{proof}

Recall that an independent set in a graph is a subset of vertices that induces a subgraph with no edges. We will denote by $\is{G}$ the set of all independent sets in a graph $G$. 

\begin{theorem}
\label{thm:alg}
Let $k$ be a fixed positive integer. There exists an algorithm which, given as input a \problem{Max-CSP} instance $I$ with hypergraph $H$ and a point decomposition $P=(T, (B_t)_{t \in \vertices{T}}, A)$ of $H$ of width at most $k$, computes $\opt{I}$ in time polynomial in $\Vert P \Vert$ and $\Vert I \Vert$.
\end{theorem}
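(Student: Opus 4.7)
The plan is to perform bottom-up dynamic programming on the sub-bags of the point decomposition. By Proposition~\ref{prop:realopt}, $\opt{I}$ equals the maximum of $\tvalue{\phi,A'}$ over all realisations $A'$ of $A$ and consistent assignments $\phi$ to $A'$. For each sub-bag $s=(t,S)$ and each $s$-valid assignment $\psi:S\to D$, I would maintain a table entry
\[
    M[s,\psi] \defeq \max_{A',\phi} \tvalue{\phi,A'},
\]
where the maximum is over partial realisations $A'$ with sink $s$ and consistent assignments $\phi$ to $A'$ with $\phi(s)=\psi$. The output of the algorithm is $\max_{(s,\psi)} M[s,\psi]$ restricted to sub-bags $s$ of the root of $T$, which by Proposition~\ref{prop:realopt} equals $\opt{I}$.

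To control the size of the DP table, I would first argue that any inclusion-minimal guard $C$ of a sub-bag $s=(t,S)$ satisfies $|C|\leq k$: every edge of $C$ has non-empty restriction to $B_t$ and these restrictions are pairwise distinct (otherwise $C$ would not be inclusion-minimal), so $|C|=|C|_{B_t}|$; moreover $C|_{B_t}$ is a subhypergraph of $H|_{B_t}$ covering $S$, hence has cover number at most $\bcovernum{H|_{B_t}}\leq k$, and inclusion-minimality of $C$ forces this to equal $|C|_{B_t}|$. Consequently $s$ has polynomially many guards, and for each guard $C$ the number of assignments $\psi:S\to D$ satisfying $C$ is at most $\prod_{e\in C}|R_e|$, which is polynomial in $\|I\|$ for fixed $k$. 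Hence the DP table has polynomially many entries of polynomially bounded bit length.

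For the recurrence, Observation~\ref{obs:branches} gives, for any partial realisation $A'$ with sink $s=(t,S)$, consistent $\phi$ with $\phi(s)=\psi$, and $W$ the set of children of $t$ in $T_{A'}$,
\[
    M[s,\psi] = \sum_{e\in H:\, e\subseteq S} f_e(\psi(\tuple{x_e})) + \max_{W}\sum_{(s',\psi')\in W} w_{s,\psi}(s',\psi'),
\]
with weight $w_{s,\psi}(s',\psi') \defeq M[s',\psi'] - \sum_{e\in H:\, e\subseteq S\cap S'} f_e(\psi(\tuple{x_e}))$; here a \emph{branch} is a pair $(s',\psi')$ with $s'=(t',S')$ an in-neighbour of $s$ in $A$ and $\psi'$ an $s'$-valid assignment agreeing with $\psi$ on $S\cap S'$. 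I would then build a graph $G_{s,\psi}$ whose vertices are all branches (weighted by $w_{s,\psi}$) and whose edges connect pairs whose sub-realisations cannot coexist. Using Definition~\ref{def:decomp}, two in-neighbours $s'_1,s'_2$ can simultaneously appear as children of $t$ in $T_{A'}$ if and only if the sets $R(s'_1),R(s'_2)\subseteq\vertices{T}$ of coordinates backward-reachable in $A$ from $s'_1$ and $s'_2$ are disjoint: any common such coordinate would, by decomposability, force an arc between $s'_1$ and $s'_2$ in $A$, and any such arc prevents one of them from being a direct child of $s$ in $T_{A'}$.

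The hard part will be establishing that $G_{s,\psi}$ is chordal. I expect this to follow by realising $G_{s,\psi}$ as an intersection graph of subtrees of $T$ (essentially the closures of the $R(\cdot)$ sets under taking intermediate ancestors) and invoking Gavril's classical characterisation, with decomposability supplying the necessary Helly-type structure. Given chordality, a maximum weight independent set in $G_{s,\psi}$ can be computed in linear time~\cite{frank1975some}, so each DP transition runs in polynomial time. Correctness is a bottom-up induction on $T$: the recurrence combined with Observation~\ref{obs:localtoglobal} glues locally consistent assignments on the selected partial realisation into a single global assignment, and the final equality with $\opt{I}$ at the root follows from Proposition~\ref{prop:realopt}. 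The total running time is polynomial in $\|P\|$ and $\|I\|$ since there are polynomially many DP entries, polynomially many branches per entry, and each transition involves a polynomial-time weighted independent set computation.
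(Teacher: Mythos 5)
Your overall dynamic-programming scheme matches the paper's closely: the table entry $M[s,\psi]$ is exactly the paper's $\tvaluealg{s,\psi}$, the bound on guard size via $\bcovernum{H|_{B_t}} \leq k$ is the same argument, the recurrence via Observation~\ref{obs:branches} is the one the paper uses, and the reduction of each transition to a maximum-weight independent set in an auxiliary conflict graph is the key idea. Your conflict graph (vertices = branches $(s',\psi')$) is a harmless refinement of the paper's (vertices = sub-bags $s'$, weighted by a max over compatible $\psi'$); blowing each $s'$ into a clique of branches preserves the answer and chordality.

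The genuine gap is the chordality claim, which you explicitly defer (``I expect this to follow by realising $G_{s,\psi}$ as an intersection graph of subtrees of $T$''). That route does not work as sketched. You correctly observe that, for two in-neighbours $s'_1,s'_2$ of $s$, decomposability makes adjacency in $G$ equivalent to $R(s'_1)\cap R(s'_2)\neq\emptyset$, where $R(s')$ is the set of $T$-vertices with a sub-bag that reaches $s'$ in $A$. But the sets $R(s')$ need not be subtrees, and replacing them by their closure $\overline{R}(s')$ (the smallest subtree of $T$ containing $R(s')$ below $t'$) can create spurious intersections. Concretely, take $T$ a path $t_0<_T t_1<_T t_2<_T t_3<_T t_4$, sub-bags $a,b,c,d$ at $t_0,t_1,t_2,t_3$, a sub-bag $s$ at $t_4$, and arcs $(a,d),(b,c),(c,s),(d,s)$. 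This $A$ is decomposable, and $R(c)=\{t_1,t_2\}$, $R(d)=\{t_0,t_3\}$ are disjoint, so $c\not\sim d$ in $G$; yet $\overline{R}(c)\cap\overline{R}(d)\supseteq\{t_1,t_2\}\neq\emptyset$. So the subtree-intersection graph strictly contains $G$, and a Gavril/Helly argument built on these subtrees would not prove $G$ chordal (nor would computing independent sets on a proper supergraph give the correct value). The paper instead proves chordality directly (Claim~\ref{claim:chordal}): in a putative chordless cycle, pick the sub-bag $s_1$ with $t_1$ minimal in $<_T$; since arcs only go ``upwards'' in $<_T$, minimality forces short directed paths from sub-bags of $t_1$ to both cycle-neighbours of $s_1$, and decomposability then forces a chord, a contradiction. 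You would need to supply an argument of this kind to close the gap; the intersection-graph framing as you describe it will not.

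Two smaller remarks. First, for completeness the induction should be carried out only over pairs $(s,\psi)$ for which $s$ has a (possibly empty) directed path to a sub-bag of the root, which is what makes partial realisations rooted at $s$ well-defined; your ``branch'' notion implicitly handles this but it should be stated. Second, your bound on the number of table entries is right but you should note that the cover-number bound must be applied to the restriction $C|_{B_t}$, viewed as a subhypergraph of $H|_{B_t}$, as you in fact do.
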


\begin{proof}
We first describe the algorithm. To each bag $t \in \vertices{T}$, sub-bag $s=(t,S)$ and $s$-valid assignment $\psi$ we will associate a nonnegative rational value $\tvaluealg{s,\psi}$. We will compute these values bottom-up, starting from the leaves of $T$.

Let $t$ be a vertex of $T$, $s=(t,S)$ be a sub-bag of $t$ and $\psi$ be an $s$-valid assignment. Suppose that the values $\tvaluealg{s',\psi'}$ have already been computed for all pairs $(s'=(t',S'),\psi')$ with $t' <_T t$. If $t$ is a leaf then we set $\tvaluealg{s,\psi} \defeq \sum_{e \in H : e \subseteq S} f_e(\psi(\tuple{x_e}))$. If $t$ is not a leaf then we define a vertex-weighted graph $G$ where
\begin{itemize}
\item $\vertices{G}$ is the set of all sub-bags $s' = (t',S')$ with $t' <_T t$ such that (i) there exists at least one $s'$-valid assignment $\psi'$ such that $\psi'|_{S \cap S'} = \psi|_{S \cap S'}$ and (ii) $(s',s)$ is an arc in $A$;
\item $\edges{G}$ is the set of all pairs $\{(t_1,S_1),(t_2,S_2)\} \in \vertices{G}^2$ such that either $t_1 = t_2$ or $((t_1,S_1),(t_2,S_2))$ is an arc in $A$;
\item For every $s' = (t',S') \in \vertices{G}$, the weight $w(s')$ of $s'$ is the maximum of 
\[
  \tvaluealg{s',\psi'} - \sum_{e \in H: e \subseteq S \cap S'}f_e(\psi(\tuple{x_e}))
\]
over all $s'$-valid assignments $\psi'$ such that $\psi'|_{S \cap S'} = \psi|_{S \cap S'}$. Note that this quantity is well-defined because at least one suitable assignment $\psi'$ exists, by definition of $V(G)$.
\end{itemize}
We then set $\tvaluealg{s,\psi} \defeq \sum_{e \in H: e \subseteq S}f_e(\psi(\tuple{x_e})) + \max_{U \in \is{G}} \left( \sum_{s' \in U}w(s') \right)$. Once $\tvaluealg{s,\psi}$ is computed for all pairs $(s,\psi)$ where $s$ is a sub-bag of the root of $T$, the algorithm outputs the maximum of $\tvaluealg{s,\psi}$ over all such pairs.

\begin{claim}
\label{claim:algo1}
For every $t \in \vertices{T}$, sub-bag $s=(t,S)$ with a (possibly empty) directed path in $A$ from $s$ to a sub-bag of the root of $T$ and $s$-valid assignment $\psi$, $\tvaluealg{s,\psi}$ is the maximum of $\tvalue{\phi,A'}$ over all partial realisations $A'$ of $A$ whose sink is $s$ and consistent assignments $\phi$ to $A'$ such that $\phi(s) = \psi$.
\end{claim}

\begin{proof}
We proceed by induction, proving the claim for all pairs $(s,\psi)$ in the same order the algorithm computes $\tvaluealg{s,\psi}$. Let $s=(t,S)$ be a sub-bag with a directed path in $A$ to a sub-bag of the root of $T$ and $\psi$ be an $s$-valid assignment. Suppose that the claim holds for all pairs $(s',\psi')$ for which $\tvaluealg{s',\psi'}$ is computed by the algorithm before  $\tvaluealg{s,\psi}$ (and in particular for all pairs $(s',\psi')$ where $s'$ is a sub-bag of $t'$ with $t' <_T t$). If $t$ is a leaf then the claim trivially holds, so suppose that $t$ is not a leaf. 
We start by showing that $\tvaluealg{s,\psi}$ is at least the maximum over all $\tvalue{\phi,A'}$. 
Let $A'$ be any partial realisation of $A$ with sink $s$ and $\phi$ be a consistent assignment to $A'$ with $\phi(s) = \psi$. Let $W$ be the set of all sub-bags $s'=(t',S')$ in $\vertices{A'}$ such that $t'$ is a child of $t$ in $T_{A'}$. Note that we have $t' <_T t$ for all $(t',S')$ in $W$; it follows from the definition of the tree $T_{A'}$ that there does not exist an arc $((t',S'),(t'',S''))$ in $A$ with $(t',S'),(t'',S'') \in W$ (as otherwise one of $t',t''$ would not have $t$ as parent in $T_{A'}$). Therefore, $W$ is a subset of $\vertices{G}$ and forms an independent set. By Observation~\ref{obs:branches} and the induction hypothesis we have

\begin{align*}
\tvalue{\phi,A'} &= \sum_{e \in H: e \subseteq S}f_e(\psi(\tuple{x_e})) + 
\sum_{\substack{s' \in W\\s'=(t',S')}} \left(\tvalue{\phi|_{\vertices{A'[s']}},A'[s']} - \sum_{\substack{e \in H:\\ e \subseteq S \cap S'}}f_e(\psi(\tuple{x_e})) \right)\\
&\leq \sum_{e \in H: e \subseteq S}f_e(\psi(\tuple{x_e})) + 
\sum_{\substack{s' \in W\\s'=(t',S')}} \left(\tvaluealg{s',\phi(s')} - \sum_{\substack{e \in H:\\ e \subseteq S \cap S'}}f_e(\psi(\tuple{x_e})) \right).
\end{align*}
Then, from the definition of the vertex weights in $G$ we deduce
\begin{align*}
\tvalue{\phi,A'} &\leq \sum_{e \in H: e \subseteq S}f_e(\psi(\tuple{x_e})) + \sum_{s'=(t',S') \in W} w(s')
\end{align*}
and since $\tvaluealg{s,\psi}$ is the maximum of the right-hand side expression taken over all independent sets $W'$ of $G$, we finally obtain that $\tvalue{\phi,A'} \leq \tvaluealg{s,\psi}$, as required.

For the other direction, we need only prove that there exist a partial realisation $A'$ with sink $s$ and a consistent assignment $\phi$ to $A'$ such that $\phi(s) = \psi$ and $\tvalue{\phi,A'}$ is exactly $\tvaluealg{s,\psi}$. Let $W$ be the independent set of $G$ chosen by the algorithm to compute $\tvaluealg{s,\psi}$. For each sub-bag $s'=(t',S') \in W$, let $\psi_{s'}$ be an $s'$-valid assignment such that $\tvaluealg{s',\psi_{s'}} - \sum_{e \in H: e \subseteq S \cap S'}f_e(\psi(\tuple{x_e})) = w(s')$ and $\psi_{s'}|_{S \cap S'} = \psi|_{S \cap S'}$. Note that every sub-bag in $W$ can reach a sub-bag of the root of $T$ via a directed path in $A$ by going through $s$. Then, by induction for each $s' \in W$ there exist a partial realisation $A'_{s'}$ with sink $s'$ and a consistent assignment $\phi_{s'}$ to $A'_{s'}$ such that $\phi_{s'}(s') = \psi_{s'}$ and $\tvalue{\phi_{s'},A'_{s'}} = \tvaluealg{s',\psi_{s'}} = w(s') + \sum_{e \in H: e \subseteq S \cap S'}f_e(\psi(\tuple{x_e}))$. Now, if we define $A'$ as the subgraph of $A$ induced by $\{s\} \cup \left( \cup_{s' \in W} \vertices{A'_{s'}} \right)$, then (i) $A'$ has a single sink $s$, since the sinks of each $A'_{s'}$ have an outgoing arc to $s$, and (ii) $A'$ contains at most one sub-bag for each $t \in \vertices{T}$ because $A$ is decomposable and $W$ is an independent set in $G$. It follows that $A'$ is a partial realisation of $A$. 

The mapping $\phi$ defined on $\vertices{A'}$ such that $\phi(s^*) \defeq \psi$ if $s^* = s$ and $\phi(s^*) \defeq \phi_{s'}(s^*)$ otherwise, where $s'$ is the only sub-bag in $W$ such that $s^* \in \vertices{A'_{s'}}$, is a consistent assignment to $A'$. Finally, by Observation~\ref{obs:branches} and the induction hypothesis we obtain
\begin{align*}
\tvalue{\phi,A'} &= \sum_{e \in H: e \subseteq S}f_e(\psi(\tuple{x_e})) +
   \sum_{\substack{s' \in W\\ s'=(t',S')}} \left(\tvalue{\phi_{s'},A'_{s'}} - \sum_{e \in H: e \subseteq S \cap S'}f_e(\psi(\tuple{x_e})) \right)\\
   &= \sum_{e \in H: e \subseteq S}f_e(\psi(\tuple{x_e})) + \sum_{s' \in W} w(s')
\end{align*}
which is exactly $\tvaluealg{s,\psi}$.
\renewcommand{\qedsymbol}{$\blacksquare$}\end{proof}

\begin{corollary}
\label{cor:optalg}
The output of the algorithm is the maximum of $\tvalue{\phi,A'}$ over all realisations $A'$ of $A$ and consistent assignments $\phi$ to $A'$.
\end{corollary}

We deduce from Corollary~\ref{cor:optalg} and Proposition~\ref{prop:realopt} that the algorithm correctly outputs $\opt{I}$. We now turn to the problem of estimating the time complexity of the algorithm. To this end, we will need to bound the time necessary to compute the maximum-weight independent sets. This will be achieved with the help of the next claim.

A graph is \emph{chordal} if every cycle $C$ with at least four vertices has a \emph{chord}, that is, an edge connecting two vertices that are not consecutive in $C$.

\begin{claim}
\label{claim:chordal}
For any given pair $(s,\psi)$, the associated graph $G$ is chordal.
\end{claim}

\begin{proof}
By way of contradiction let us assume that there exists a pair $(s,\psi)$ for which $G$ has a chordless cycle $C$. 
Let $s_1=(t_1,S_1)$ be a sub-bag in $C$ such that $t_1$ is minimal with respect to $<_T$. Since $C$ is chordless, at least one of the two sub-bags that are adjacent to $s_1$ in $C$ is not a sub-bag of $t_1$. Let $s_2$ be that sub-bag, and $s_3$ be the other one. Note that $s_2$ and $s_3$ are not adjacent in $G$, which means that they are not sub-bags of the same vertex of $T$ and none of $(s_2,s_3),(s_3,s_2)$ is an arc in $A$. Furthermore, since $t_1$ is minimal with respect to $<_T$ in the cycle, there is a directed path (of length $1$) in $A$ from $s_1$ to $s_2$. Likewise, there is always a directed path in $A$ from some sub-bag of $t_1$ to $s_3$: if $s_3$ is a sub-bag of $t_1$ then this path is empty, and otherwise we have the path $(s_1,s_3)$ in $A$ by minimality of $t_1$. Finally, by construction we have the arcs $(s_2,s)$ and $(s_3,s)$ in $A$, so the triple $(s,s_2,s_3)$ contradicts the decomposability of $A$. Thus the chordless cycle $C$ does not exist, which establishes the claim.
\renewcommand{\qedsymbol}{$\blacksquare$}\end{proof}

\begin{claim}
\label{claim:runtime}
The runtime of the algorithm is polynomial in $\Vert I \Vert$ and $\Vert P \Vert$.
\end{claim}

\begin{proof}
By definition of the width of a point decomposition, for each bag $B_t$, $t \in \vertices{T}$ we have $\bcovernum{H|_{B_t}} \leq k$. Hence, for each subhypergraph $H' \subseteq H$ there exists a subhypergraph $H^* \subseteq H'$, $|H^*| \leq k$, such that $\vertices{H^*|_{B_t}} = \vertices{H'|_{B_t}}$; in particular, every guard of a sub-bag contains at most $k$ edges. Therefore, given a sub-bag $s$, any $s$-valid assignment is in the join of restrictions of the support of at most $k$ constraints; it follows that there are at most $|H|^k q^k$ distinct $s$-valid assignments, where $q \defeq \max_{e \in H}|R_e|$, and the algorithm computes $\tvaluealg{s,\psi}$ for $O(\Vert P \Vert |H|^k q^k)$ pairs $(s,\psi)$.

The computation of $\tvaluealg{s,\psi}$ for a given pair $(s,\psi)$ reduces to computing a maximum weighted independent set in the graph $G$, which can be achieved in time linear in $\Vert G \Vert = O(\Vert P \Vert)$ since $G$ is chordal~\cite{frank1975some,Tarjan85:dm} by Claim~\ref{claim:chordal}. Constructing the graph $G$ takes time polynomial in $\Vert P \Vert$ and $|H|^k q^k$, which concludes the proof of Claim~\ref{claim:runtime}.
\renewcommand{\qedsymbol}{$\blacksquare$}\end{proof}

Theorem~\ref{thm:alg} now follows from Corollary~\ref{cor:optalg}, Proposition~\ref{prop:realopt} and Claim~\ref{claim:runtime}.
\end{proof}

\section{Relationship with $\beta$-acyclicity}
\label{sec:beta}

A hypergraph $H$ is \emph{$\alpha$-acyclic}~\cite{Beeri83:jacm} if it has a \emph{join tree}. 
A join tree is a pair $(T,\lambda)$ where $T$ is a tree and $\lambda$ is a bijection from $V(T)$ to (the edges of) $H$, 
such that for every $v\in V(H)$ the set $\{t\in V(T): v\in \lambda(t)\}$ induces a connected subtree of $T$. 
A hypergraph $H$ is \emph{$\beta$-acyclic}~\cite{Fagin83:jacm-degrees} if every subhypergraph of $H$ is $\alpha$-acyclic. 
It is known that $\beta$-acyclic hypergraphs are tractable for \problem{Max-CSPs}:

\begin{theorem}[\cite{brault-baron15:stacs}]
\label{theo:beta-BCM}
\problem{Max-CSP($\mathcal{H},-$)} can be solved in polynomial time if $\mathcal{H}$ is
a family of $\beta$-acyclic hypergraphs.
\end{theorem}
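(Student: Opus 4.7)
The plan is to derive Theorem~\ref{theo:beta-BCM} from our main algorithmic result Theorem~\ref{thm:alg}. The key intermediate step, announced in the introduction as Theorem~\ref{thm:beta-acyc}, is to show that every $\beta$-acyclic hypergraph $H$ admits a point decomposition of width $1$ and polynomial size, computable in polynomial time. Feeding this decomposition into the algorithm of Theorem~\ref{thm:alg} (with $k = 1$) immediately yields a polynomial-time algorithm for \problem{Max-CSP($\mathcal{H},-$)} whenever $\mathcal{H}$ is a family of $\beta$-acyclic hypergraphs.

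To construct the decomposition, I would use the standard characterisation of $\beta$-acyclicity via \emph{nest point elimination}. Recall that a vertex $v$ is a \emph{nest point} of $H$ if the edges of $H$ containing $v$ are totally ordered by inclusion, and $H$ is $\beta$-acyclic if and only if $V(H)$ admits an ordering $v_1, \ldots, v_n$ such that, for every $i$, $v_i$ is a nest point of the hypergraph obtained from $H$ by deleting $v_1, \ldots, v_{i-1}$ from every edge. Such an ordering can be computed in polynomial time. I would then take $T$ to be a path with nodes $t_n, t_{n-1}, \ldots, t_1$, where $t_1$ is the root, and for each edge $e \in H$ assign the points $\{(v,e) : v \in e\}$ to the bag $B_{t_{i(e)}}$, where $i(e)$ is the smallest index such that $v_{i(e)} \in e$. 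Condition (i) of point decompositions holds by construction, and the nest-point property at step $i(e)$ ensures that the edges assigned to $t_i$, when restricted to $B_{t_i}$, form a chain under inclusion. Hence every subhypergraph of $H|_{B_{t_i}}$ has cover number $1$, which gives width $1$.

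The hard part will be defining the $\T$-structure $A$ and verifying conditions (ii) and (iii) of a point decomposition together with decomposability, while keeping $\|A\|$ polynomial. Because each $H|_{B_{t_i}}$ is a chain, the number of distinct sub-bags at $t_i$ is at most $|H|+1$, so polynomial size is plausible. The arcs of $A$ must connect each sub-bag $(t_i, S)$ to carefully chosen sub-bags of lower-indexed nodes so that, for every subhypergraph $H' \subseteq H$, the induced subgraph $A[H']_\emptyset$ is a realisation with a unique sink at $t_1$, and so that along every realisation each vertex $v$ spans a connected subtree of $T_{A'}$. I expect the chain structure forced by the nest points to make the local arc sets behave like transitive chains, which should give decomposability almost for free; the delicate step will be verifying condition (iii) for vertices that persist through several elimination steps, requiring a case analysis on the first node at which such a vertex appears in some $S$ along a realisation.
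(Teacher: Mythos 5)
Your high-level plan is exactly the paper's: Theorem~\ref{theo:beta-BCM} is not proved directly but obtained by combining Theorem~\ref{thm:beta-acyc} (every $\beta$-acyclic hypergraph admits a polynomial-size width-$1$ point decomposition computable in polynomial time) with the algorithm of Theorem~\ref{thm:alg}. You also correctly identify that the construction of Theorem~\ref{thm:beta-acyc} should be driven by a $\beta$-elimination order, which is the same notion as your nest-point elimination order. Up to this point the route matches the paper.

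However, the point decomposition you sketch is \emph{not} the paper's, and the discrepancy is in precisely the place that makes the paper's argument work. First, the orientation of the tree is reversed: the paper adds a dummy root $t_\bot$ whose only child is $t_z$ for $z$ the \emph{last} vertex in the $\beta$-elimination order, so the first-eliminated vertex sits at the deepest leaf and arcs flow from earlier- to later-eliminated vertices; you instead root the path at the first-eliminated vertex $v_1$. Second, and more importantly, your bag assignment places each point $(v,e)$ in exactly one bag, namely $B_{t_{i(e)}}$ where $v_{i(e)}$ is the smallest vertex of $e$. The paper's bag $B_{t_x}$ is much richer: it contains the point $(y,e)$ for \emph{every} $x\in e$ with $x\leq_\beta y$, so each point is replicated across all bags $t_x$ with $x\in e$ up to $y$. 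That replication is not cosmetic; the paper's verification of the connectivity condition~(iii) relies on proving that for every arc $(s,s')$ of a realisation with $s=(t_y,S_y)$ and $s'=(t',S')$ one has $S_y\setminus S'=\{y\}$, i.e.\ each step of the realisation drops exactly the vertex being eliminated. This ``lose one vertex per arc'' invariant is what makes the support of a vertex contiguous along the realisation tree, and it is a consequence of every bag at $t_x$ still carrying the not-yet-eliminated part of all edges through $x$. With your thin bags no such invariant is available, since arcs can drop many vertices and introduce new ones. Consequently your sub-bags are also different objects (they are unions of full edges of a given ``first vertex'', not restrictions $e\cap \vertices{H}_{\geq x}$), and the arc conditions~$(\dagger)$, $(\dagger\dagger)$ of the paper, along with Lemma~\ref{lem:capelli2} and Theorem~\ref{thm:capelli3} on which they rest, do not transfer to your setting.

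You explicitly defer the hard part --- the $\T$-structure $A$ and the verification of conditions~(ii), (iii) and decomposability --- but this is precisely where the content of Theorem~\ref{thm:beta-acyc} lies, and the sketch as stated does not put you on the path the paper takes. Before attempting to define $A$, you would need to enlarge the bags to the paper's $B_{t_x}=\{(y,e): x\in e,\ x\leq_\beta y\}$ (and flip the root), at which point the arc rules~$(\dagger)$, $(\dagger\dagger)$ and the Capelli-style reachability lemmas become available.
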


The algorithm of Brault-Baron, Capelli, and Mengel~\cite{brault-baron15:stacs} works
by variable elimination, making use of a well-known alternative characterisation
of $\beta$-acyclic hypergraphs in terms of the so-called \emph{$\beta$-elimination orders}~\cite{Beeri83:jacm}.
In this section we show that such hypergraphs are covered by our framework as they always have a point decomposition of polynomial size and width $1$, which can be computed in polynomial time. 
Hence, together with Theorem \ref{thm:alg}, we can obtain Theorem \ref{theo:beta-BCM}. 

An ordering $(x_1,\ldots,x_n)$ of the vertices of a hypergraph $H$ is a \emph{$\beta$-elimination order} if for any $x_i \in \vertices{H}$ and $e,e' \in H$ such that $x_i \in e \cap e'$, either $e \cap \{x_j : j \geq i\} \subseteq e'$ or $e' \cap \{x_j : j \geq i\} \subseteq e$. 
A hypergraph is $\beta$-acyclic if and only if it has a $\beta$-elimination order~\cite{Beeri83:jacm}.

Our construction of point decompositions for $\beta$-acyclic hypergraphs is
inspired by recent work of Capelli~\cite{capelli17:lics}, from whom we borrow
some notation and lemmas. Let $H$ be a $\beta$-acyclic hypergraph and
$<_{\beta}$ be a $\beta$-elimination order of $H$. Given a vertex $x \in
\vertices{H}$, let $\vertices{H}_{\leq x}:=\{v\in V(H):v\leq_{\beta} x\}$ 
and $\vertices{H}_{\geq x}:=\{v\in V(H):v\geq_{\beta} x\}$. 
Let $<_H$ be the total order on the edges of $H$ such that $e_1 <_H e_2$ if and only
if $\max_{<_{\beta}} (e_1 \Delta e_2) \in e_2$, where $\Delta$ denotes the
symmetric difference. A \emph{walk} from $e \in H$ to $f \in H$ is a sequence $(e_1,x_1,e_2,x_2,\ldots,x_{n-1},e_n)$, with $n\geq 1$, 
where each $e_i$ is an edge of $H$, $e_1 = e$, $e_n = f$, and each $x_i$ is a vertex of $H$ such that $x_i \in e_i \cap e_{i+1}$. 
Given $x \in \vertices{H}$ and $e \in H$, let $H^x_e$ denote the set of edges of $H$ reachable from $e$ through a walk that contains only vertices $\leq_\beta x$ and edges $\leq_H e$.

\begin{example}
\label{ex:beta}
Consider the hypergraph $H$ from Figure~\ref{fig:ex} defined as $H=\{e, e_1, e_2, e_3\}$, where $e=\{x_0, x_1, x_2, x_3\}$ and
  $e_i = \{x_0, x_i\}$, for $i\in\{1,2,3\}$. We have that $H$ is $\beta$-acyclic. A possible $\beta$-elimination order is $x_1 <_{\beta} x_2 <_{\beta} x_0 <_{\beta} x_3$. 
  The induced order $<_H$ is $e_1 <_H e_2 <_H e_3 <_H e$. 
  For instance, note that $e_1\not\in H^{x_2}_{e_3}$ as the only possible walk would be $(e_3,x_0,e_1)$ but $x_0 >_\beta x_2$. 
  We have $H^{x_2}_{e_3}=\{e_3\}$ and $H^{x_0}_{e_3}=\{e_3, e_1,e_2\}$. Note that $e\not\in H^{x_0}_{e_3}$ as $e>_H e_3$. 
\end{example}

\begin{lemma}[\protect{\cite[Lemma~2]{capelli17:lics}}]
\label{lem:capelli2}
Let $x,y \in \vertices{H}$ such that $x \leq_{\beta} y$ and $e,f \in H$ such that $e \leq_H f$ and $\vertices{H^x_e} \cap \vertices{H^y_f} \cap \vertices{H}_{\leq x} \neq \emptyset$. Then, $H^x_e \subseteq H^y_f$.
\end{lemma}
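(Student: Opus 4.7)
The plan is to reduce the statement to the special case $e \in H^y_f$, from which the full inclusion $H^x_e \subseteq H^y_f$ follows by concatenating walks. First I would fix a witness vertex $z \in V(H^x_e) \cap V(H^y_f) \cap V(H)_{\leq x}$ and use the definitions to extract edges $g_1 \in H^x_e$, $g_2 \in H^y_f$ with $z \in g_1 \cap g_2$, together with a walk $W_e$ from $e$ to $g_1$ using only intermediate vertices $\leq_\beta x$ and edges $\leq_H e$, and a walk $W_f$ from $f$ to $g_2$ using only intermediate vertices $\leq_\beta y$ and edges $\leq_H f$.

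The key construction is to build a walk from $f$ to $e$ by concatenating $W_f$, the single step $(g_2, z, g_1)$, and the reverse of $W_e$. The inserted vertex $z$ satisfies $z \leq_\beta x \leq_\beta y$; the inserted edge $g_1$ satisfies $g_1 \leq_H e \leq_H f$; and the reversal of $W_e$ uses only intermediate vertices $\leq_\beta x \leq_\beta y$ and edges $\leq_H e \leq_H f$. Hence every intermediate vertex in the combined walk is $\leq_\beta y$ and every edge is $\leq_H f$, so this walk witnesses $e \in H^y_f$.

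Finally, for any $g \in H^x_e$ I would pick a walk $W_g$ from $e$ to $g$ with intermediate vertices $\leq_\beta x$ and edges $\leq_H e$, and append it to the walk from $f$ to $e$ constructed above. The resulting walk goes from $f$ to $g$ using only intermediate vertices $\leq_\beta y$ and edges $\leq_H f$, so $g \in H^y_f$, as required. I do not expect any real technical obstacle: the argument is essentially a walk-concatenation argument, and the $\beta$-elimination order and the induced order $<_H$ enter only through the hypotheses $x \leq_\beta y$ and $e \leq_H f$, which guarantee that any walk valid in the $(x,e)$-restriction remains valid as a sub-walk in the more permissive $(y,f)$-restriction. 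The only conceptual step is the observation that the common vertex $z$ lets one route back from $f$ all the way to $e$.
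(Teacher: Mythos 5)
Your proof is correct. One point worth noting about the comparison: the paper does not prove this lemma at all---it imports it verbatim from Capelli's LICS'17 paper (citing it as Lemma~2 there), so there is no in-paper proof to compare against. Your self-contained argument is the natural one and, as far as I can tell, sound: you fix a witness $z \in \vertices{H^x_e} \cap \vertices{H^y_f} \cap \vertices{H}_{\leq x}$, pick edges $g_1 \in H^x_e$ and $g_2 \in H^y_f$ containing $z$, and glue the walk $W_f$ (from $f$ to $g_2$), the single step $(g_2,z,g_1)$, and the reverse of $W_e$ (from $g_1$ back to $e$) into a walk from $f$ to $e$. Every intermediate vertex of the glued walk is $\leq_\beta y$ (those from $W_f$ by assumption, and $z$ and the vertices from $W_e$ because they are $\leq_\beta x \leq_\beta y$), and every edge is $\leq_H f$ (those from $W_f$ by assumption, $g_1 \leq_H e \leq_H f$, and the edges from $W_e$ are $\leq_H e \leq_H f$). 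This shows $e \in H^y_f$, and then prepending this walk to any witness walk for $g \in H^x_e$ gives $g \in H^y_f$. The only hypotheses actually used are $x \leq_\beta y$, $e \leq_H f$, and the nonempty triple intersection, which is exactly what the lemma supplies, so there is no gap.
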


\begin{theorem}[\protect{\cite[Theorem~3]{capelli17:lics}}]
\label{thm:capelli3}
For every $x \in \vertices{H}$ and $e \in H$, $\vertices{H^x_e} \cap \vertices{H}_{\geq x} \subseteq e$.
\end{theorem}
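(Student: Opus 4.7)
The plan is to argue by contradiction. Suppose that some $v \in V(H^x_e) \cap V(H)_{\geq x}$ satisfies $v \notin e$. Among all edges $f \in H^x_e$ that contain $v$, I pick one minimising the length $m$ of the shortest walk from $e$ to $f$ inside $H^x_e$, and fix such a shortest walk $W = (e_0, x_0, e_1, x_1, \ldots, x_{m-1}, e_m)$ with $e_0 = e$ and $e_m = f$. Minimality yields two convenient consequences: (a) $v \notin e_j$ for every $j < m$, since otherwise $e_j$ would be a $v$-containing edge of $H^x_e$ reachable in strictly fewer than $m$ steps; and (b) each pivot $x_j$ appears only in its two adjacent edges $e_j, e_{j+1}$ along $W$, because any other occurrence would yield a strictly shorter walk from $e$ to $f$ via $x_j$. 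The target is to derive the strict chain $e_0 <_H e_1 <_H \cdots <_H e_m$, contradicting $f \leq_H e$.

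The first step is to show, by backward induction from $j = m-2$ down to $j = 0$, that $x_0 <_\beta x_1 <_\beta \cdots <_\beta x_{m-1}$. For the base $j = m-2$, $\beta$-elimination at $x_{m-1} \in e_{m-1} \cap e_m$ is forced into $e_{m-1} \cap V(H)_{\geq x_{m-1}} \subseteq e_m$ (the opposite subsumption fails because $v \in e_m \cap V(H)_{\geq x_{m-1}} \setminus e_{m-1}$); if $x_{m-2} \geq_\beta x_{m-1}$, then $x_{m-2}$ would pass from $e_{m-1}$ into $e_m$, producing a forbidden shortcut $x_{m-2} \in e_{m-2} \cap e_m$. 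For the inductive step I apply $\beta$-elimination at $x_{j+1} \in e_{j+1} \cap e_{j+2}$: whichever of the two subsumptions holds, combined with the inductive hypothesis $x_{j+1} <_\beta x_{j+2}$, yields a shortcut (via $x_j$ from the subsumption $e_{j+1} \cap V(H)_{\geq x_{j+1}} \subseteq e_{j+2}$, via $x_{j+2}$ from the other) whenever $x_j \geq_\beta x_{j+1}$.

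The second step is to establish $e_j \cap V(H)_{\geq x_j} \subseteq e_{j+1}$ for every $j \in \{0, \ldots, m-1\}$. For $j = m-1$ this was already obtained. For $j < m-1$, the opposite subsumption would place $x_{j+1}$, which lies strictly above $x_j$ in $<_\beta$ by Step 1, into $e_j$, contradicting minimality~(b). The third step then converts each subsumption into an ordering: $e_j \setminus e_{j+1} \subseteq V(H)_{<x_j}$, while the element $x_{j+1}$ (for $j \leq m-2$) or $v$ (for $j = m-1$) sits in $e_{j+1} \setminus e_j$ strictly above $x_j$ in $<_\beta$; therefore $\max_{<_\beta}(e_j \Delta e_{j+1}) \in e_{j+1} \setminus e_j$, giving $e_j <_H e_{j+1}$. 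The degenerate case $m = 1$ is absorbed into this last step alone.

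The main obstacle I anticipate is extracting usable information at indices $j$ far from $m$, where the original witness $v$ does not directly participate. The key insight is that the minimality of the walk, coupled with the $\beta$-elimination axiom, forces the pivots $x_0, \ldots, x_{m-1}$ to be strictly increasing in $<_\beta$; once this monotonicity is in hand, the subsumptions and the $<_H$ ordering propagate along the walk essentially automatically to yield the required contradiction.
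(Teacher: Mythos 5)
The paper invokes this result as Theorem~3 of Capelli (LICS 2017) and does not reproduce a proof, so there is no in-paper argument to compare against; I therefore assess your proposal on its own.

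Your proof is correct. Choosing $f$ to minimise the $e$-to-$f$ walk distance among $v$-containing edges, and then a shortest such walk $W$, is exactly what makes the two minimality facts (a) and (b) available, and they in turn make the $\beta$-elimination axiom propagate cleanly along $W$. One point worth noting explicitly (your wording ``whichever of the two subsumptions holds'') is that in the inductive step the alternative $e_{j+2}\cap \vertices{H}_{\geq x_{j+1}}\subseteq e_{j+1}$ is in fact impossible outright under the induction hypothesis and (b), since $x_{j+2}$ would then land in $e_{j+1}$; so one is always in the first case and only there does the hypothesis $x_j\geq_\beta x_{j+1}$ enter. Two small things are glossed over but harmless: you implicitly use that $<_H$ is a transitive (indeed total) order to pass from the chain $e_0<_H\cdots<_H e_m$ to $e<_H f$ (the paper asserts this, and it holds since $<_H$ is the binary-valuation order on edges induced by $<_\beta$); and the degenerate case $m=1$ still needs the ``$v$ forces the subsumption $e_0\cap \vertices{H}_{\geq x_0}\subseteq e_1$'' argument from your Step~1 base case before Step~3 can fire, so it is not quite ``Step~3 alone'', though the argument goes through with this small reading. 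Overall the proof is sound and cleanly organised; without access to Capelli's original argument I cannot say whether this is the same route, but it is a legitimate self-contained derivation.
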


Now we are ready to state the main result of this section:

\begin{theorem}
\label{thm:beta-acyc}
Every $\beta$-acyclic hypergraph has a point decomposition of polynomial size and width $1$. Moreover, such a decomposition can be computed in polynomial time. 
\end{theorem}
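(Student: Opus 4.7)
The plan is to fix a $\beta$-elimination order $<_{\beta}$ of $V(H)$ (which exists by hypothesis and is computable in polynomial time~\cite{Beeri83:jacm}) together with the induced edge order $<_H$, and to build a point decomposition whose tree $T$ is a rooted path. I will take one non-root node $t_{(x,e)}$ per pair $(x,e) \in V(H) \times H$ with $x \in e$, arranged along the path so that smaller $<_{\beta}$-rank of $x$ sits deeper (with ties on $x$ broken by $<_H$ on $e$), plus a root $t_0$ carrying the empty bag. The bag at $t_{(x,e)}$ is
\[
B_{t_{(x,e)}} \defeq \bigl\{\, (y,f)\in\points{H} : y\geq_{\beta}x \text{ and } \bigl(f\in H^x_e \text{ or } (x\in f \text{ and } e\leq_H f)\bigr)\,\bigr\}.
\]
Condition~(i) of Definition~\ref{def:point} is then immediate since $\points{\{f\}}\subseteq B_{t_{(\min_{<_{\beta}}f,\,f)}}$ for every edge $f$.

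To bound the width by $1$, I would argue that $H|_{B_{t_{(x,e)}}}$ is a chain under inclusion. Theorem~\ref{thm:capelli3} places every restriction $f\cap V(H)_{\geq x}$ with $f\in H^x_e$ inside $e\cap V(H)_{\geq x}$, while the $\beta$-elimination property applied at $x$ linearly orders the restrictions $f\cap V(H)_{\geq x}$ among the edges $f\ni x$; a short case analysis glues these two halves into a single chain, which gives $\bcovernum{H|_{B_{t_{(x,e)}}}}\leq 1$. For the $\T$-structure $A$, I add an arc $(t_{(x,e)},\,f\cap V(H)_{\geq x})\to(t_{(x',e')},\,f\cap V(H)_{\geq x'})$ for every edge $f\in H$ and every pair of nodes along the path whose indexing vertices $x<_{\beta}x'$ both lie in $f$, together with arcs into $(t_0,\emptyset)$ from each singleton sub-bag $(t_{(x,e)},\{x\})$ whose $x$ is $<_{\beta}$-maximal in $e$. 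Conditions~(ii) and~(iii) then reduce to two facts: (a) by Lemma~\ref{lem:capelli2}, for any $H'\subseteq H$ the edge realising the maximum restriction at each node can be switched along the path while preserving the chosen arc connections, so $A[H']_{\emptyset}$ is a realisation; (b) a vertex $v$ occurs in a sub-bag at $t_{(x,e)}$ only when $v\geq_{\beta}x$ and $v$ belongs to some edge of the chain at that node, which forces $v$'s occurrences to form a consecutive sub-path along any realisation.

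The decomposition has polynomial size: $T$ has $O(\sum_{e\in H}|e|)$ nodes, each bag has at most $\|H\|$ points and at most $|H|+1$ sub-bags, and the set of arcs is polynomial in these quantities; once $<_{\beta}$ is computed the whole construction is polynomial-time. I expect the main obstacle to be the joint verification of decomposability of $A$ with condition~(ii): a too-liberal arc rule (e.g.\ any arc whose source-projection fits inside the target) makes~(ii) straightforward but can spawn two incoming arcs to a common sub-bag whose sources have no linking arc between them, violating decomposability; conversely a strict edge-following rule can leave $A[H']_{\emptyset}$ disconnected whenever the maximum restriction at a node comes from a different edge than at its descendant. Showing that the rule above threads this needle is the crux of the argument, and Capelli's Lemma~\ref{lem:capelli2}—providing the nested structure of the sets $H^x_e$—is the essential tool for producing the required links.
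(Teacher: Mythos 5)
Your bag definition admits points $(y,f)$ with $f \in H^x_e$ but $x \notin f$, and this breaks the width-$1$ claim. Theorem~\ref{thm:capelli3} only places every restriction $f\cap V(H)_{\geq x}$ with $f\in H^x_e$ \emph{inside} the single set $e\cap V(H)_{\geq x}$; it does not make these restrictions pairwise comparable, and they need not be. Concretely, take $V(H) = \{y, z, x, a, b\}$ with $\beta$-elimination order $y <_\beta z <_\beta x <_\beta a <_\beta b$ and edges $e = \{y,z,x,a,b\}$, $f = \{y,a\}$, $g = \{z,b\}$. One checks directly that this order is a $\beta$-elimination order (so $H$ is $\beta$-acyclic), that $f,g <_H e$, and that both $f$ and $g$ lie in $H^x_e$ via the walks $(e,y,f)$ and $(e,z,g)$. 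Their restrictions to $V(H)_{\geq x}$ are $\{a\}$ and $\{b\}$, which are incomparable, so $H|_{B_{t_{(x,e)}}}$ contains the two disjoint singletons $\{a\}$ and $\{b\}$ and $\bcovernum{H|_{B_{t_{(x,e)}}}} = 2$. The ``short case analysis'' you deferred cannot close this hole because the restrictions coming from $H^x_e$ simply do not form a chain.

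The fix is to remove $H^x_e$ from the bag rule: the paper's bag at $t_x$ is exactly $\{(y,e') \in \points{H} : x\in e',\ y\geq_\beta x\}$, i.e.\ only edges \emph{containing} $x$, so that the $\beta$-elimination property at $x$ forces every restriction in the bag to sit in one nested chain and $\bcovernum{H|_{B_{t_x}}}=1$ follows immediately. The sets $H^x_e$ and Lemma~\ref{lem:capelli2} do have a role to play, but it is in the definition and analysis of the $\T$-structure (conditions ($\dagger$), ($\dagger\dagger$) in the paper choose arcs using membership in $H^y_f$), not in the bags themselves. Once you shrink the bag to edges containing $x$, the bag no longer depends on $e$, your per-point indexing of $T$ collapses to a per-vertex path, and you are back to the paper's construction; at that stage the real work is proving decomposability (Claim~\ref{claim:adecomp}) and conditions~(ii)/(iii), which is where Capelli's Lemma~\ref{lem:capelli2} and Theorem~\ref{thm:capelli3} are genuinely needed and where your sketch remains too vague to assess.
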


\begin{proof}
Let $H$ be a $\beta$-acyclic hypergraph with $\beta$-elimination order $<_{\beta}$. The rooted tree $T$ of the point decomposition of $H$ has one vertex $t_x$ for each vertex $x \in \vertices{H}$, plus a special vertex $t_{\bot}$. The root of $T$ is $t_{\bot}$ and its only child is $t_z$, where $z$ is the last vertex in the $\beta$-elimination order of $H$. The remainder of $T$ is then a path, where $t_x$ is the child of $t_y$ if and only if $y$ is the vertex that directly follows $x$ in the $\beta$-elimination order. In particular, for any two vertices $x,y \in \vertices{H}$ we have that $t_x <_T t_y$ if and only if $x <_{\beta} y$.

For any $t_x \in \vertices{T}$, the associated bag $B_{t_x}$ is the set of all points $(y,e) \in \points{H}$ with $x \in e$ and $x \leq_{\beta} y$. The bag of $t_{\bot}$ is an empty set of points. We denote by $\T$ the pair $(T,(B_t)_{t \in \vertices{T}})$.

By definition of a $\beta$-elimination order, for each $t_x \in \vertices{T}$ it holds that $\bcovernum{H|_{B_{t_x}}} = 1$ and the possible sub-bags are of the form $(t_x, e \cap \vertices{H}_{\geq x})$ with $e \in H$. We now describe the directed graph $A$ on the sub-bags of $\T$ that will complete the point decomposition. Given any two sub-bags $s_x=(t_x,S_x)$ and $s_y=(t_y,S_y)$ with $x,y \in \vertices{H}$ and $x <_{\beta} y$, we add an arc from $s_x$ to $s_y$ if one of the following conditions is satisfied:
\begin{itemize}
\item[($\dagger$)] $|S_x| = 1$ and there exist $e,f \in H$ such that $S_x = e \cap \vertices{H}_{\geq x}$, $S_y = f \cap \vertices{H}_{\geq y}$ and $e \in H^y_f$;
\item[($\dagger \dagger$)] $|S_x| > 1$  and there exist $e,f \in H$ such that $S_x = e \cap \vertices{H}_{\geq x}$, $S_y = f \cap \vertices{H}_{\geq y}$, $e \in H^y_f$ and $y \leq_\beta z$, where $z = \min_{<_{\beta}}(S_x \backslash \{x\})$.
\end{itemize}
In addition, if $|S_x| = 1$ we add the arc $((t_x,S_x),(t_{\bot},\emptyset))$. Figure~\ref{fig:ex} shows the construction applied to the $\beta$-acyclic hypergraph $H$ to the left and $\beta$-elimination order 
$x_1 <_{\beta} x_2 <_{\beta} x_0 <_{\beta} x_3$. 

By construction, $A$ is a $\T$-structure. The next claim will be used in conjunction with Lemma~\ref{lem:capelli2} and Theorem~\ref{thm:capelli3} to show that $A$ is decomposable.

\begin{claim}
\label{claim:reachpath}
Let $s_x=(t_x,S_x)$ and $s_y=(t_y,S_y)$ be two sub-bags with $x,y \in \vertices{H}$ and $S_x, S_y \neq \emptyset$, such that there is a directed path in $A$ from $s_x$ to $s_y$. Then, there exist $e,f \in H$ such that $S_x = e \cap \vertices{H}_{\geq x}$, $S_y = f \cap \vertices{H}_{\geq y}$ and $e \in H^y_f$.
\end{claim}

\begin{proof}
We prove the claim by induction on the length of the path. If the path has length $1$ (i.e. $(s_x,s_y)$ is an arc in $A$) then $(s_x,s_y)$ satisfies either  ($\dagger$) or ($\dagger \dagger$) and the claim holds. Now, suppose that the path has length $n > 1$ and that the claim holds for all paths of length $n-1$. Let $z \in \vertices{H}$, $z <_{\beta} y$, be such that $s_z=(t_z,S_z)$ is the predecessor of $s_y$ in the path. (Note that such a vertex $z$ always exists because the special sub-bag $(t_{\bot},\emptyset)$ is a sink in $A$.) By induction, there exist $e_x,f_z \in H$ such that $S_x = e_x \cap \vertices{H}_{\geq x}$, $S_z = f_z \cap \vertices{H}_{\geq z}$ and $e_x \in H^z_{f_z}$.   Also, since $(s_z,s_y)$ is an arc in $A$, it satisfies either ($\dagger$) or ($\dagger \dagger$) and hence there exist $e_z,f_y \in H$ such that $S_z = e_z \cap \vertices{H}_{\geq z}$, $S_y = f_y \cap \vertices{H}_{\geq y}$ and $e_z \in H^y_{f_y}$. In particular, there exists a walk $w_{f_z e_x}$ from $f_z$ to $e_x$ that only contains vertices $\leq_{\beta} z$ and edges $\leq_H f_z$, and a walk $w_{f_y e_z}$ from $f_y$ to $e_z$ that only contains vertices $\leq_{\beta} y$ and edges $\leq_H f_y$.

If $f_z <_H f_y$, then $(w_{f_y e_z},z,w_{f_z e_x})$ is a walk from $f_y$ to $e_x$ that contains only vertices $\leq_{\beta} y$ and edges $\leq_H f_y$. Therefore, we have $e_x \in H^y_{f_y}$ and the claim follows from the edges $e_x,f_y$. If instead we have $f_y <_H f_z$, then by Theorem~\ref{thm:capelli3} we have $f_z \cap \vertices{H}_{\geq y} = e_z \cap \vertices{H}_{\geq y} \subseteq \vertices{H^y_{f_y}} \cap \vertices{H}_{\geq y} \subseteq f_y$. Note that $f_z \cap \vertices{H}_{\geq y} $ cannot be a strict subset of $f_y \cap \vertices{H}_{\geq y}$ because $f_y <_H f_z$. This implies that $f_z \cap \vertices{H}_{\geq y} = f_y \cap \vertices{H}_{\geq y} = S_y$. Finally, we deduce from the inclusion $H^z_{f_z} \subseteq H^y_{f_z}$ that $e_x \in H^y_{f_z}$, and the claim follows from the edges $e_x,f_z$.
\renewcommand{\qedsymbol}{$\blacksquare$}\end{proof}

\begin{claim}
\label{claim:adecomp}
$A$ is decomposable.
\end{claim}

\begin{proof}
We prove the claim by contradiction. Suppose that $A$ is not decomposable, that is, there exist five sub-bags $s, s_x=(t_x,S_x), s_y=(t_y,S_y), s^1_z=(t_z,S^1_z), s^2_z=(t_z,S^2_z)$ with $x,y,z \in \vertices{H}$ and $x \neq y$ such that (i) $(s_x,s)$ and $(s_y,s)$ are arcs in $A$, (ii) neither $(s_x,s_y)$ nor $(s_y,s_x)$ is an arc in $A$, 
and (iii) there are directed paths in $A$ from $s^1_z$ to $s_x$ and from $s^2_z$ to $s_y$. 
By the definition of $A$, we can further assume that none of $S_x,S_y,S^1_z,S^2_z$ is empty.

By Claim~\ref{claim:reachpath}, there exist $f_x,e_z^1,f_y,e_z^2 \in H$ such that $S_x = f_x \cap \vertices{H}_{\geq x}$, $S_y = f_y \cap \vertices{H}_{\geq y}$, $S_z^1 = e_z^1 \cap \vertices{H}_{\geq z}$, $S_z^2 = e_z^2 \cap \vertices{H}_{\geq z}$, $e_z^1 \in H^x_{f_x}$ and $e_z^2 \in H^y_{f_y}$. Without loss of generality we assume $x <_{\beta} y$. 

We distinguish two cases:
\begin{itemize}
\item $f_x \leq_H f_y$. Observe that $z \in e_z^1 \cap e_z^2 \cap \vertices{H}_{\leq x} \subseteq \vertices{H^x_{f_x}} \cap \vertices{H^y_{f_y}} \cap \vertices{H}_{\leq x}$, so by Lemma~\ref{lem:capelli2} we have $H^x_{f_x} \subseteq H^y_{f_y}$. In particular, it holds that $f_x \in H^y_{f_y}$. Since $(s_x,s_y)$ is not an arc in $A$, we can deduce that $|S_x| > 1$; it follows that $s$ is of the form $(t_w,S_w)$ where $w \leq_{\beta} \min_{<_{\beta}}(S_x \backslash \{x\})$. However, the arc $(s_y,s)$ implies that $y <_\beta w$, which means that $(s_x,s_y)$ should have been an arc in $A$, a contradiction.
\item $f_x \geq_H f_y$. Then, we have $z \in \vertices{H^y_{f_x}} \cap \vertices{H^y_{f_y}} \cap \vertices{H}_{\leq y}$, so by Lemma~\ref{lem:capelli2} we have $H^y_{f_y} \subseteq H^y_{f_x}$. By Theorem~\ref{thm:capelli3} it holds that $f_y \cap \vertices{H}_{\geq y} \subseteq f_x$, and in particular $y \in f_x$. Then, since $(s_x,s)$ is an arc in $A$ and $|S_x| = |f_x \cap \vertices{H}_{\geq x}| > 1$ (as it contains both $x$ and $y$), it follows that $s$ is of the form $(t_w,S_w)$ where $w \leq_{\beta} \min_{<_{\beta}}(S_x \backslash \{x\})$. Again, the arc $(s_y,s)$ implies that $y <_\beta w$. Finally, since $y \in S_x \backslash \{x\}$, we have $w \leq_{\beta} \min_{<_{\beta}}(S_x \backslash \{x\}) \leq_{\beta} y <_{\beta} w$, a contradiction.
\end{itemize} 
\renewcommand{\qedsymbol}{$\blacksquare$}\end{proof}

\begin{claim}
\label{claim:tripledec}
The triple $(T,(B_t)_{t \in \vertices{T}}, A)$ is a point decomposition of $H$.
\end{claim}

\begin{proof}
$T$ is a rooted tree, each $B_t$ with $t \in \vertices{T}$ is a set of points, and $A$ is a decomposable $\T$-structure by Claim~\ref{claim:adecomp}. That leaves conditions (i), (ii) and (iii) in the definition of a point decomposition to verify. 

By construction, for any edge $e \in H$, we have that $\points{\{e\}}=\{(v,e):v\in e\} \subseteq {B_{t_x}}$, where $x \in \vertices{H}$ is the smallest vertex in $e$ with respect to $<_{\beta}$. Hence condition (i) holds.

For condition (ii), let $H'$ be a subhypergraph of $H$. Note that $A':=A[H']_\emptyset$ is precisely the subgraph of $A$ induced by 
\[
\{(t_{\bot},\emptyset)\} \cup \{(t_x, \vertices{H'|_{B_{t_x}}}) : x \in \vertices{H}, \vertices{H'|_{B_{t_x}}} \neq \emptyset\}.
\]
because all sub-bags of the form $(t_x, \emptyset)$ with $x \in \vertices{H}$ are isolated sub-bags of non-root vertices of $T$. We show that $A'$ is a realisation of $A$.  Suppose for the sake of contradiction that it is not the case. The only possibility is that $A'$ has two sinks, and one of them is of the form $s_x=(t_x,S_x)$ with $x \in \vertices{H}$ and $S_x \neq \emptyset$. The sub-bag $s_{\bot} = (t_{\bot},\emptyset)$ belongs to $V(A')$, which implies that $|S_x| > 1$ since otherwise $(s_x,s_{\bot})$ would be an arc in $A'$ and hence $s_x$ would not be a sink. Now, let $y=\min_{<_{\beta}}(S_x \backslash \{x\})$, and let $e_x \in H'$ be such that $S_x = e_x \cap \vertices{H}_{\geq x}$. Let $s_y = (t_y,S_y)$ denote the sub-bag $(t_y, \vertices{H'|_{B_{t_y}}})$ and $e_y \in H'$ be such that $S_y = e_y \cap \vertices{H}_{\geq y}$. Note that $S_y$ is not empty because $(y,e_x) \in B_{t_y}$; this implies in particular that $s_y \in \vertices{A'}$. If $e_x \cap \vertices{H}_{\geq y} = e_y \cap \vertices{H}_{\geq y}$ then $(s_x,s_y)$ would be an arc in $A$ because of condition ($\dagger \dagger$) (with $(e,f) = (e_x,e_x)$). Since $s_y \in \vertices{A'}$, this contradicts our hypothesis that $s_x$ is a sink in $A'$. On the other hand, if $e_x \cap \vertices{H}_{\geq y} \neq e_y \cap \vertices{H}_{\geq y}$ then from the facts that $<_{\beta}$ is a $\beta$-elimination order, $e_x \in H'$ and $y \in e_x$, we can further assume that $e_x \cap \vertices{H}_{\geq y} \subset e_y \cap \vertices{H}_{\geq y}$. It follows that $e_x <_H e_y$, and the walk $(e_y,y,e_x)$ implies that $e_x \in H_{e_y}^y$. However, by condition ($\dagger \dagger$) we deduce that $(s_x,s_y)$ is an arc in $A$, a final contradiction. 

For condition (iii), we first prove that for any arc $(s,s')$ of $A'$ where $s=(t_y,S_y)$, $y \in \vertices{H}$ and $s'=(t',S')$ it holds that $S_y \backslash S' = \{y\}$. Observe that $S_y$ always contains $y$, and $S'$ may only contain vertices $z \in \vertices{H}$ with $y <_H z$, so $S_y \backslash S' = \{y\}$ whenever $(s,s')$ satisfies condition ($\dagger$) or if $s' = (t_{\bot},\emptyset)$. If $(s,s')$ satisfies condition ($\dagger \dagger$) instead, then $s' = (t_z,S')$ for some $z \leq_{\beta} \min_{<_{\beta}}(S_y \backslash \{y\})$. Let $e_y, f_z \in H$ be such that $S_y = e_y \cap \vertices{H}_{\geq y}$, $S' = f_z \cap \vertices{H}_{\geq z}$ and $e_y \in H^z_{f_z}$. By Theorem~\ref{thm:capelli3} we have that $S_y \backslash \{y\} = e_y \cap \vertices{H}_{\geq z} \subseteq \vertices{H^z_{f_z}} \cap \vertices{H}_{\geq z} \subseteq f_z$ and hence $S_y \backslash S' = S_y \backslash (f_z \cap \vertices{H}_{\geq z}) = \{y\}$, as claimed.

Now, let $A'$ be a realisation of $A$ and $x \in \cup_{(t,S) \in \vertices{A'}} S$. It follows from the property above that if $t'$ is the parent of $t$ in $T_{A'}$ and $(t,S), (t',S')$ are the sub-bags in $V(A')$, then $x\in S$ and $x\not\in S'$ if and only if $t = t_x$. Since $x$ may only appear in a set $S_y$ for sub-bags of the form $(t_y,S_y)$ with $y \leq_\beta x$, the set 
\[
\{ t \in V(T_{A'}) : \text{$\exists (t,S)\in V(A')$ and $x\in S$}\}
\]
induces a connected subtree of $T_{A'}$, which proves the claim.
\renewcommand{\qedsymbol}{$\blacksquare$}\end{proof}

The point decomposition $(T,(B_t)_{t \in \vertices{T}}, A)$ has polynomial size. Moreover, it can be computed in polynomial time  since a $\beta$-elimination order can be computed efficiently from $H$~\cite{brault-baron15:stacs}. 
Recall that for each $t_x \in \vertices{T}$ it holds that $\bcovernum{H|_{B_{t_x}}} = 1$; it follows that $(T,(B_t)_{t \in \vertices{T}}, A)$ has width $1$. Together with Claim~\ref{claim:tripledec}, these last observations establish Theorem~\ref{thm:beta-acyc}.
\end{proof}

In the case of the hypergraph $H$ of Figure~\ref{fig:ex}, it can be verified that our construction produces a non-flat point decomposition independently of the $\beta$-elimination order we pick for $H$. As we shall see in the next section, this is not coincidence as $\beta$-acyclic hypergraphs cannot be captured by flat point decompositions of \emph{any} constant width. 
The reason is that the latter captures precisely the so-called hypergraphs of constant MIM-width, which are known to be incomparable with $\beta$-acyclic hypergraphs~\cite{brault-baron15:stacs}. 

\section{Flat point-width and MIM-width}
\label{sec:mim}

In this section, we show how our main tractability result from Theorem \ref{thm:alg} also explains the tractability of \problem{Max-CSP}s for classes of hypergraphs of 
bounded MIM-width \cite{Vatshelle12:phd,Saether15:jair}. Before doing so, we need some notation and definitions. 

An \emph{induced matching} in a graph $G$ is a set $M \subseteq E(G)$ such that no two edges
of $M$ share a common vertex and for every edge $e=\{u,v\}\in E(G)\setminus M$, we have $\{u,v\}\not\subseteq \bigcup_{\{u',v'\} \in M}\{u',v'\}$. 
For a graph $G$, we denote by $\mim{G}$ the maximum size of an induced matching in $G$. 
A graph $G$ is \emph{bipartite} if there is a partition $V_1,V_2$ of its vertex set $V(G)$ such that every edge of $G$ has one endpoint in $V_1$ and the other in $V_2$. 
For a graph $G$ and disjoint subsets $V_1,V_2$ of $V(G)$, we define $G[V_1,V_2]$ to be the bipartite graph with vertex set $V_1\cup V_2$ that contains all edges of $G$
with one endpoint in $V_1$ and the other in $V_2$. 

A \emph{branch decomposition} of a graph $G$ is a pair $(T,\delta)$ where $T$ is a binary rooted tree 
and $\delta$ is a bijection from $V(G)$ to the leaves of $T$. 
For $t \in V(T)$, we let $T_t$ denote the subtree of $T$ rooted at $t$ and $V_t$
denote the set $\{\delta^{-1}(\ell): \text{$\ell$ is a leaf of $T_t$}\}$. The \emph{MIM-width} of the branch
decomposition $(T,\delta)$ is the maximum $\mim{G[V_t, V(G)\setminus V_t]}$, taken over all $t\in V(T)$.   
The MIM-width~\cite{Vatshelle12:phd} of $G$, denoted by $\mimw{G}$, is the
minimum MIM-width over all branch decompositions of $G$.

The \emph{incidence graph} of a hypergraph $H$, denoted by $\inc{H}$, is the bipartite graph 
with vertex set $V(H)\cup H$ and edge set $\{\{v,e\}: \text{$v\in V(H)$, $e\in H$ and $v\in e$}\}$. 
We define the MIM-width $\mimw{H}$ of the hypergraph $H$ to be $\mimw{\inc{H}}$. 
It follows from the work of S{\ae}ther, Telle and Vatshelle \cite{Saether15:jair} that \problem{Max-CSPs} are tractable for hypergraphs 
of bounded MIM-width, provided a branch decomposition of bounded MIM-width is given with the input. More formally:

\begin{theorem}[\cite{Saether15:jair}]
\label{theo:mim-stv}
Let $k\geq 1$ be fixed. 
There exists an algorithm which, given as input a \problem{Max-CSP} instance $I$ with hypergraph $H$ and a branch decomposition of $\inc{H}$ of MIM-width at most $k$, 
computes $\opt{I}$ in time polynomial in $\Vert I \Vert$.
\end{theorem}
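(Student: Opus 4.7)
The plan is to derive Theorem~\ref{theo:mim-stv} as a consequence of Theorem~\ref{thm:alg} by converting the given branch decomposition into a flat point decomposition of polynomial size whose width is bounded by a function $g(k)$ of $k$. This conversion is precisely the content of the announced Theorem~\ref{theo:mim-fpw}, and once it is proved the MIM-width algorithm follows immediately: the conversion runs in polynomial time, and then Theorem~\ref{thm:alg} solves the \problem{Max-CSP} instance in polynomial time given the resulting decomposition.

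For the construction, I would reuse the binary rooted tree $T$ of the branch decomposition $(T,\delta)$ as the template tree of the point decomposition, adding a fresh root on top that carries an empty bag. At each node $t \in \vertices{T}$ the cut $(V_t, V(\inc{H}) \setminus V_t)$ splits the incidence graph into two sides, and I would let $B_t$ be the set of points $(v,e) \in \points{H}$ whose incidence edge $\{v,e\}$ is ``exposed'' at the cut, namely with $v$ on one side and $e$ on the other, together with a canonical set of auxiliary points making the bag closed under the operations needed for conditions (ii) and (iii). The arcs of the $\T$-structure $A$ would connect only sub-bags at parent-child pairs in $T$ (guaranteeing flatness), linking $(t_1,S_1)$ to $(t_2,S_2)$ whenever $S_1$ and $S_2$ are the traces on $B_{t_1}$ and $B_{t_2}$ of a common subhypergraph of $H$.

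The width bound $\bcovernum{H|_{B_t}} \leq g(k)$ follows from the standard correspondence between edge covers in hypergraph projections and induced matchings in the associated bipartite graph: any subhypergraph of $H|_{B_t}$ forcing a large cover would exhibit a large induced matching in $\inc{H}[V_t, V(\inc{H})\setminus V_t]$, contradicting $\mimw{\inc{H}} \leq k$. Condition (i) of a point decomposition reduces to inspecting the unique leaf carrying each hyperedge (or variable) in the branch decomposition.

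The main obstacle will be simultaneously securing conditions (ii) and (iii) of point decompositions together with decomposability of $A$, while staying flat. Richness is needed so that for every subhypergraph $H'$ the graph $A[H']_\emptyset$ has exactly one sink at the root, at most one sub-bag per node, and connected vertex-occurrence subtrees; sparsity is needed so that whenever two sub-bags $s_1, s_2$ share a common successor and both have ancestors at a common node of $T$, one of $(s_1,s_2),(s_2,s_1)$ is an arc of $A$. The flatness of $A$ constrains the geometry of potential violations, but orchestrating the arc set via the MIM-width witnesses so that both properties hold simultaneously is where the real work lies; I expect this balancing act, rather than the algorithmic application of Theorem~\ref{thm:alg}, to be the technical crux of the proof.
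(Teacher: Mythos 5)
Your strategy is exactly the paper's: Theorem~\ref{theo:mim-stv} is obtained by feeding the conversion of Theorem~\ref{theo:mim-fpw} into the algorithm of Theorem~\ref{thm:alg}, and your sketched construction---bags made of the points exposed at each cut of the branch decomposition, arcs confined to parent--child pairs and induced by the traces of a common subhypergraph---is precisely what the paper carries out (modularised through ``simplified'' point decompositions and the $\alpha^2$-width of the point graph $L(\inc{H})$ in Propositions~\ref{prop:mim-to-alpha}, \ref{prop:crz-point}, and~\ref{prop:simplified-crz}). One thing you can stop worrying about is decomposability: in a flat $\T$-structure every arc goes from a child to its parent, so two sub-bags $s_1,s_2$ with a common out-neighbour $s$ at node $t$ sit at two distinct children of $t$, whose subtrees of $T$ are disjoint; condition~(ii) of Definition~\ref{def:decomp} can therefore never be triggered, and flat $\T$-structures are vacuously decomposable. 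The genuine technical crux, as you correctly anticipate, is verifying conditions~(ii) and~(iii) of Definition~\ref{def:point}, which the paper handles by the inductive propagation argument in the proof of Proposition~\ref{prop:simplified-crz}.
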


Let us stress that the results in~\cite{Vatshelle12:phd,Saether15:jair} are given for Max-SAT (and \#SAT). 
However, Theorem~\ref{theo:mim-stv} can be obtained by adapting the algorithm from~\cite{Vatshelle12:phd,Saether15:jair} to Max-CSPs. 
We omit the details as Theorem~\ref{theo:mim-stv} is implied by the results of this section.

The goal of this section is to prove the following:

\begin{theorem}
\label{theo:main-mim}
Let $k\geq 1$ be fixed. 
For every hypergraph $H$ and branch decomposition of $\inc{H}$ of MIM-width $k$, there exists 
a point decomposition of $H$ of polynomial size in $\Vert H \Vert$ and of width at most $2k$. 
Moreover, this point decomposition can be computed in time polynomial in $\Vert H \Vert$. 
\end{theorem}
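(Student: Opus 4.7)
The plan is to build a flat point decomposition $(T, (B_t)_{t \in V(T)}, A)$ of $H$ directly from the given branch decomposition of $\inc{H}$, using essentially the same underlying tree (rooted at an arbitrary leaf). For each $t \in V(T)$ I would define the bag as the set of all points whose endpoints in $T$ are separated by $t$ in the tree-path sense:
\[
B_t \;\defeq\; \{(v, e) \in P(H) : t \text{ lies on the path in } T \text{ from } \delta(v) \text{ to } \delta(e)\}.
\]
Condition (i) of Definition~\ref{def:point} is then immediate: at the leaf $t = \delta(e)$, every point $(v,e)$ of $e$ satisfies $t \in \mathrm{path}_T(\delta(v), \delta(e))$, so $\{(v,e):v\in e\} \subseteq B_t$.

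The first technical step is the width bound $\bcovernum{H|_{B_t}} \leq 2k$. Fix any $H' \subseteq H$ and an inclusion-minimal edge cover $\{e_1, \ldots, e_m\}$ of $H'|_{B_t}$ with private vertices $v_i \in e_i|_{B_t}$. Each pair $(v_i, e_i) \in B_t$ falls into exactly one of three cases: (A) $v_i \in V_t$, $e_i \notin V_t$; (B) $e_i \in V_t$, $v_i \notin V_t$; or (C) both $v_i, e_i \in V_t$ and $t$ is the LCA in $T$ of $\delta(v_i)$ and $\delta(e_i)$, so $\delta(v_i)$ and $\delta(e_i)$ lie in the two distinct children's subtrees $T_{t_L}, T_{t_R}$ of $t$. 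Using that $t$ necessarily lies on the path $\delta(v)\to\delta(e')$ whenever $v, e'$ are separated by any of the cuts induced by $V_t, V_{t_L}, V_{t_R}$, the private-vertex property of the $v_i$'s shows that pairs in (A)$\cup$(B) form an induced matching in $\inc{H}[V_t, V(\inc{H}) \setminus V_t]$, while pairs in (C) form an induced matching in $\inc{H}[V_{t_L}, V_{t_R}]$, which is a subgraph of $\inc{H}[V_{t_L}, V(\inc{H}) \setminus V_{t_L}]$. Each such induced matching has size at most $k$ by the MIM-width assumption, so $m \leq 2k$.

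For the $\T$-structure $A$, I would add, for each parent-child pair $(t', t)$ in $T$, an arc $((t, S), (t', S'))$ between sub-bags whenever $S$ and $S'$ agree on the shared vertex set $V(H|_{B_t}) \cap V(H|_{B_{t'}})$. This flat, consistency-based rule has three consequences. First, for every subhypergraph $H' \subseteq H$ the canonical sub-bags $(t, V(H'|_{B_t}))$ at consecutive nodes are pairwise consistent, so $A[H']_\emptyset$ is a directed tree with a unique sink at the root, yielding condition (ii). Second, decomposability reduces to a straightforward case analysis in the flat setting, where incompatibilities can only occur along ancestor paths. Third, condition (iii) is exactly where the tree-path choice of $B_t$ pays off: for any $v \in V(H)$ and any realisation $A' = A[H']_\emptyset$, the set $\{t \in V(T_{A'}) : v \in V(H'|_{B_t})\}$ equals $\bigcup_{e \in H',\, v \in e} \mathrm{path}_T(\delta(v), \delta(e))$, which is a union of tree paths all sharing the endpoint $\delta(v)$, and is therefore a connected subtree of $T_{A'} = T$.

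The main obstacle is keeping the decomposition polynomial in size, since a priori $|S(H|_{B_t})|$ can be exponential in $|V(H|_{B_t})|$. Here the plan is to invoke the standard result from the MIM-width literature that in a bipartite graph of MIM-width $k$, the number of distinct vertex-set traces of subhypergraphs across a cut is at most $|H|^{O(k)}$. Applied separately to both sides of the bipartition at $t$, this bounds the number of distinct canonical vertex sets $V(H'|_{B_t})$, and hence the total number of sub-bags appearing in $A$, by a polynomial in $\Vert H \Vert$. All individual construction steps (computing bags, enumerating meaningful sub-bags, and listing consistency arcs) are polynomial-time, which completes the proof of Theorem~\ref{theo:main-mim}.
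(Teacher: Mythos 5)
Your bag construction is the right one: unwinding the paper's chain Proposition~\ref{prop:mim-to-alpha} $\to$ Proposition~\ref{prop:crz-point} $\to$ Proposition~\ref{prop:simplified-crz} shows that the composite bags are exactly the tree-path bags you write down, and your direct width argument via two induced matchings (the cut at $t$ plus the cut between the two children) is a clean, self-contained version of the bound the paper obtains from Proposition~\ref{prop:mim-to-alpha} combined with Observation~\ref{obs:cn-mim}. Decomposability is indeed vacuous for flat decompositions, as you suggest. The polynomial-size claim also does not need any external MIM-width counting lemma: since $\bcovernum{H|_{B_t}}\leq 2k$, every $V(H'|_{B_t})$ equals $V(H''|_{B_t})$ for some $H''\subseteq H'$ of size at most $2k$, so there are at most $|H|^{2k}$ distinct sub-bags at each node.

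The genuine gap is the arc rule. You put an arc between $(t,S)$ and $(t',S')$ whenever $S\cap W=S'\cap W$ for $W=V(H|_{B_t})\cap V(H|_{B_{t'}})$. This is too strong to make condition (ii) hold. Concretely, take $H=\{e_1,e_2\}$ with $v\in e_1\cap e_2$, and a branch decomposition in which the path from $\delta(v)$ to $\delta(e_1)$ branches off at some internal node $t$ while the path from $\delta(v)$ to $\delta(e_2)$ continues through $t$'s parent $t'$. Then $(v,e_1)\in B_t$ and $(v,e_2)\in B_{t'}$, so $v\in V(H|_{B_t})\cap V(H|_{B_{t'}})=W$; yet for $H'=\{e_1\}$ one has $v\in V(H'|_{B_t})$ and $v\notin V(H'|_{B_{t'}})$. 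The canonical sub-bags $(t,V(H'|_{B_t}))$ and $(t',V(H'|_{B_{t'}}))$ therefore disagree on $v\in W$ and receive no arc under your rule, so $A[H']_{\emptyset}$ acquires a second sink at $t$ and is not a realisation. The paper's Proposition~\ref{prop:simplified-crz} uses a weaker consistency: $(t,S)$ and $(t',S')$ get an arc iff there \emph{exists} a subhypergraph $H'$ with $S=V(H'|_{B_t})$ and $S'=V(H'|_{B_{t'}})$; with this rule the canonical sub-bags of any subhypergraph are always pairwise consistent, which is exactly what makes condition (ii) immediate. (Checking this consistency efficiently uses the $\bcovernum{}\leq 2k$ bound, as described at the end of Section~\ref{sec:mim}.) Separately, your verification of condition (iii) only treats realisations of the form $A[H']_{\emptyset}$, but the definition requires connectivity for \emph{every} realisation of $A$, including those that are not induced by any single subhypergraph; the paper's proof of Proposition~\ref{prop:simplified-crz} handles this with an explicit inductive construction of a violating witness.
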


Note that we obtain Theorem \ref{theo:mim-stv} as a consequence of Theorem \ref{theo:main-mim} and Theorem \ref{thm:alg}. 
In order to prove Theorem \ref{theo:main-mim}, we show that the MIM-width of a hypergraph is equivalent to its flat point-width modulo constant factors. 
This is the main technical result of this section which we state below:

\begin{theorem}
\label{theo:mim-fpw}
For every hypergraph $H$, we have $\mimw{H}\leq 4\cdot \fpw{H}$ and $\fpw{H}\leq 2\cdot \mimw{H}$.  
Moreover, for a fixed $k\geq 1$, a flat point decomposition (of polynomial size) of width at most $2k$ can be computed in time polynomial in $\Vert H \Vert$ from a branch decomposition of $H$ of MIM-width $k$.
\end{theorem}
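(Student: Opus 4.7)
I plan to prove both inequalities by explicit constructions; only the MIM-to-flat-point-width direction also yields the computational claim.

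For $\fpw{H} \leq 2\,\mimw{H}$: given a branch decomposition $(T,\delta)$ of $\inc{H}$ of MIM-width $k$, I keep $T$ as the tree of the flat point decomposition and define, at each $t \in V(T)$,
\[
  B_t \;=\; \{(v,e) \in P(H) : t \text{ lies on the $T$-path between } \delta(v) \text{ and } \delta(e)\},
\]
placing an arc $((t_1,S_1),(t_2,S_2))$ in $A$ precisely when $t_1$ is a child of $t_2$ in $T$ and some $H' \subseteq H$ satisfies $S_i = V(H'|_{B_{t_i}})$ for $i \in \{1,2\}$. Flat $\T$-structures are automatically decomposable, since in a flat arrangement two arcs with a common target whose sources lie in distinct tree nodes cannot admit a common ancestor in $A$ as required by Definition~\ref{def:decomp}. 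Condition (i) of Definition~\ref{def:point} holds since $B_{\delta(e)}$ contains every point of $e$, and condition (ii) because $H'$ itself uniformly witnesses every parent--child arc of $A[H']$, rendering $A[H']_\emptyset$ a realisation. For the width bound at an internal $t$ with children $c_1, c_2$, I take a minimal edge cover $C$ of any $H'|_{B_t}$, pick a private vertex $v_e \in e|_{B_t}$ for each $e \in C$, and partition $C$ according to whether $v_e \in V_{c_1}$ or $V_{c_2}$; privateness guarantees that each piece forms an induced matching in $\inc{H}[V_{c_i},\bar{V_{c_i}}]$, whose MIM is at most $k$, so $|C| \leq 2k$.

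The main technical step is condition (iii). The key lemma, obtained by unpacking the witness definition of arcs, is that if $((t_1,S_1),(t_2,S_2)) \in A$ with $t_1$ a child of $t_2$, then $v \in S_1 \setminus S_2$ implies $v \in V_{t_1}$ and $v \in S_2 \setminus S_1$ implies $v \notin V_{t_1}$: every $e \ni v$ in the witness $H'$ contributing to $S_1 \setminus S_2$ must satisfy $(v,e) \in B_{t_1}$ and $(v,e) \notin B_{t_2}$, which forces the $T$-path from $\delta(v)$ to $\delta(e)$ to pass through $t_1$ but not the parent $t_2$, hence to stay inside $V_{t_1}$. If some realisation $A'$ violates (iii) for a variable $v$, I extract a shortest $T_{A'}$-path $u_0, \ldots, u_m$ with $v \in S_{u_0} \cap S_{u_m}$ and $v \notin S_{u_i}$ for $0 < i < m$ (which, by flatness, coincides with the corresponding $T$-path), and apply the key lemma to the first and last arcs at which the membership of $v$ flips. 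In each of the three subcases (the path is strictly upward, strictly downward, or splits at an internal $T$-LCA $u_k$), the resulting constraints simultaneously force $v$ into two incompatible $V$-sets --- either nested with opposite inclusions, or into the disjoint child subtrees $V_{u_{k-1}}$ and $V_{u_{k+1}}$ --- a contradiction. Polynomial-time construction follows from the bound $|\lat{H|_{B_t}}| \leq |H|^{O(k)}$, which keeps both the sub-bag set and the arc enumeration (with witnesses of size $O(k)$) polynomial in $\Vert H \Vert$ for fixed $k$.

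For $\mimw{H} \leq 4\,\fpw{H}$: given a flat point decomposition $(T,(B_t),A)$ of width $k$, applying condition (iii) to $A[H]$ and to $A[\{e\}]$ for each $e$ shows that the sets $D_v := \{t : v \in V(H|_{B_t})\}$ and $\{t : (v,e) \in B_t\}$ are connected subtrees of $T$. I build a branch decomposition of $\inc{H}$ by using $T$ as the backbone, attaching $\ell_e$ as a child of the node $t_e$ provided by (i) for each $e$, attaching $\ell_v$ as a child of the top node of $D_v$ for each $v$, and binarising by standard subdivisions. At any cut at an internal node $t^*$, an induced matching $M$ of $\inc{H}[V_{t^*},\bar{V_{t^*}}]$ splits into two direction-based parts; for each matching pair $(v_i,e_i)$, the $T$-path between the two attachment points stays inside $D_{v_i}$ and is forced to cross $t^*$, so $v_i \in V(H|_{B_{t^*}})$. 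Combining the induced property of $M$ with a further application of (iii) to the subhypergraph $\{e_i\}_{i \in M}$ --- which one uses to locate each relevant point in $B_{t^*}$ --- yields an induced matching in $\inc{H|_{B_{t^*}}}$, whose size is bounded by $\bcovernum{H|_{B_{t^*}}} \leq k$ (since any induced matching of size $m$ in the incidence graph of a hypergraph forces a subhypergraph of cover at least $m$); summing over both crossing directions gives $|M| \leq 4k$. The main obstacle is the LCA subcase of condition (iii) in the forward direction, together with the careful extraction of an induced matching within the bag-restricted incidence graph in the reverse direction, both of which rely on propagating the connectivity provided by (iii) down to a concrete placement of $\delta(v)$ within specific branch-decomposition subtrees.
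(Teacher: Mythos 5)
Your proof takes a genuinely different route from the paper, which factors through several intermediate notions (simplified point-width, the point graph $\pointgraph{H}\cong L(\inc{H})$, $\alpha^2$-width, and $\text{mon-}\alpha^2$-width). You instead construct the decomposition and the branch decomposition directly. This is a legitimate and appealing collapse of the paper's chain, and in the direction $\fpw{H}\le 2\cdot\mimw{H}$ it works: your bags $B_t = \{(v,e) : t \text{ lies on the $T$-path between } \delta(v) \text{ and } \delta(e)\}$ are precisely the composition of the bags constructed in Propositions~\ref{prop:mim-to-alpha} and~\ref{prop:crz-point}, decomposability is indeed vacuous for flat $\T$-structures, the three conditions of Definition~\ref{def:point} hold for the reasons you give, and the $2k$ width bound argument is essentially correct (only the stated partition criterion is imprecise --- $\delta(v_e)$ may lie outside $V_t$ entirely, so one should partition by which of the children's cuts $V_{c_1},V_{c_2}$ the pair $\{\delta(v_e),\delta(e)\}$ crosses, not by which subtree contains $v_e$; with that repair the induced-matching argument goes through).

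The reverse direction $\mimw{H}\le 4\cdot\fpw{H}$ has a genuine gap. You attach $\ell_v$ at the top node of $D_v$, but to extract an induced matching in $\inc{H|_{B_{t^*}}}$ you ultimately need $(v_i,e_i)\in B_{t^*}$; all your argument establishes is $t^*\in D_{v_i}$, i.e.\ \emph{some} point $(v_i,g_i)\in B_{t^*}$, and $g_i$ need not equal $e_i$. Concretely, consider $H=\{e_1,e_2\}$ with $e_1=\{v,w_1\}$, $e_2=\{v,w_2\}$, $T$ the path $t_0\mbox{--}t_1\mbox{--}t_2$ rooted at $t_0$, and $B_{t_1}=\points{\{e_1\}}$, $B_{t_2}=\points{\{e_2\}}$, $B_{t_0}=\emptyset$: then the top of $D_v$ is $t_1$, yet $(v,e_2)\notin B_{t_1}$, so for a matching pair $(v,e_2)$ the point you would need is not in the bag at the candidate cut node. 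The paper avoids this by working in the \emph{subdivided} tree (Proposition~\ref{prop:crz-point}), placing $\ell_v$ at a node whose bag contains the whole clique $\{(v,e):e\ni v\}$ (which is guaranteed by the tree-decomposition property of the point graph, but may be a subdivision node), and then using the bipartite structure of $\inc{H}$ together with monotone $\alpha^2$ in Proposition~\ref{prop:alpha-to-mim}. Your appeal to ``a further application of (iii) to $\{e_i\}_{i\in M}$'' does not repair this, because condition (iii) controls connectivity of $\{t:(v_i,e_i)\in B_t\}$ but gives no reason for $t^*$ to lie in that set when the top of $D_{v_i}$ does not. There is also an arithmetic inconsistency in the final step: if each direction-part were bounded by $\bcovernum{H|_{B_{t^*}}}\le k$, the conclusion would be $|M|\le 2k$, not $4k$, which is a hint that the argument is not extracting the quantity you think it is (the paper's analysis gives $2k$ per direction, hence $4k$ total). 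To fix the reverse direction along your lines, you would either need to subdivide $T$ and attach $\ell_v$ at a node whose bag actually collects the whole star of $v$, or argue directly via the bipartite-distance-2 structure as the paper does rather than via an induced matching inside a single bag-restricted incidence graph.
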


Note how Theorem \ref{theo:mim-fpw} directly implies Theorem \ref{theo:main-mim}. 
In order to prove Theorem \ref{theo:mim-fpw}, 
we present several notions of width and show that they are equivalent modulo constant factors. 
As an intermediate step, we show a 
characterisation of the MIM-width of a bipartite graph in terms of its line graph. 
This characterisation of MIM-width and the one from Theorem \ref{theo:mim-fpw} may be of independent interest. 

\subsection{A characterisation of the MIM-width of bipartite graphs}

A \emph{tree decomposition} of a graph $G$ is a pair $(T, (B_t)_{t \in V(T)})$, where $T$ is a tree and each bag $B_t$ is a subset of $V(G)$ such that 
\begin{enumerate}
\item[(i)] $V(G)=\bigcup_{t\in V(T)} B_t$, 
\item[(ii)] for each edge $\{u,v\}\in E(G)$, there exists $t \in V(T)$ such that $\{u,v\} \subseteq B_t$, and 
\item[(iii)] for each $v \in V(G)$ the set $\{ t \in V(T) : v \in B_t\}$ induces a connected subtree of $T$. 
\end{enumerate}
For any function $f:2^{V(G)}\to \qplus$, we define the $f$-width of the decomposition $(T, (B_t)_{t \in V(T)})$ to be the maximum $f(B_t)$, taken over all $t\in V(T)$, 
and the $f$-width of the graph $G$ to be the minimum $f$-width over all its tree decompositions.  
For instance, the standard notion of treewidth~\cite{Robertson84:minors3} corresponds to $s$-width, where $s(X)=|X|-1$, for every $X\subseteq V(G)$.

For a graph $G$, we say that a set $U\subseteq V(G)$ is a \emph{distance-$2$ independent set} 
if for every pair of distinct nodes $u,v\in U$, there is no path from $u$ to $v$ in $G$ of length at most $2$, where the length of a path is the number of edges. 
We denote by $\alpha^2(G)$ the maximum size of a distance-$2$ independent set in $G$. 
For $G$, we define the function $\alpha^2_G:2^{V(G)}\to \qplus$ as $\alpha^2_G(X):=\alpha^2(G[X])$, for every $X\subseteq V(G)$. 
(Recall that $G[X]$ denotes the subgraph of $G$ induced by $X$, i.e., $G[X]=(X,\{\{u,v\}\in E(G): u,v\in X\})$.) 
We also consider the function $\text{mon-}\alpha^2_G:2^{V(G)}\to \qplus$ defined by 
$\text{mon-}\alpha^2_G(X):=\min\{\alpha^2_G(Y): X\subseteq Y\subseteq V(G)\}$, for every $X\subseteq V(G)$. 

\begin{observation}
For a graph $G$, we have the following:
\begin{itemize}
\item $\alpha^2_G$ is \emph{subadditive}, i.e., $\alpha^2_G(X\cup Y)\leq \alpha^2_G(X)+\alpha^2_G(Y)$, for all $X,Y\subseteq V(G)$.
\item $\text{mon-}\alpha^2_G(X)\leq \alpha^2_G(X)$, for all $X\subseteq V(G)$. 
\item $\text{mon-}\alpha^2_G$ is \emph{monotone} (unlike $\alpha^2_G$), i.e., 
$\text{mon-}\alpha^2_G(X)\leq \text{mon-}\alpha^2_G(Y)$, if $X\subseteq Y\subseteq V(G)$.
\end{itemize}
\end{observation}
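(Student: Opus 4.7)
All three statements are essentially immediate from the definitions, so I would prove them in sequence with short arguments.

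\textbf{Subadditivity of $\alpha^2_G$.} Let $U \subseteq X \cup Y$ be a maximum distance-$2$ independent set in $G[X \cup Y]$, so that $|U| = \alpha^2_G(X \cup Y)$. Split $U$ as $U = (U \cap X) \cup (U \cap Y)$. The key observation is that $U \cap X$ is a distance-$2$ independent set in $G[X]$: any path in $G[X]$ between two vertices of $U \cap X$ would also be a path in $G[X \cup Y]$, contradicting the distance-$2$ independence of $U$. Hence $|U \cap X| \leq \alpha^2_G(X)$, and by the same reasoning $|U \cap Y| \leq \alpha^2_G(Y)$. Since $U \subseteq X \cup Y$, we have $|U| \leq |U \cap X| + |U \cap Y|$, giving the desired bound.

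\textbf{Bound $\text{mon-}\alpha^2_G(X) \leq \alpha^2_G(X)$.} This is immediate by taking $Y = X$ in the minimum defining $\text{mon-}\alpha^2_G(X)$.

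\textbf{Monotonicity of $\text{mon-}\alpha^2_G$.} If $X \subseteq Y \subseteq V(G)$, then every $Z$ with $Y \subseteq Z \subseteq V(G)$ also satisfies $X \subseteq Z \subseteq V(G)$. Thus the minimum in the definition of $\text{mon-}\alpha^2_G(X)$ is taken over a superset of the corresponding set for $Y$, yielding $\text{mon-}\alpha^2_G(X) \leq \text{mon-}\alpha^2_G(Y)$.

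The main (and essentially only) subtlety is in the subadditivity argument, where one must verify that a distance-$2$ independent set survives the restriction from $G[X \cup Y]$ to the induced subgraphs $G[X]$ and $G[Y]$; this works because removing vertices and their incident edges cannot create new short paths. The other two items are purely set-theoretic consequences of the definition of $\text{mon-}\alpha^2_G$ as a minimum over supersets.
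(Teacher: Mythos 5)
Your proof is correct. The paper states this observation without proof (it is offered as a direct consequence of the definitions), and your three arguments are precisely the intended ones: subadditivity follows because $G[X]$ and $G[Y]$ are subgraphs of $G[X\cup Y]$, so any distance-$2$ independent set of the latter restricts to distance-$2$ independent sets of the former; the second item follows by taking $Y=X$ in the defining minimum; and monotonicity is the set-theoretic fact that shrinking $X$ enlarges the family of supersets over which the minimum is taken. No gaps.
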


We are particularly interested in the notions of $\alpha^2_G$-width and $\text{mon-}\alpha^2_G$-width for a graph $G$, 
which we denote by $\alpha^2\text{-w}(G)$ and $\text{mon-}\alpha^2\text{-w}(G)$, 
respectively. 
For a graph $G$, we define the \emph{line graph} of $G$, denoted by $L(G)$, to be the graph with vertex set $E(G)$ such that 
$\{e,f\}$ is an edge in $L(G)$, where $e,f\in E(G)$ and $e\neq f$, if $e$ and $f$ share a common vertex. 

\begin{observation}
\label{obs:mimw-line}
Let $G$ be a graph. Every induced matching in $G$ is a distance $2$-independent set in $L(G)$ and vice versa. 
In particular, $\mim{G}=\alpha^2(L(G))$. 
\end{observation}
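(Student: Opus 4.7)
The plan is to prove the observation by showing the two claimed directions directly, translating between the defining conditions of induced matchings in $G$ and distance-$2$ independent sets in $L(G)$. The equality $\mim{G}=\alpha^2(L(G))$ will then follow immediately by taking maxima on both sides of the set-theoretic identity.

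First I would unpack what it means for distinct edges $e,f \in E(G)$ to be at distance greater than $2$ in $L(G)$: by the definition of $L(G)$, this is equivalent to saying that (a) $e$ and $f$ share no vertex (no path of length $1$), and (b) there is no $g \in E(G) \setminus \{e,f\}$ such that $g$ shares a vertex with $e$ and another vertex with $f$ (no path of length $2$). Matching this against the definition of induced matching, condition (a) corresponds exactly to the matching condition, while (b) will correspond to the ``induced'' condition.

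For the forward direction, assume $M$ is an induced matching. Condition (a) is immediate since $M$ is a matching. For (b), suppose some $g = \{u,v\} \in E(G) \setminus \{e,f\}$ witnessed a length-$2$ path between $e,f \in M$, with $u \in e$ and $v \in f$. I would first argue $g \notin M$: otherwise $g$ and $e$ would both lie in $M$ and share vertex $u$, contradicting that $M$ is a matching. Then $g \in E(G) \setminus M$ has both endpoints in $\bigcup_{\{u',v'\} \in M}\{u',v'\}$, contradicting the induced condition. For the reverse direction, assume $U \subseteq E(G)$ is a distance-$2$ independent set in $L(G)$. Condition (a) gives that $U$ is a matching. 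If $U$ were not induced, there would exist $g = \{u,v\} \in E(G) \setminus U$ with $u \in e_1$ and $v \in e_2$ for some $e_1,e_2 \in U$; the only subtle point is to rule out $e_1=e_2$, which I would do by noting that $e_1=\{u,v\}=g$ would contradict $g \notin U$. Hence $e_1 \neq e_2$, and $(e_1,g,e_2)$ is a length-$2$ path in $L(G)$, a contradiction.

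Neither direction presents a real obstacle; the only care needed is to confirm that the auxiliary edge $g$ appearing in the distance-$2$ argument is not itself in $M$ (respectively, that $e_1 \neq e_2$ in the reverse direction), both of which follow from elementary case analysis. Once the bijection between induced matchings and distance-$2$ independent sets is established, the equality $\mim{G} = \alpha^2(L(G))$ is immediate.
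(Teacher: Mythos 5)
Your proof is correct, and the paper itself gives no proof of this observation (it is treated as routine), so there is no alternative argument to compare against; the direct unpacking of definitions you give is exactly what the authors leave implicit. The two small case distinctions you flag ($g \notin M$ in the forward direction, $e_1 \neq e_2$ in the reverse) are precisely the points that need a sentence each, and you handle them correctly.
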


Below we show that for bipartite graphs, the MIM-width and the $\alpha^2$-w (and also $\text{mon-}\alpha^2$-w) of the line graph are equivalent, 
modulo constant factors. The proof is an adaptation of the classical equivalence
between treewidth and \emph{branchwidth}~\cite{Robertson91:minorsX}. 

\begin{proposition}
\label{prop:mim-to-alpha}
For every graph $G$, we have $\alpha^2\text{-w}(L(G))\leq 2\cdot \mimw{G}$.
\end{proposition}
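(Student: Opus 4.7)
The plan is to reuse the branch tree as the decomposition tree. Let $(T,\delta)$ be a branch decomposition of $G$ of MIM-width $k$. For every node $t\in V(T)$ I would define
\[
    B_t \defeq \bigl\{\{u,v\}\in E(G) : t \text{ lies on the path } P_{uv} \text{ in } T \text{ from } \delta(u) \text{ to } \delta(v)\bigr\}.
\]
For a fixed edge $e=\{u,v\}\in V(L(G))$, the set of bags containing $e$ is precisely $V(P_{uv})$, which is a connected subtree of $T$; and if $\{e,f\}\in E(L(G))$ with common endpoint $w\in V(G)$, then $\delta(w)\in V(P_e)\cap V(P_f)$, so $e,f\in B_{\delta(w)}$. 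This immediately gives that $(T,(B_t)_{t\in V(T)})$ is a tree decomposition of $L(G)$.

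It then remains to show that $\alpha^2_{L(G)}(B_t)\leq 2k$ for every $t\in V(T)$. First, $L(G)[B_t]=L(G'_t)$ where $G'_t$ is the spanning subgraph of $G$ with edge set $B_t$, so by Observation~\ref{obs:mimw-line} it suffices to bound $\mim{G'_t}\leq 2k$. If $t$ is a leaf with $\delta^{-1}(t)=v$, then $G'_t$ is a star centred at $v$ and $\mim{G'_t}\leq 1\leq k$. Otherwise $t$ is internal with children $t_1,t_2$, and a quick case analysis on the endpoints of an edge in $B_t$ (either they are separated by the $V_t$-cut, or they are separated by the $V_{t_1}/V_{t_2}$-cut inside $V_t$) yields
\[
    B_t \;=\; E\bigl(G[V_{t_1},V_{t_2}]\bigr)\ \cup\ E\bigl(G[V_t,V(G)\setminus V_t]\bigr),
\]
where the second term vanishes when $t$ is the root.

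The key step is then to split any induced matching $M$ of $G'_t$ as $M=M_1\cup M_2$ according to which of the two cuts each edge belongs to, and to verify that each $M_i$ is an induced matching in its ambient bipartite graph. Indeed, any edge $e'$ of, say, $G[V_t,V(G)\setminus V_t]$ whose two endpoints lie in $V(M_1)$ automatically lies in $B_t=E(G'_t)$; since $M$ is induced in $G'_t$, $e'$ must already belong to $M$, and being across this cut it belongs to $M_1$. The argument for $M_2$ is symmetric. Combining this with the definition of MIM-width yields $|M|=|M_1|+|M_2|\leq 2k$. The main obstacle, and the reason for the factor $2$, is precisely this preservation of induced-matching status under restriction to each cut: it relies on the fact that $L(G)[B_t]$ sees exactly the $G$-adjacencies among edges of $B_t$, so that any ``extra'' cross-cut edge would force itself into $M$, contradicting the induced property.
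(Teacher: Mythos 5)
Your proof is correct and takes essentially the same approach as the paper's: your path-based bags coincide with the paper's $B_t=C_t\cup(C_{t_1}\cap C_{t_2})$ (where $C_t$ denotes the edges of $G[V_t,V(G)\setminus V_t]$), and your explicit split of an induced matching into $M_1\cup M_2$ is exactly the content of the paper's appeal to subadditivity of $\alpha^2_{L(G)}$, while your reading of the bags as ``edges whose tree-path passes through $t$'' gives a noticeably cleaner verification of the tree decomposition conditions than the paper's two auxiliary properties and case analysis. The only step you elide is why $|M_2|\leq k$: the bipartite graph $G[V_{t_1},V_{t_2}]$ is not itself one of the cut graphs appearing in the definition of MIM-width, so one must observe (as the paper does) that it is an induced subgraph of $G[V_{t_1},V(G)\setminus V_{t_1}]$ and that $\mim$ is monotone under induced subgraphs.
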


\begin{proof}
Given a branch decomposition $(T,\delta)$ of $G$ of MIM-width $k$, we define a tree decomposition of $L(G)$ of $\alpha^2_{L(G)}$-width at most $2k$. 
Recall that for a node $t\in V(T)$, we denote by $T_t$ the subtree of $T$ rooted at $t$ 
and by $V_t$ the set $\{\delta^{-1}(\ell): \text{$\ell$ is a leaf of $T_t$}\}$. 
The underlying tree of our sought decomposition is $T$ itself. 
For $t\in V(T)$, we define $C_t$ to be the set of edges of $G$ appearing in the bipartite graph $G[V_t,V(G)\setminus V_t]$. 
Now we define $B_t$ to be $B_t:=C_t$, if $t\in V(T)$ is a leaf of $T$, and $B_t:=C_t\cup (C_{t_1}\cap C_{t_2})$, otherwise, 
where $t_1$ and $t_2$ are the two children of $t$ in $T$. We claim that $(T, (B_t)_{t \in V(T)})$ satisfies the required conditions.  

For condition (i) of tree decompositions, for every $e=\{u,v\}\in E(G)=V(L(G))$, we have $e\in B_{\delta(u)}$. 
For condition (ii), if $\{e,f\}\in E(L(G))$ and $e\cap f=\{u\}$, then we have $\{e,f\}\subseteq B_{\delta(u)}$. 
In order to prove condition (iii), we show the following properties: 

\begin{enumerate}
\item Suppose $e\in E(G)=V(L(G))$ and $t,t',t''\in V(T)$ are distinct nodes such that $t$ is a descendent of $t'$, $t''$ belongs the (unique) path in $T$ from $t$ to $t'$, 
and $e\in C_{t}\cap C_{t'}$. Then $e\in C_{t''}$. 
\item Suppose $e\in E(G)=V(L(G))$ and $t,t',s\in V(T)$ are distinct nodes such that $t$ and $t'$ are incomparable in $T$, $s$ is the least common ancestor of $t$ and $t'$ in $T$, 
and $e\in C_{t}\cap C_{t'}$. Then $e\in C_{s_1}\cap C_{s_2}$, where $s_1$ and $s_2$ are the two children of $s$ in $T$. 
\end{enumerate}

For property 1), suppose $e=\{u,v\}$, and note that by definition of the $C_t$'s, one endpoint of $e$ belongs to $V_t$, say $u$, and the other endpoint $v$ is in $V(G)\setminus V_{t'}$. 
In particular, $u\in V_{t''}$ and $v\in V(G)\setminus V_{t''}$, and hence $e\in C_{t''}$. 
For property 2), let $e=\{u,v\}$ and note again, by definition of the $C_t$'s, that one endpoint of $e$ belongs to $V_t$, say $u$, and the other endpoint $v$ belongs to $V_{t'}$. 
Then if $s_1$ is the ancestor of $t$, we have $u\in V_{s_1}$ and $v\in V(G)\setminus V_{s_1}$, and therefore $e\in C_{s_1}$. Similarly for $s_2$ and $t'$. 

Now for condition (iii), let $e\in E(G)$ and $t,t',t''$ be distinct nodes in $T$ such that $t''$ belongs to the (unique) path in $T$ from $t$ to $t'$ and $e\in B_{t}\cap B_{t'}$. 
We start with the case when $t$ is a descendent of $t'$ (the case when $t'$ is a descendent of $t$ is analogous). 
Assume first that $e\in C_{t'}\subseteq B_{t'}$. 
We obtain that $e\in C_{t''}\subseteq B_{t''}$, by applying property 1) to $t', t''$ and either $t$ (if $e\in C_t$) or a child of $t$ (if $e\in B_t\setminus C_t$). 
Suppose now that $e\in B_{t'}\setminus C_{t'}$. If $t''$ is a child of $t'$, then $e\in C_{t''}\subseteq B_{t''}$ and we are done. 
Otherwise, if $t'_1$ is the child of $t'$ that is ancestor of $t''$, we obtain $e\in C_{t''}\subseteq B_{t''}$ by applying property 1) to $t'_1$, $t''$ and either $t$ or a child of $t$. 

For the case when $t$ and $t'$ are incomparable, 
we let $s\in V(T)$ be the least common ancestor of $t$ and $t'$ in $T$. 
We obtain that $e\in C_{s_1}\cap C_{s_2}\subseteq B_{s}$, where $s_1$ and $s_2$ are the two children of $s$, 
by applying property 2) to $s$, either $t$ or one of its child, and either $t'$ or one of its child (depending on whether $e\in C_{t}$ and $e\in C_{t'}$, respectively). 
If $t''\neq s$, we can apply the previous case and obtain that $e\in B_{t''}$ as required. 

It remains to bound the $\alpha^2_{L(G)}$-width of $(T, (B_t)_{t \in V(T)})$. 
If $t\in V(T)$ is a leaf of $T$, then $\alpha^2_{L(G)}(B_t)=1$. 
Otherwise let $t_1,t_2$ be the two children of $t$ in $T$. 
By Observation \ref{obs:mimw-line}, we have $\alpha^2_{L(G)}(C_t)\leq k$. 
By subadditivity, we have that $\alpha^2_{L(G)}(B_t)\leq \alpha^2_{L(G)}(C_t)+\alpha^2_{L(G)}(C_{t_1}\cap C_{t_2})$. 
Observe that $C_{t_1}\cap C_{t_2}=E(G[V_{t_1},V_{t_2}])$ (in particular, $L(G)[C_{t_1}\cap C_{t_2}]=L(G[V_{t_1},V_{t_2}])$). 
By Observation \ref{obs:mimw-line}, $\alpha^2_{L(G)}(C_{t_1}\cap C_{t_2})= \mim{G[V_{t_1},V_{t_2}]}$, 
and since $G[V_{t_1},V_{t_2}]$ is an induced subgraph of $G[V_{t_1},V(G)\setminus V_{t_1}]$, we have $\mim{G[V_{t_1},V_{t_2}]}\leq \mim{G[V_{t_1},V(G)\setminus V_{t_1}]}\leq k$. 
We obtain that $\alpha^2_{L(G)}(B_t)\leq 2k$ as required. 
\end{proof}

\begin{proposition}
\label{prop:alpha-to-mim}
For every bipartite graph $G$, we have $\mimw{G}\leq 2\cdot \text{mon-}\alpha^2\text{-w}(L(G))$.
\end{proposition}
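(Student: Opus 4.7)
The plan is to mimic the classical Robertson-Seymour translation from tree decompositions to branch decompositions, here adapted to go from a tree decomposition of $L(G)$ to a branch decomposition of $G$. Given $(T,(B_t)_{t\in V(T)})$ a tree decomposition of $L(G)$ with $\text{mon-}\alpha^2_{L(G)}$-width at most $k$, observe that for every $v \in V(G)$ the set $\text{edges}(v) := \{e \in E(G) : v \in e\}$ is a clique of $L(G)$, so by the Helly property of subtrees in a tree it is contained in some common bag $B_{\mu(v)}$; pick such a $\mu(v)$. From $T$ I build a branch decomposition $(T',\delta)$ of $G$ by (i) binarizing each node $t$ of $T$, replacing it by a binary sub-tree whose new internal nodes all carry the bag $B_t$; (ii) attaching a new pendant leaf $\delta(v)$ adjacent to (a copy of) $\mu(v)$ for each $v \in V(G)$; (iii) pruning old leaves that are not of the form $\delta(v)$; (iv) suppressing degree-2 internal nodes; and (v) rooting arbitrarily.

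For the width bound, fix any $t \in V(T')$ and let $M$ be an induced matching of $G[V_t,\,V(G)\setminus V_t]$. If the parent edge of $t$ is a pendant attaching some $\delta(v)$, then $V_t = \{v\}$ and $|M|\leq 1$. Otherwise, tracing the cut through the binarization and suppression yields a node $t^*\in V(T)$ with the following property: for every $e=\{u,v\}\in M$ with $u\in V_t$ and $v\in V(G)\setminus V_t$, the nodes $\mu(u)$ and $\mu(v)$ lie on opposite sides of the cut and the $T$-path between them passes through $t^*$; since $e\in B_{\mu(u)}\cap B_{\mu(v)}$, condition (iii) of tree decompositions gives $e\in B_{t^*}$, so $M\subseteq B_{t^*}$. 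Now I exploit bipartiteness: writing $V(G)=U_1\cup U_2$ for the bipartition of $G$, split $M=M_1\cup M_2$, where $M_i$ contains the edges of $M$ whose $V_t$-endpoint lies in $U_i$. Each $M_i$ is actually an induced matching in $G$ itself, because a chord between endpoints of distinct $M_i$-edges is either entirely inside $U_1$ or inside $U_2$ (impossible since $G$ is bipartite) or is an edge of $G[V_t,\,V(G)\setminus V_t]$ (ruled out since $M$ is an induced matching there). By Observation~\ref{obs:mimw-line}, $M_i$ is a distance-2 independent set in $L(G)$; being contained in $B_{t^*}\subseteq B_{t^*}^+$ for an enlargement $B_{t^*}^+$ witnessing $\text{mon-}\alpha^2_{L(G)}(B_{t^*})\leq k$, it remains distance-2 independent in the induced subgraph $L(G)[B_{t^*}^+]$, so $|M_i|\leq \alpha^2(L(G)[B_{t^*}^+])\leq k$. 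Hence $|M|=|M_1|+|M_2|\leq 2k$.

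The main obstacle is precisely tracking which bag of $T$ ``covers'' each cut of $T'$ after all of the binarization, pruning and suppression bookkeeping; this is handled uniformly by assigning the bag $B_t$ to every auxiliary node produced while binarizing $t$, so that every edge of the modified tree inherits a well-defined witness $t^*\in V(T)$, and noting that suppressing degree-2 nodes and attaching pendants do not alter the induced cuts. The factor of $2$ in the bound is essential to the bipartite-splitting trick: the induced-matching condition for $G[V_t,\,V(G)\setminus V_t]$ is strictly weaker than for $G$ whenever $V_t$ cuts across the bipartition, and it is only after decomposing $M$ into $M_1$ and $M_2$ that each piece becomes a genuine $G$-induced matching to which Observation~\ref{obs:mimw-line} applies.
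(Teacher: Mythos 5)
Your proposal is correct and follows essentially the same route as the paper: normalize the tree decomposition of $L(G)$ so each vertex $v$ of $G$ gets its own leaf (using that $\{e : v \in e\}$ is a clique in $L(G)$), read off a branch decomposition of $G$, observe that each cut's induced matching lies in a single bag, split that matching along the bipartition of $G$ into two pieces each distance-2 independent in $L(G)$, and bound each piece via $\text{mon-}\alpha^2$. The only differences are cosmetic bookkeeping details (e.g.\ you suppress degree-2 nodes and trace cuts back to a witness $t^*$, whereas the paper modifies the tree so the bags travel with it directly).
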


\begin{proof}
Let $G$ and $(T, (B_t)_{t \in V(T)})$ be a tree decomposition of $L(G)$ of $\text{mon-}\alpha^2_{L(G)}$-width $k$. 
We can assume that $T$ is a binary rooted tree and that there is a bijection $\delta$ from $V(G)$ to the leaves of $T$ 
such that $B_{\delta(v)}=\{\{v, w\}\in E(G): w\in V(G)\}$, for every $v\in V(G)$. 
To see this, we start by rooting $(T, (B_t)_{t \in V(T)})$ arbitrarily. For each $v\in V(G)$, the set $\{\{v, w\}\in E(G): w\in V(G)\}$ is a clique in $L(G)$, 
and hence there exists $t\in V(T)$ such that $\{\{v, w\}\in E(G): w\in V(G)\}\subseteq B_t$ (note that $t$ is not necessarily unique). We add a fresh leaf $\delta(v)$ to $T$ as a child of $t$ and we let 
$B_{\delta(v)}:=\{\{v, w\}\in E(G): w\in V(G)\}$. 
After this, we iteratively remove all leaves of $T$ that are not of the form $\delta(v)$. 
Since $\text{mon-}\alpha^2_{L(G)}(B_{\delta(v)})=1$, for every $v\in V(G)$, the width of the resulting decomposition is at most $k$. 
Finally, if a node $t$ has $\ell$ children $t_1,\dots,t_\ell$ with $\ell>2$, we force $t$ to have only two children $t_1$ and $t'$, 
where $t'$ is a fresh node with $B_{t'}:=B_t$ and with children $t_2,\dots t_\ell$. 
By applying this modification iteratively, we obtain a rooted binary tree as required. 

We claim that $(T,\delta)$ is a branch decomposition of $G$ of MIM-width at most $2k$. 
Fix $t\in V(T)$. We have that $E(G[V_t, \overline{V}_t])\subseteq B_t$, where $\overline{V}_t:=V(G)\setminus V_t$. 
Indeed, for $e=\{u,v\}\in E(G[V_t, \overline{V}_t])$, 
we have $e\in B_{\delta(u)}\cap B_{\delta(v)}$, and by connectivity, $e\in B_t$. 
Let $V_1,V_2$ be independent sets partitioning $V(G)$ (recall that $G$ is bipartite). 
Let $M\subseteq E(G[V_t,\overline{V}_t])$ be a maximum size induced matching in $G[V_t,\overline{V}_t]$. 
Note that $M$ is the disjoint union of $M_1$ and $M_2$, where $M_1=M\cap E(G[V_t\cap V_1,\overline{V}_t\cap V_2])$ and 
$M_2=M\cap E(G[V_t\cap V_2,\overline{V}_t\cap V_1])$. 
Finally, observe that $M_1$ and $M_2$ are distance $2$-independent sets in $L(G)$ as $V_1$ and $V_2$ are independent sets in $G$. 
In particular, for each $i\in \{1,2\}$, $M_i$ is a distance $2$-independent set in $L(G)[Y]$ for every superset $B_t\subseteq Y$.  
This implies that $|M_i|\leq \text{mon-}\alpha^2_{L(G)}(B_t)$, for $i\in \{1,2\}$. Hence $|M|\leq 2k$. 
\end{proof}

By Propositions \ref{prop:mim-to-alpha} and \ref{prop:alpha-to-mim}, 
for every bipartite graph $G$, we have:

\[\frac{1}{2}\cdot \mimw{G}\leq \text{mon-}\alpha^2\text{-w}(L(G)) 
\leq \alpha^2\text{-w}(L(G))\leq 2\cdot \mimw{G}.\]

\begin{remark}
As in the case of treewidth, the widths $\alpha^2\text{-w}$ and $\text{mon-}\alpha^2\text{-w}$ can be related with other notions such as brambles and games. 
For instance, $\alpha^2\text{-w}$ and $\text{mon-}\alpha^2\text{-w}$ can be lower bounded by the (natural adaptation of the) \emph{bramble number}~\cite{Seymour93:graph}. 
Also, $\text{mon-}\alpha^2\text{-w}$ can be characterised in terms of the monotone version of the \emph{cops and robber game}~\cite{Seymour93:graph} (this is the reason why we work explicitly with $\text{mon-}\alpha^2\text{-w}$ in the first place). 
Now the cops are not restricted to play on a set $X$ of size $k$, but on a set $X$ with $\text{mon-}\alpha^2\text{-w}(X)\leq k$. 
The minimum $k$ for which the cops can win the game in a monotone way is precisely 
the $\text{mon-}\alpha^2\text{-w}$ (this follows for instance from~\cite[Theorem 2.2.12 and Remark 2.1.18]{Adler06:phd}). 
Hence these connections could be used to obtain bounds on the mimw of bipartite graphs. 

\end{remark}

\subsection{Proof of Theorem \ref{theo:mim-fpw}}

We now show the equivalence of fpw and mimw. Let us start with a definition. 

\begin{definition}[Simplified point decomposition]
\label{def:spw}
A \emph{simplified} point decomposition of a hypergraph $H$ is a pair $(T, (B_t)_{t \in \vertices{T}})$ where $T$ is a rooted tree, each set $B_t\subseteq \points{H}$ is a set of points of $H$ and 
\begin{enumerate}
\item[(1)] For every edge $e \in H$, there exists $t \in \vertices{T}$ such that $\points{\{e\}}=\{(v,e):v\in e\} \subseteq B_t$.
\item[(2)] For every subhypergraph $H'$ of $H$, and $v\in \vertices{H'}$, the set 
$\{ t \in \vertices{T} : v\in \vertices{H'|_{B_t}} \}$
induces a connected subtree of $T$.
\end{enumerate}
\end{definition}

As before, the width of a simplified point decomposition $(T, (B_t)_{t \in \vertices{T}})$ is $\max_{t \in \vertices{T}}\bcovernum{H|_{B_t}}$, and the simplified point-width of $H$, denoted by $\spw{H}$, is the minimum width over all its simplified point decompositions.

\begin{proposition}
\label{prop:simplified-crz}
For every hypergraph $H$, we have $\fpw{H}=\spw{H}$. 
\end{proposition}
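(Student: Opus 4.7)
The plan is to prove both inequalities $\spw{H} \leq \fpw{H}$ and $\fpw{H} \leq \spw{H}$ with constructions preserving the family of bags.

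For $\spw{H} \leq \fpw{H}$ I would take a flat point decomposition $(T,(B_t)_{t \in \vertices{T}},A)$ of width $k$ and simply discard $A$; the pair $(T,(B_t)_{t \in \vertices{T}})$ clearly has the same width. Condition~(1) of a simplified point decomposition is identical to condition~(i), so the only nontrivial check is condition~(2). For a subhypergraph $H'$, condition~(ii) of point decompositions provides the realisation $A[H']_\emptyset$, and condition~(iii) yields connectivity of $\{t \in \vertices{T_{A[H']_\emptyset}} : v \in \vertices{H'|_{B_t}}\}$ in $T_{A[H']_\emptyset}$. My plan is to observe that flatness forces $\vertices{T_{A[H']_\emptyset}}$ to be closed under parents in $T$: a non-root sub-bag in $A[H']_\emptyset$ must carry an outgoing arc (otherwise it would be a second sink), and by flatness such an arc necessarily points to a sub-bag of its parent in $T$, which therefore lies in $\vertices{T_{A[H']_\emptyset}}$. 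Hence $T_{A[H']_\emptyset}$ is literally the induced subtree of $T$ on $\vertices{T_{A[H']_\emptyset}}$, and since any $t$ with $v \in \vertices{H'|_{B_t}}$ contributes the nonempty (hence unpruned) sub-bag $(t,\vertices{H'|_{B_t}})$ to $A[H']_\emptyset$, connectivity in $T_{A[H']_\emptyset}$ is exactly connectivity in $T$.

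For $\fpw{H} \leq \spw{H}$, given a simplified point decomposition $(T,(B_t)_{t \in \vertices{T}})$ I would define a $\T$-structure $A$ by placing an arc $((t_1,S_1),(t_2,S_2))$ exactly when $t_1$ is a child of $t_2$ in $T$ and some subhypergraph $H' \subseteq H$ witnesses both traces, i.e.\ $S_i = \vertices{H'|_{B_{t_i}}}$ for $i=1,2$. This is manifestly flat, and the width is unchanged. Condition~(i) is condition~(1), and condition~(ii) is immediate: for every subhypergraph $H'$, the subhypergraph $H'$ itself witnesses all the parent--child arcs of the trace chain $(t,\vertices{H'|_{B_t}})_{t \in \vertices{T}}$, so $A[H']_\emptyset$ is a directed tree with its unique sink at the root of $T$. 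Decomposability holds vacuously: if $(s_1,s),(s_2,s)$ were arcs into a common sub-bag $s$ at a node $t$ with $s_1,s_2$ at distinct nodes $t_1,t_2$, flatness would force $t_1,t_2$ to be two distinct children of $t$, hence siblings with disjoint descendant sets in $T$; but any node $t^*$ reaching $s_1$ or $s_2$ through a directed path in $A$ would have to be a descendant of $t_1$ or $t_2$ respectively, making it impossible to find a common $t^*$ as required by the hypothesis of Definition~\ref{def:decomp}.

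The hard part will be condition~(iii). I would argue by contradiction: supposing a realisation $A'$ and a vertex $v$ witness a failure, I would shrink to a path $s_0,s_1,\ldots,s_n$ in $T_{A'}$ with $n \geq 2$, $v \in S_{s_0} \cap S_{s_n}$, and $v \notin S_{s_i}$ for every $0 < i < n$. Each consecutive pair carries a witness subhypergraph $H_i$ with $\vertices{H_i|_{B_{s_j}}} = S_{s_j}$ for $j \in \{i,i+1\}$; using $v \in S_{s_0}$ and $v \in S_{s_n}$ I would select edges $e_0 \in H_0$ and $e_n \in H_{n-1}$ satisfying $(v,e_0) \in B_{s_0}$ and $(v,e_n) \in B_{s_n}$. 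Crucially, $v \notin S_{s_1} = \vertices{H_0|_{B_{s_1}}}$ forces $(v,e_0) \notin B_{s_1}$, and symmetrically $(v,e_n) \notin B_{s_{n-1}}$. Applying condition~(2) of the simplified decomposition to the two-edge subhypergraph $\{e_0,e_n\}$ constrains the entire $T$-path from $s_0$ to $s_n$ to lie in $\{t : (v,e_0) \in B_t \text{ or } (v,e_n) \in B_t\}$, forcing $(v,e_n) \in B_{s_1}$; a second application of condition~(2), this time to the single-edge subhypergraph $\{e_n\}$, yields a connected subtree of $T$ containing $s_1$ and $s_n$, and this subtree must pass through $s_{n-1}$, contradicting $(v,e_n) \notin B_{s_{n-1}}$. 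The delicate aspect is that the local witnesses $H_i$ along the path can be completely different subhypergraphs, so the contradiction cannot be extracted by working directly with the $H_i$, and instead must be obtained by invoking condition~(2) on carefully chosen subhypergraphs of one or two edges that let the occurrence of $v$ propagate along the path.
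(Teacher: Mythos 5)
Your proof is correct, and both directions use the same two constructions as the paper: for $\spw{H}\leq\fpw{H}$ you discard $A$ and use that $T_{A[H']_\emptyset}$ is an induced subtree of $T$ (with the observation that bags where $v$ appears are never pruned); for $\fpw{H}\leq\spw{H}$ you build $A$ from parent--child ``consistency'' witnesses. Conditions~(i), (ii) and decomposability you handle the same way (in fact, you verify decomposability more carefully than the paper, which treats it as immediate).

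The genuinely different part is the verification of condition~(iii) in the direction $\fpw{H}\leq\spw{H}$. The paper argues by induction along the bad path $t_0,\ldots,t_n$: it builds a cumulative union $H_i$ of the consecutive witnesses $H'_0,\ldots,H'_{i-1}$ and tracks the invariant ``$v\in\vertices{H_i|_{B_{t_0}}}$, $v\notin\vertices{H_i|_{B_{t_i}}}$, $S_i\subseteq\vertices{H_i|_{B_{t_i}}}$'', invoking condition~(2) at each step to propagate the non-occurrence of $v$. Your argument is more surgical: it uses only the two \emph{endpoint} witnesses to extract single edges $e_0,e_n$ with $(v,e_0)\in B_{s_0}\setminus B_{s_1}$ and $(v,e_n)\in B_{s_n}\setminus B_{s_{n-1}}$, then applies condition~(2) twice --- to the two-edge subhypergraph $\{e_0,e_n\}$ to force $(v,e_n)\in B_{s_1}$, and to the one-edge subhypergraph $\{e_n\}$ to reach the contradiction at $s_{n-1}$. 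This avoids the bookkeeping of the running union entirely and is cleaner; the one thing both arguments implicitly rely on, and you make explicit, is that since $A$ is flat, the $T_{A'}$-path $s_0,\ldots,s_n$ is also the unique $T$-path, which is what lets you apply condition~(2).
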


\begin{proof}
We start by showing $\fpw{H}\leq\spw{H}$. 
Let $(T, (B_t)_{t \in \vertices{T}})$ be a simplified point decomposition of $H$ of width $k$.  
We say that two sub-bags $(t,S)$ and $(t',S)$ with $t\neq t'$ are \emph{consistent} if there exists a subhypergraph $H'$ of $H$ such that $S=\vertices{H'|_{B_t}}$ and $S'=\vertices{H'|_{B_{t'}}}$. 
Consider the triple $(T, (B_t)_{t \in \vertices{T}}, A)$, where $((t,S),(t',S'))$ is an arc in $A$ if and only if $t'$ is the parent of $t$ in $T$ and, $(t,S)$ and $(t',S')$  
are consistent. We claim that $(T, (B_t)_{t \in \vertices{T}}, A)$ is a flat point decomposition of $H$, and hence $\fpw{H}\leq k$. 
Let $H'$ be a subhypergraph of $H$ and note that if $t'$ is the parent of $t$ in $T$ then 
there is an arc from $(t,\vertices{H'|_{B_t}})$ to $(t',\vertices{H'|_{B_{t'}}})$ in $A$ as they are consistent. 
Hence $A[H']_\emptyset$ (actually we have $A[H']_\emptyset=A[H']$) is a realisation of $A$.

Now let $A'$ be an arbitrary realisation of $A$. By definition of $A$, we have that the subtree $T_{A'}$ associated with $A'$ is 
actually a subtree of $T$ that contains the root. By contradiction, suppose the connectivity condition fails for some $v\in \bigcup_{(t,S)\in \vertices{A'}} S$. 
Then, there exists a sequence $(t_0,S_0),\dots, (t_n,S_n)$, with $n\geq 2$, such that (i) each $(t_i,S_i)\in \vertices{A'}$, (ii) $t_0,\dots, t_n$ is a path in $T$, and (iii) $v\in S_0\cap S_n$ but $v\notin S_i$, for $0<i<n$. 
We show by induction that for all $i\in\{1,\dots,n\}$, there exists a subhypergraph $H_i$ of $H$ such that $v\in \vertices{H_i|_{B_{t_0}}}$, $v\not\in \vertices{H_i|_{B_{t_i}}}$ and $S_i\subseteq \vertices{H_i|_{B_{t_i}}}$. 
In particular, $v\not\in \vertices{H_n|_{B_{t_n}}}$ and $S_n\subseteq \vertices{H_n|_{B_{t_n}}}$. This is a contradiction since $v\in S_n$.

For the base case, recall that by construction of $A$, $(t_0,S_0)$ is consistent with $(t_1,S_1)$, and similarly, $(t_1,S_1)$ with $(t_2,S_2)$.  
Hence, there are subhypergraphs $H'_0$ and $H'_1$ of $H$ such that $S_0=\vertices{H'_0|_{B_{t_0}}}$, $S_1=\vertices{H'_0|_{B_{t_1}}}=\vertices{H'_1|_{B_{t_1}}}$ and $S_2=\vertices{H'_1|_{B_{t_2}}}$. 
We define $H_1=H_0'\cup H'_1$. Then we have that $S_0\subseteq \vertices{H_1|_{B_{t_0}}}$ and $S_1=\vertices{H_1|_{B_{t_1}}}$. 
In particular, $v\in S_0\subseteq \vertices{H_1|_{B_{t_0}}}$, $v\not\in S_1= \vertices{H_1|_{B_{t_1}}}$ and $S_1\subseteq \vertices{H_1|_{B_{t_1}}}$, as required. 
For the inductive case, suppose we have $H_i$ with the desired properties, for $i\in\{1,\dots,n-1\}$. As $(t_i,S_i)$ and $(t_{i+1},S_{i+1})$ are consistent, there is a subhypergraph $H_i'$ of $H$ such that 
$S_i=\vertices{H'_i|_{B_{t_i}}}$ and $S_{i+1}=\vertices{H'_i|_{B_{t_{i+1}}}}$. We take $H_{i+1}=H_i\cup H_i'$. 
Note that $S_{i+1}\subseteq \vertices{H_{i+1}|_{B_{t_{i+1}}}}$ and $v\in\vertices{H_{i+1}|_{B_{t_0}}}$ (using the inductive hypothesis $v\in \vertices{H_i|_{B_{t_0}}}$). 
Observe that $\vertices{H_{i+1}|_{B_{t_{i}}}}=\vertices{H_{i}|_{B_{t_{i}}}}\cup S_{i}$. Since $v\not\in S_i$ and $v\not\in \vertices{H_{i}|_{B_{t_{i}}}}$ (by inductive hypothesis), we
 derive that $v\not\in \vertices{H_{i+1}|_{B_{t_{i}}}}$. Since $v\in\vertices{H_{i+1}|_{B_{t_0}}}$, it follows that $v\not\in\vertices{H_{i+1}|_{B_{t_{i+1}}}}$; 
 otherwise the connectivity condition (2) for simplified point decompositions would be violated for $H_{i+1}$. Hence $H_{i+1}$ satisfies all the required conditions.

For $\fpw{H}\geq\spw{H}$, let $(T, (B_t)_{t \in \vertices{T}}, A)$ be a flat point decomposition of $H$ of width $k$. 
We claim that $(T, (B_t)_{t \in \vertices{T}})$ is a simplified point decomposition of $H$, and the result follows. 
Let $H'$ be a subhypergraph of $H$. By definition of point decompositions, 
$A[H']_\emptyset$ is a realisation of $A$ and for every $v\in \vertices{H'}$, the set $\{t\in \vertices{T_{A[H']_\emptyset}}: v\in \vertices{H'|_{B_t}}\}$ 
induces a connected subtree of $T_{A[H']_\emptyset}$. 
For every $t\in V(T)\setminus V(T_{A[H']_\emptyset})$, we have $V(H'|_{B_t})=\emptyset$ and then 
$\{t\in \vertices{T_{A[H']_\emptyset}}: v\in \vertices{H'|_{B_t}}\}$ $=$ $\{t\in \vertices{T}: v\in \vertices{H'|_{B_t}}\}$. 
Since $T_{A[H']_\emptyset}$ must be a subtree of $T$, the latter set induces a connected subtree of $T$.  
Hence condition (2) of Definition~\ref{def:spw} (simplified point decompositions) holds. 
\end{proof}

\medskip

Observe how a simplified point decomposition of $H$ encodes tree decompositions for the subhypergraphs of $H$ without the need of a $\T$-structure, 
unlike the case of flat point decompositions. 
Whether arbitrary point decompositions can also be captured by a notion of decomposition that does not use $\T$-structures explicitly is 
an interesting question which we leave for future work.

For a hypergraph $H$, we define the \emph{point graph} of $H$, denoted by $\pointgraph{H}$, as
\[\pointgraph{H}:=(\points{H}, \{\{(v,e), (v',e')\}: \text{$v=v'$ or
$e=e'$}\}).\]
Note that the point graph $\pointgraph{H}$ of $H$ is isomorphic to $L(\inc{H})$. 
There is a known duality between $\beta$-cn and MIM (see e.g.~\cite[Theorem 2.18]{Capelli16:phd}):

\begin{observation}
\label{obs:cn-mim}
For every hypergraph $H$, we have $\bcovernum{H}=\mim{\inc{H}}$.
By Observation \ref{obs:mimw-line}, we have $\bcovernum{H}=\alpha^2({\pointgraph{H}})$. 
\end{observation}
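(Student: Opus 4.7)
The plan is to establish the two inequalities $\beta\textnormal{-cn}(H) \leq \mim{\inc{H}}$ and $\mim{\inc{H}} \leq \beta\textnormal{-cn}(H)$ separately, each by a direct construction; the second claim of the observation will then follow immediately from Observation~\ref{obs:mimw-line} together with the identification $\pointgraph{H} \cong L(\inc{H})$ that is already noted in the paragraph preceding the statement.

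For $\mim{\inc{H}} \leq \beta\textnormal{-cn}(H)$, I would start from an induced matching $M = \{\{v_1,e_1\},\dots,\{v_k,e_k\}\}$ in $\inc{H}$ of maximum size, where each $v_i \in V(H)$ and $e_i \in H$, and consider the subhypergraph $H' \defeq \{e_1,\dots,e_k\}$. The key point is that because $M$ is induced, $v_i \notin e_j$ for every $i \neq j$, so the only edge of $H'$ covering the vertex $v_i$ is $e_i$ itself. Consequently every edge cover of $H'$ must include all of $e_1,\dots,e_k$, and since the edges themselves do cover $V(H')$, we get $\cover{H'} = k$, giving $\beta\textnormal{-cn}(H) \geq k$.

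For the reverse inequality $\beta\textnormal{-cn}(H) \leq \mim{\inc{H}}$, I would pick a subhypergraph $H'$ of $H$ realising $\beta\textnormal{-cn}(H)$ and a minimum edge cover $\{e_1,\dots,e_k\} \subseteq H'$ of $H'$. By minimality of the cover, no $e_i$ is redundant, which means each $e_i$ contains a vertex $v_i$ that lies in no other $e_j$ (otherwise removing $e_i$ would still leave a cover of $H'$). The set $\{\{v_i,e_i\} : 1 \leq i \leq k\}$ is then an induced matching in $\inc{H'}$: the $v_i$'s and $e_i$'s are pairwise distinct, there are no $v_i v_j$ or $e_i e_j$ edges because $\inc{H'}$ is bipartite, and the choice of the $v_i$'s rules out edges $\{v_i,e_j\}$ for $i \neq j$. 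Since $\inc{H'}$ is an induced subgraph of $\inc{H}$, this matching is also induced in $\inc{H}$, yielding $\mim{\inc{H}} \geq k$.

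The argument is quite short and I do not anticipate a real obstacle; the only subtlety is making sure that the ``private vertex'' selection for the cover direction is justified by the minimality of the cover (not just its cardinality), and that in the matching direction the inequality $\cover{H'} \leq k$ follows because $\{e_1,\dots,e_k\}$ itself is a valid cover of its own vertex set. Once both bounds are in place, the second assertion $\beta\textnormal{-cn}(H) = \alpha^2(\pointgraph{H})$ drops out by chaining with Observation~\ref{obs:mimw-line} applied to $G = \inc{H}$.
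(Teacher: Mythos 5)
The paper does not actually prove Observation~\ref{obs:cn-mim}; it just cites Capelli's thesis for this well-known duality. Your proof is correct and is essentially the standard argument: the ``private vertex'' reasoning in both directions is exactly right, and all the small points you flag (distinctness of the $v_i$'s, the fact that bipartiteness of $\inc{H}$ means only cross-edges $\{v_i,e_j\}$ need to be ruled out, and that a minimum cover forces each edge to have a vertex it covers uniquely) hold up. One thing worth stating explicitly, which you gesture at but could make cleaner, is that in the matching direction the fact ``$v_i \notin e_j$ for $i \neq j$'' is precisely the induced-matching condition applied to the pair $v_i, e_j \in V(M)$, since $\{v_i,e_j\} \in E(\inc{H})$ iff $v_i \in e_j$; and conversely in the cover direction the private vertex $v_i$ is chosen so that $v_i \notin e_j$ for the other cover edges, which is exactly what is needed to kill cross-edges in $\inc{H}[\{v_1,\ldots,v_k,e_1,\ldots,e_k\}]$. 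The deduction $\bcovernum{H}=\alpha^2(\pointgraph{H})$ then follows from Observation~\ref{obs:mimw-line} and the isomorphism $\pointgraph{H}\cong L(\inc{H})$ as you say.
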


\begin{proposition}
\label{prop:crz-point}
For every hypergraph $H$, we have $\spw{H}\leq \alpha^2\text{-w}(\pointgraph{H})$ and $\alpha^2\text{-w}(\pointgraph{H})\leq 2\cdot\spw{H}$. 
\end{proposition}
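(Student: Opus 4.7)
My plan is to keep the same rooted tree and the same bags in both directions and to show that a simplified point decomposition of $H$ and a tree decomposition of $\pointgraph{H}$ agree on their structural axioms, so that only the two width measures need to be compared. For the first inequality $\spw{H}\leq \alpha^2\text{-w}(\pointgraph{H})$, I will take a tree decomposition $(T,(B_t)_{t\in \vertices{T}})$ of $\pointgraph{H}$ of $\alpha^2$-width $k$ and verify that it already satisfies the conditions of Definition~\ref{def:spw}. Condition~(1) is immediate because $\{(v,e):v\in e\}$ is a clique in $\pointgraph{H}$ (all pairs share the edge $e$) and so lies in some common bag by the usual Helly property of tree decompositions. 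For condition~(2), I will observe that for every subhypergraph $H'$ of $H$ and every $v\in \vertices{H'}$, the set $\{t:v\in\vertices{H'|_{B_t}}\}$ equals the union, over $e\in H'$ with $v\in e$, of the subtrees $\{t:(v,e)\in B_t\}$; each of these subtrees is connected by the tree-decomposition property, and any two of them intersect because $(v,e)$ and $(v,e')$ are adjacent in $\pointgraph{H}$ via the shared vertex~$v$ and therefore appear together in some bag, so the union is connected in $T$.

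For the reverse inequality $\alpha^2\text{-w}(\pointgraph{H})\leq 2\cdot\spw{H}$, I will start with a simplified point decomposition $(T,(B_t)_{t\in \vertices{T}})$ of $H$ of width $k$ and check the three standard axioms of a tree decomposition of $\pointgraph{H}$ with the same tree and bags. Coverage of the vertex set of $\pointgraph{H}$ and placement of shared-edge edges $\{(v,e),(v',e)\}$ in a common bag both follow directly from condition~(1). For a shared-vertex edge $\{(v,e),(v,e')\}$ I will apply condition~(2) to the subhypergraph $H'=\{e,e'\}$ and the vertex~$v$: the sets $\{t:(v,e)\in B_t\}$ and $\{t:(v,e')\in B_t\}$ are each nonempty connected subtrees of $T$ (by condition~(2) applied to the singletons $\{e\}$ and $\{e'\}$), and their union $\{t:v\in \vertices{\{e,e'\}|_{B_t}}\}$ is connected, forcing the two subtrees to intersect in some bag containing both points. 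The vertex-connectivity axiom of the tree decomposition is then condition~(2) with a singleton subhypergraph.

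Both inequalities thus reduce to comparing $\alpha^2(\pointgraph{H}[B_t])$ with $\bcovernum{H|_{B_t}}$ bag by bag. By Observation~\ref{obs:cn-mim} I have $\bcovernum{H|_{B_t}}=\alpha^2(\pointgraph{H|_{B_t}})$, so the remaining work is to relate the induced subgraph $\pointgraph{H}[B_t]$ to the point graph of the restricted hypergraph $\pointgraph{H|_{B_t}}$. These graphs need not be isomorphic, since the natural surjection $\phi\colon B_t\to \points{H|_{B_t}}$, $(v,e)\mapsto (v,e|_{B_t})$, collapses distinct edges of $H$ that coincide after restriction. My plan is to show that $\phi$ nevertheless preserves distance-$2$ independent sets in both directions: for the direction needed by $\spw{H}\leq \alpha^2\text{-w}(\pointgraph{H})$, any distance-$2$ independent set in $\pointgraph{H|_{B_t}}$ lifts to one in $\pointgraph{H}[B_t]$ of the same size through an arbitrary section of $\phi$; for the direction needed by $\alpha^2\text{-w}(\pointgraph{H})\leq 2\cdot\spw{H}$, any distance-$2$ independent set $S$ in $\pointgraph{H}[B_t]$ has pairwise distinct edges, and each edge $e|_{B_t}$ of the subhypergraph $H_S=\{e:\exists v,(v,e)\in S\}$ contains only one witness vertex of $S$ (otherwise a length-$2$ path through the extra point would violate the distance hypothesis), giving $\cover{H_S|_{B_t}}\geq |S|$ and hence $\bcovernum{H|_{B_t}}\geq |S|$. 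The factor~$2$ in the second inequality is comfortable slack that would also accommodate a coarser derivation routing through $\mim{\inc{H|_{B_t}}}$ and Proposition~\ref{prop:mim-to-alpha}.

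The main obstacle will be the non-injectivity of $\phi$: whenever two distinct edges $e,e'\in H$ satisfy $e|_{B_t}=e'|_{B_t}$, the ``twin'' points $(v,e)$ and $(v,e')$ create extra edges in $\pointgraph{H}[B_t]$ that are invisible in $\pointgraph{H|_{B_t}}$, so the careful four-case analysis on whether each edge of a hypothetical length-$\leq 2$ path is of shared-vertex or shared-edge type is what I expect to require the most attention. Once that case analysis is done, everything else is routine bookkeeping with the standard Helly-type properties of tree decompositions.
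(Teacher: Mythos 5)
Your first direction, $\spw{H}\leq \alpha^2\text{-w}(\pointgraph{H})$, is essentially the paper's argument: a tree decomposition of $\pointgraph{H}$ already satisfies conditions (1) and (2) of Definition~\ref{def:spw} (you phrase the connectivity argument via a union of pairwise-intersecting subtrees, the paper via contradiction, but these are equivalent). You are also right to flag that $\pointgraph{H}[B_t]$ and $\pointgraph{H|_{B_t}}$ need not be isomorphic when two distinct edges of $H$ restrict to the same set inside $B_t$; the paper writes $\alpha^2(\pointgraph{H|_{B_t}})=\alpha^2(\pointgraph{H}[B_t])$ without comment, and the case analysis you sketch is what justifies this equality of $\alpha^2$-values. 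That is a genuine (if minor) gap the paper glosses over, and your plan to fill it is fine.

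The second direction, however, has a real error that your framework cannot route around. You assert that a simplified point decomposition $(T,(B_t)_t)$ of $H$ is already a tree decomposition of $\pointgraph{H}$ with the same tree and bags, and the key step is: two connected subtrees of $T$ whose union is connected ``must intersect.'' This is false. In a path $t_1 - t_2$, the subtrees $\{t_1\}$ and $\{t_2\}$ have connected union but empty intersection. Concretely, take $H=\{e,e'\}$ with $e=\{a,b\}$, $e'=\{a,c\}$, $T$ the path $t_1 - t_2$, $B_{t_1}=\points{\{e\}}=\{(a,e),(b,e)\}$ and $B_{t_2}=\points{\{e'\}}=\{(a,e'),(c,e')\}$. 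One checks directly that this is a simplified point decomposition of $H$ of width $1$ (condition (2) holds: for $H'=\{e,e'\}$ and $v=a$ the relevant set is all of $\{t_1,t_2\}$, which is connected). But the point-graph edge $\{(a,e),(a,e')\}$ lies in no single bag, so $(T,(B_t)_t)$ is not a tree decomposition of $\pointgraph{H}$. The paper avoids this precisely by subdividing every edge $\{t_1,t_2\}$ of $T$ and assigning the fresh node the bag $B_{t_1}\cup B_{t_2}$: the problematic adjacent subtrees are then forced to meet at the subdivision node. Consequently the factor $2$ in $\alpha^2\text{-w}(\pointgraph{H})\leq 2\cdot\spw{H}$ is not ``comfortable slack''; it comes from applying the subadditivity of $\alpha^2_{\pointgraph{H}}$ to the doubled bag $B_{t_1}\cup B_{t_2}$, and your intended argument would, if it worked, prove the factor-$1$ inequality, which the example above shows is false.
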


\begin{proof}
For $\spw{H}\leq \alpha^2\text{-w}(\pointgraph{H})$, let $(T, (B_t)_{t \in V(T)})$ be a tree decomposition of $\pointgraph{H}$ of $\alpha^2$-width $k$. 
We claim that $(T, (B_t)_{t \in V(T)})$ is a simplified point decomposition of $H$ of width $k$. 
By Observation \ref{obs:cn-mim}, we have $\bcovernum{H|_{B_t}}=\alpha^2(\pointgraph{H|_{B_t}})=\alpha^2({\pointgraph{H}[B_t]})=\alpha^2_{\pointgraph{H}}(B_t)$, for every $t\in V(T)$. 
Hence, the width of $(T, (B_t)_{t \in V(T)})$ is $k$. 
  For condition (1) of Definition~\ref{def:spw}, let $e\in H$ and note that the set $\{(v,e) \in \points{H}: v\in e\}$ forms a clique in $\pointgraph{H}$. 
Hence, there exists $t\in V(T)$ such that $\{(v,e)\in \points{H}: v\in e\}\subseteq B_t$. 
  Towards a contradiction, suppose that condition (2) of Definition~\ref{def:spw}  is violated, i.e., 
there is a subhypergraph $H'$ of $H$, a vertex $v\in V(H')$ and distinct nodes $t_1,t_2,t_3\in V(T)$ such that 
$t_3$ is in the unique path from $t_1$ to $t_2$ in $T$, and $v\in V(H'|_{B_{t_1}})\cap V(H'|_{B_{t_2}})$ but $v\not\in V(H'|_{B_{t_3}})$. 
In particular, there exist edges $e_1,e_2\in H'$ such that $(v,e_1)\in B_{t_1}$, $(v,e_2)\in B_{t_2}$ and $\{(v,e_1), (v,e_2)\}\cap B_{t_3} =\emptyset$. 
Since $\{(v,e_1),(v,e_2)\}$ is an edge in $\pointgraph{H}$, there is a node $t\in V(T)$ such that $\{(v,e_1), (v,e_2)\}\subseteq B_t$. 
Using the connectivity of the tree decomposition $(T, (B_t)_{t \in V(T)})$, we obtain that $\{(v,e_1), (v,e_2)\}\cap B_{t_3} \neq\emptyset$; a contradiction. 

For $\alpha^2\text{-w}(\pointgraph{H})\leq 2\cdot\spw{H}$, let $(T, (B_t)_{t \in V(T)})$ be a simplified point decomposition of $H$ of width $k$. 
We define $T'$ to be the tree obtained from $T$ by subdividing every edge in $E(T)$, i.e., 
replacing every edge $e=\{t_1,t_2\}\in E(T)$ by two edges $\{t_1,t_e\}$ and $\{t_e,t_2\}$, where $t_e$ is a fresh node. 
For $t\in V(T')$, we define $B'_t:=B_t$, if $t\in V(T)$, or $B'_t:=B_{t_1}\cup B_{t_2}$, if $t=t_e$ with $e=\{t_1,t_2\}$.

We claim that $(T', (B'_t)_{t \in V(T')})$ is a tree decomposition of $\pointgraph{H}$.  
First note that, for every point $(v,e)$ in $H$, by condition (1) of simplified point decompositions, there is $t\in V(T)\subseteq V(T')$, 
such that $(v,e)\in B_t=B_{t}'$, and hence condition (i) of tree decompositions holds. 
For condition (ii), suppose $(v,e)$ and $(v',e)$ are points with $v\neq v'$. 
Again by condition (1), we obtain that there is $t\in V(T)\subseteq V(T')$, 
such that $\{(v,e),(v',e)\}\in B_t=B_{t}'$. 
Now suppose that $(v,e)$ and $(v,e')$ are points with $e\neq e'$ and pick $t,t'\in V(T)$ such that 
$(v,e)\in B_t$ and $(v,e')\in B_{t'}$. 
By applying condition (2) of simplified point decompositions to the subhypergraph $H'=\{e,e'\}$, 
we have that $\{(v,e),(v,e')\}\cap B_{s}\neq \emptyset$, for every $s\in V(T)$ in the unique path from $t$ to $t'$ in $T$. 
In particular, there is an edge $\hat{e}=\{s_1,s_2\}$ in this path such that $(v,e)\in B_{s_1}$ and $(v,e')\in B_{s_2}$. 
It follows that $\{(v,e), (v,e')\}\subseteq B_{t_{\hat{e}}}'$, for $t_{\hat{e}}\in V(T')$, and hence condition (ii) holds. 
For a point $(v,e)$ of $H$, condition (iii) follows from applying condition (2) to the subhypergraph $H'=\{e\}$. 
Finally, note that, by Observation \ref{obs:cn-mim} and subadditivity of  $\alpha^2_{\pointgraph{H}}$, 
the $\alpha^2_{\pointgraph{H}}$-width of $(T', (B'_t)_{t \in V(T')})$ is at most $2k$, as required. 
\end{proof}

Theorem \ref{theo:mim-fpw} follows from Propositions~\ref{prop:crz-point},
\ref{prop:simplified-crz}, \ref{prop:mim-to-alpha}, and \ref{prop:alpha-to-mim}. 
Let us stress that given a branch decomposition $(T,\delta)$ of $\inc{H}$ of MIM-width $k\geq 1$, 
we can efficiently compute  a flat point decomposition (of polynomial size) of width at most $2k$. 
By applying the construction in the proof of Proposition~\ref{prop:mim-to-alpha} (and due to Proposition~\ref{prop:crz-point}), 
from $(T,\delta)$ we can efficiently compute a simplified point decomposition for $H$ of width at most $2k$. 
Finally, the construction in the proof of Proposition~\ref{prop:simplified-crz} of a flat point decomposition from the simplified point decomposition of width $2k$, 
in particular, of the $\T$-structure $A$, can be done in polynomial time. 
The main step is given two nodes $t,t'\in V(T)$, where $t'$ is the parent of $t$, and two sub-bags of the form $(t,S_1)$ and $(t',S_2)$, 
to check whether they are consistent. This is equivalent to checking the existence of two subhypergraphs $H_1$ and $H_2$ with $|H_1|\leq 2k$, $|H_2|\leq 2k$, 
such that (i) $S_1=V(H_1|_{B_t})$, $S_2=V(H_2|_{B_{t'}})$, and (ii) $V(H_1|_{B_{t'}})\subseteq S_2$ and $V(H_2|_{B_{t}})\subseteq S_1$. 
This can be checked in polynomial time. 

\section{Conclusions} 
\label{sec:conc}

We have introduced a new width that unifies $\beta$-acyclicity and bounded MIM-width. We
have also identified a novel island of tractability for structurally restricted
Max-CSPs. 
The main open problem is to obtain more general hypergraph properties that lead to tractability, 
and ultimately find the precise boundary of tractability. 
There are many natural hypergraph properties that generalise bounded point-width whose tractability status is unclear (from less to more general): 
bounded \emph{$\beta$-hypertreewidth} ($\beta$-hw)~\cite{GP04}, bounded \emph{$\beta$-fractional hypertreewidth} ($\beta$-fhw), and bounded \emph{$\beta$-submodular width} ($\beta$-subw). 
In particular, we have $\beta\text{-subw} \leq \beta\text{-fhw} \leq \beta\text{-hw} \leq \text{pw}$.  
For precise definitions, see Appendix~\ref{sec:widths}. 

In addition to $\beta$-acyclicity and MIM-width, our notion of point-width also subsumes a width measure called \emph{coverwidth}, 
introduced in~\cite[Section 5.3.2]{Capelli16:phd}. 
In Appendix~\ref{sec:coverwidth}, we show that every class of hypergraphs of bounded coverwidth also has bounded flat point-width, and hence, bounded MIM-width.   
We also show that the converse does not hold, i.e., bounded MIM-width strictly generalises bounded coverwidth.

We have focused on polynomial-time solvability for Max-CSPs. 
Regarding \emph{fixed-parameter tractability} (FPT), 
it is easy to show (cf. Appendix~\ref{sec:marx}) 
that Marx's classification of
CSPs~\cite{Marx13:jacm} implies an FPT classification of \{0,1\}-valued Max-CSPs and the FPT frontier 
is given by the classes with bounded $\beta$-submodular width. 
This classification implies that for a class of unbounded $\beta$-submodular width the $\{0,1\}$-valued, and hence the finite-valued, 
problem \problem{Max-CSP($\mathcal{H},-$)} is not
fixed-parameter (and thus not polynomial-time) tractable. 
Note that a collapse between bounded point-width and bounded $\beta$-submodular width would give us a complete classification of Max-CSPs in terms 
of polynomial time-solvability (and FPT). Hence, a natural research direction is to study the relationship between all these measures (pw, $\beta$-hw, $\beta$-fhw and $\beta$-subw). 
As a related result, which could be interesting in its own right, 
we show (cf. Appendix~\ref{sec:collapse}) that
bounded $\beta$-fractional hypertreewidth collapses to bounded $\beta$-hypertreewidth.

We finish with a few open problems. 
Firstly, we have shown (Theorem~\ref{thm:beta-acyc}) that every $\beta$-acyclic hypergraph has a point
decomposition of polynomial size and width $1$. We do not know whether the
converse is true.  Secondly, as discussed before Example~\ref{exa:point} in Section~\ref{sec:pointdec}, we do
not know whether the problem of checking that a given triple is a point
decomposition admits an efficient algorithm.
Finally, we do not know whether point decompositions of bounded width can be assumed to have polynomial size (hence the dependency on $\Vert P \Vert$ in the statement of Theorem~\ref{thm:alg}, and the importance given to the fact that the decomposition has polynomial size in Theorem~\ref{thm:beta-acyc} and Theorem~\ref{theo:main-mim}). 

\section*{Acknowledgements}

We would like to thank the anonymous referees of both the conference~\cite{crz19:lics}
and this full version of the paper.

\bibliographystyle{elsarticle-num-url-modified-sorted}
\bibliography{max-csp}

\appendix

\section{Width measures}
\label{sec:widths}

Let $H$ be a hypergraph and $X\subseteq V(H)$. The hypergraph \emph{induced} by $X$, denote by $H[X]$, is defined as 
\[H[X]:=\{e \cap X : e \in H,  e \cap X \neq \emptyset\}.\]
Note that, in general, $H[X]$ is not a subhypergraph of $H$ as defined in Section~\ref{sec:prelim}. 

A \emph{fractional edge cover} of a hypergraph $H$ is a function $\gamma : H \to \qplus$ such that for all $v \in \vertices{H}$, $\sum_{e \in H : v \in e} \gamma(e) \geq 1$, and the fractional edge cover number of $H$, denoted by $\fraccovernum{H}$, is the minimum of $\sum_{e \in H} \gamma(e)$ over all fractional edge covers $\gamma$ of $H$.

A \emph{tree decomposition} of a hypergraph $H$ is a pair $(T, (B_t)_{t \in \vertices{T}})$, where $T$ is a tree and each bag $B_t$ is a subset of $\vertices{H}$ such that (i) for each $e \in H$ there exists $t \in \vertices{T}$ such that $e \subseteq B_t$ and (ii) for each $v \in \vertices{H}$ the set $\{ t \in \vertices{T} : v \in B_t\}$ induces a connected subtree of $T$.

Let $H$ be a hypergraph. For any function $f:2^{V(H)}\to \qplus$, we define the \emph{$f$-width} of a tree decomposition $(T, (B_t)_{t \in \vertices{T}})$ of $H$ as the maximum of $f(B_t)$ taken over all $t \in \vertices{T}$, and the \emph{$f$-width of $H$} as the minimum $f$-width of a tree decomposition of $H$. Given a hypergraph $H$,
\begin{itemize}
\item The \emph{treewidth}~\cite{Robertson84:minors3} of $H$ is its $s$-width, where $s(X) = |X|-1$;
\item The (generalised) \emph{hypertreewidth}~\cite{Gottlob02:jcss-hypertree} of $H$ is its $c$-width, where $c(X) = \covernum{H[X]}$;
\item The \emph{fractional hypertreewidth}~\cite{Grohe14:talg} of $H$ is its $fc$-width, where $fc(X) = \fraccovernum{H[X]}$.
\end{itemize}
The treewidth, hypertreewidth and fractional hypertreewidth of a hypergraph $H$ will be denoted by $\tw{H}$, $\hw{H}$ and $\fhw{H}$, respectively. 
Let us notice that a hypergraph $H$ is $\alpha$-acyclic if and only if $\hw{H}=1$. 

Let $H$ be a hypergraph. If $\mathcal{F}$ is a set of functions from
$2^{\vertices{H}}$ to $\qplus$, we call \emph{$\mathcal{F}$-width of $H$} the
quantity $\sup\{f\text{-width}(H) : f \in \mathcal{F}\}$. A function $f :
2^{\vertices{H}} \to \qplus$ is \emph{edge-dominated} if $f(e) \leq 1$ for
all $e \in H$, and \emph{submodular} if $f(A\cap B)+f(A\cup B)\leq f(A)+f(B)$
for all $A,B\subseteq\vertices{H}$.  The \emph{submodular
width}~\cite{Marx13:jacm} of $H$, denoted by $\subw{H}$, is its
$\mathcal{F}_s$-width, where $\mathcal{F}_s$
is the set of all edge-dominated submodular functions from $2^{\vertices{H}}$ to
$\qplus$ satisfying $f(\emptyset) = 0$.

Given a hypergraph $H$, the \emph{$\beta$-hypertreewidth}~\cite{GP04} (resp. \emph{$\beta$-fractional hypertreewidth}, \emph{$\beta$-submodular width}) of $H$ is the maximum hypertreewidth (resp. fractional hypertreewidth, submodular width) taken over all subhypergraphs of $H$. We denote these quantities by $\bhw{H}$, $\bfhw{H}$ and $\bsw{H}$, respectively. 
Observe that a hypergraph $H$ is $\beta$-acyclic if and only if $\bhw{H}=1$. 

\section{FPT Classification for \{0,1\}-valued Max-CSPs}
\label{sec:marx}

We denote by \problem{$\{0,1\}$-Max-CSP} the restriction of Max-CSP to $\{0,1\}$-valued functions. 
In other words, an instance of \problem{$\{0,1\}$-Max-CSP} is syntactically identical to a CSP instance but the goal is to compute the maximum number of constraints that can be 
simultaneously satisfied. 

We shall consider a parameterised version of
\problem{$\{0,1\}$-Max-CSP($\mathcal{H},-$)} with parameter $|H|$ (we slightly
abuse notation and denote this parameterised problem simply
\problem{$\{0,1\}$-Max-CSP($\mathcal{H},-$)}). 
In particular, \problem{$\{0,1\}$-Max-CSP($\mathcal{H},-$)} is in the class FPT
of fixed-parameter tractable problems if an instance $I$ of
\problem{$\{0,1\}$-Max-CSP($\mathcal{H},-$)} can be solved in time $f(|H|)\cdot |I|^c$, 
where $f$ is any computable function and $c>0$ is a constant. 

\begin{theorem}
  Let $\mathcal{H}$ be a recursively enumerable class of hypergraphs. Then,
  assuming the Exponential Time Hypothesis (ETH),
  \problem{$\{0,1\}$-Max-CSP($\mathcal{H},-$)} is in FPT if and only if
  $\mathcal{H}$ has bounded $\beta$-submodular width. 
\end{theorem}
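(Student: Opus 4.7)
The plan is to derive both directions directly from Marx's dichotomy for CSPs~\cite{Marx13:jacm}, which asserts (under ETH) that \problem{CSP}$(\mathcal{H}',-)$ is in FPT if and only if $\mathcal{H}'$ has bounded submodular width, for any recursively enumerable class $\mathcal{H}'$. The key observation linking Max-CSP to CSP is that a $\{0,1\}$-valued Max-CSP instance $I$ with hypergraph $H$ asks for the largest subhypergraph $H'\subseteq H$ such that the corresponding CSP sub-instance $I|_{H'}$ is satisfiable.

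For the tractability direction, suppose $\mathcal{H}$ has $\beta$-submodular width at most $k$. Given an input instance $I$ with hypergraph $H\in\mathcal{H}$, the algorithm enumerates all $2^{|H|}$ subhypergraphs $H'\subseteq H$; for each one, it runs Marx's FPT algorithm for CSP (parameterised by $|H'|\le|H|$) on the sub-instance $I|_{H'}$, which is possible because $\subw{H'}\le\bsw{H}\le k$. The answer is the maximum $|H'|$ among those for which $I|_{H'}$ is satisfiable. The total runtime is $2^{|H|}\cdot f(|H|)\cdot\lVert I\rVert^{O(1)}$ for a computable $f$, which is FPT in $|H|$.

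For the hardness direction, define $\mathcal{H}':=\{H' : H' \text{ is a subhypergraph of some }H\in\mathcal{H}\}$. Since $\mathcal{H}$ is recursively enumerable, so is $\mathcal{H}'$, and by definition of $\beta$-submodular width, $\mathcal{H}'$ has unbounded submodular width whenever $\mathcal{H}$ has unbounded $\beta$-submodular width. Marx's theorem then says \problem{CSP}$(\mathcal{H}',-)$ is not FPT under ETH. I would reduce \problem{CSP}$(\mathcal{H}',-)$ to \problem{$\{0,1\}$-Max-CSP}$(\mathcal{H},-)$ as follows: given an instance $I'$ with hypergraph $H'\in\mathcal{H}'$, enumerate $\mathcal{H}$ until some $H\in\mathcal{H}$ containing $H'$ as a subhypergraph is found, then extend $I'$ to an instance $I$ with hypergraph $H$ by adding, for each $e\in H\setminus H'$, a constant-$1$ function of scope $e$. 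Then $I'$ is satisfiable iff $\opt{I}=|H|$, so an FPT algorithm for \problem{$\{0,1\}$-Max-CSP}$(\mathcal{H},-)$ (parameterised by $|H|$) would yield an FPT algorithm for \problem{CSP}$(\mathcal{H}',-)$ (parameterised by $|H'|$), contradicting Marx's result.

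The main obstacle is making the reduction in the hardness direction a legitimate fpt-reduction: we need the parameter of the produced instance, $|H|$, to be bounded by a computable function of $|H'|$. This requires arguing that the enumeration procedure terminates in time depending only on $|H'|$, which follows because $\mathcal{H}$ is recursively enumerable and every $H'\in\mathcal{H}'$ is a subhypergraph of some $H\in\mathcal{H}$, so we can define $h(k):=\max\{|H| : H \text{ is the first hypergraph in the enumeration containing some } H'\in\mathcal{H}' \text{ with } |H'|\le k\}$, which is a well-defined computable function. The rest is bookkeeping: verifying that the padded instance $I$ has size polynomial in $\lVert I'\rVert$ (up to an additive term depending only on $|H|$) and that the equivalence $\opt{I}=|H|\Leftrightarrow I'\text{ satisfiable}$ holds.
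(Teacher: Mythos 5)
Your proposal follows essentially the same two-part strategy as the paper. For tractability, both enumerate sub-instances and invoke Marx's FPT algorithm for CSPs of bounded submodular width on each (the paper orders them by non-increasing size and stops at the first satisfiable one, but your exhaustive version gives the same FPT bound since there are only $2^{|H|}$ sub-instances). For hardness, both reduce from \problem{CSP($\mathcal{H}',-$)} by padding the instance up to a full hypergraph $H \in \mathcal{H}$; the paper pads with always-unsatisfied (``empty'') constraints and compares the optimum to the number of original constraints, while you pad with constant-$1$ constraints and compare to $|H|$ --- these are interchangeable. The only structural difference is that you take $\mathcal{H}'$ to be \emph{all} subhypergraphs of hypergraphs in $\mathcal{H}$, whereas the paper selects one per $H$; both choices are recursively enumerable and of unbounded submodular width.

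One word of caution on the last paragraph of your writeup, where you go beyond the paper's level of detail. You define $h(k)$ as the maximum of $|H|$ over first-found containers of subhypergraphs $H' \in \mathcal{H}'$ with $|H'| \le k$, and claim this is a well-defined computable function. This is not immediate: since $|H'|$ counts edges only and edges may have unbounded size, there can be infinitely many non-isomorphic $H' \in \mathcal{H}'$ with $|H'| \le k$, and their first containers in the enumeration of $\mathcal{H}$ may have unbounded $|H|$, making the max undefined. The concern you raise is genuine --- the paper itself waves it away with ``note that the reduction can be done in FPT time'' --- but your proposed resolution needs a different argument (for instance, working with the encoding size of $H'$ rather than its edge count, or observing that the hardness instances produced by Marx's construction only involve hypergraphs $H'$ for which one can canonically choose a bounded container $H$). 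As it stands, the gap you identified is not closed by the function $h$ you wrote down.
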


\begin{proof}
  For the tractability part, suppose $\mathcal{H}$ has bounded $\beta$-submodular width
  and let $I$ be an instance of \problem{$\{0,1\}$-Max-CSP($\mathcal{H},-$)} with
  the underlying hypergraph $H\in\mathcal{H}$. 
  Let $\pi=I_1,\dots, I_r$ be an enumeration of all the sub-instances of $I$ (that
  is, instances obtained from $I$ by removing some constraints) ordered in
  non-increasing order according to the number of constraints (and hence
  according to the number of edges in the underlying hypergraph). To compute the
  optimal value of $I$, it suffices to find the first sub-instance according to
  $\pi$ that has a solution. Since each sub-instance has bounded submodular
  width, the existence of a solution can be checked in FPT by the result
  of~\cite{Marx13:jacm}. Since the number $r$ of all sub-instances is
  bounded in terms of $|H|$, the whole procedure can be done in FPT.

  For the hardness, suppose that $\mathcal{H}$ has unbounded $\beta$-submodular width. 
  Then for each $H\in \mathcal{H}$ we can take a subhypergraph $H'$ such that
the class $\mathcal{H'}:=\{H'\mid H\in \mathcal{H}\}$ has unbounded submodular width. 
By Marx's result~\cite{Marx13:jacm}, assuming ETH, we have that
  \problem{CSP($\mathcal{H'},-$)} is not in FPT. 
It suffices to show that \problem{CSP($\mathcal{H'},-$)} fpt-reduces to
  \problem{$\{0,1\}$-Max-CSP($\mathcal{H},-$)}. Let $I$ be an instance of
  \problem{CSP($\mathcal{H'},-$)} with the underlying hypergraph $H'\in
  \mathcal{H'}$. 
  We start by enumerating $\mathcal{H}$ until we find a hypergraph $H$ that
  contains as a subhypergraph $H'$. By definition of $\mathcal{H'}$, such an $H$
  must exist. Let $J$ be the instance of \problem{$\{0,1\}$-Max-CSP($\mathcal{H},-$)} obtained from $I$ by additionally adding one empty constraint for each edge $e\in H\setminus H'$. We have that $I$ has a solution if and only if the optimal value of $J$ is the number of constraints in $I$. Note that the reduction can be done in FPT time. 
\end{proof}

\section{Collapse of $\beta$-hypertreewidth and $\beta$-fractional hypertreewidth}
\label{sec:collapse}

It follows from the definitions that $\bfhw{H}\leq \bhw{H}$, for every hypergraph $H$. 
In this section we show that $\bhw{H}\leq f(\bfhw{H})$, for a fixed function $f$ (Proposition~\ref{prop:bhw-bfhw}).
The key ingredient of the proof is the following lemma, which we borrow from~\cite{Fischl18:pods}. 
The \emph{VC dimension} of a hypergraph $H$, denoted by $\vc{H}$, is the size of the largest set $X \subseteq \vertices{H}$ such that $H[X] = 2^X$. 
Note that the precise statement of this result as given in~\cite{Fischl18:pods} (in the proof of Theorem 6.1) differs by a factor $\coverfrac{H}$, 
but we believe that this is due to a typographical error on their side.
\begin{lemma}[\cite{Fischl18:pods}]
\label{lem:cover-vc}
For any hypergraph $H$, it holds that
\[
\cover{H} \leq 2^{\vc{H}+2} \cdot \coverfrac{H} \cdot \log (11 \cdot \coverfrac{H})
\]
\end{lemma}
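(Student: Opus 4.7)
The plan is to reformulate the claim as a hitting set problem in the dual hypergraph and then apply an $\varepsilon$-net argument, tracking the constants carefully.

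First, I would set up the duality. Define the dual hypergraph $H^{*}$ whose universe is $H$ itself (with one point per edge of $H$) and whose ranges are $R_{v}\defeq\{e\in H : v\in e\}$ for each $v\in V(H)$. Then edge covers of $H$ are in bijection with hitting sets of $H^{*}$, so $\cover{H}=\tau(H^{*})$ and $\coverfrac{H}=\tau^{*}(H^{*})$, where $\tau$ and $\tau^{*}$ denote the (fractional) transversal number. Next, I would take any optimal fractional edge cover $\gamma^{*}$ of $H$ and normalise it to obtain a probability measure $\mu$ on $H$ by setting $\mu(e)=\gamma^{*}(e)/\coverfrac{H}$; by feasibility of $\gamma^{*}$ we have $\mu(R_{v})\geq 1/\coverfrac{H}$ for every $v\in V(H)$.

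The second ingredient is a bound on the VC dimension of the dual. The classical Assouad inequality gives $\vc{H^{*}}\leq 2^{\vc{H}+1}$: a shattered set of size $m$ in $H^{*}$ produces, by reading off which ranges contain each shattered point, $m$ distinct $\{0,1\}$-patterns on $\vc{H^{*}}$-many coordinates, forcing $m\leq 2^{\vc{H}+1}$ via a Sauer--Shelah style counting against $\vc{H}$ inside $H$. Third, I would invoke the Haussler--Welzl $\varepsilon$-net theorem: for any range space of VC dimension $d$ and any probability measure, there exists an $\varepsilon$-net (a set meeting every range of measure at least $\varepsilon$) of size at most $(cd/\varepsilon)\log(1/\varepsilon)$ for an explicit small constant $c$. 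Applying this with $\varepsilon=1/\coverfrac{H}$ and $d=\vc{H^{*}}\leq 2^{\vc{H}+1}$ produces a hitting set for all $R_{v}$, which by the duality of the first step is an edge cover of $H$ of size at most the claimed bound.

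The main obstacle is pinning down the exact constants $2^{\vc{H}+2}$ in front and $11\cdot\coverfrac{H}$ inside the logarithm. Standard statements of the $\varepsilon$-net theorem come with a variety of constants (Haussler--Welzl gives a factor of $8$, Koml\'os--Pach--Woeginger sharpens this), and the final coefficient depends both on the constant inside the $\varepsilon$-net bound and on the constant hidden in the Assouad bound on $\vc{H^{*}}$. To recover the precise form stated, I would most likely reproduce the probabilistic proof of the $\varepsilon$-net theorem (draw a sample of size $N$ from $\mu$, bound the probability that some range is missed using the Sauer--Shelah lemma, and choose $N$ so this probability is less than $1$) directly with $d$ replaced by $2^{\vc{H}+1}$, so that the two factors of $2$ combine to give $2^{\vc{H}+2}$ and the sample-size calculation yields exactly $\log(11\cdot\coverfrac{H})$. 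Avoiding losses in this arithmetic is the delicate part; everything else is essentially bookkeeping around textbook results.
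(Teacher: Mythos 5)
The paper does not prove this lemma: it is imported from Fischl, Gottlob, and Pichler with only a parenthetical remark that the authors themselves are unsure their restatement matches the source exactly, so there is no in-paper argument to compare against. That said, your plan is the standard (and almost certainly the intended) derivation: reformulate edge covers of $H$ as transversals of the dual range space $(H,\{R_v\}_{v\in V(H)})$, normalise an optimal fractional cover into a probability measure $\mu$ with $\mu(R_v)\geq 1/\coverfrac{H}$ for all $v$, bound the dual VC dimension via the Assouad argument, and apply an $\varepsilon$-net theorem with $\varepsilon = 1/\coverfrac{H}$. Two small points. First, your description of the Assouad step is garbled as written; the clean version is that a set $\{e_1,\dots,e_m\}$ shattered in $H^*$ yields $2^m$ witness vertices realising all patterns, and picking the $\lfloor\log_2 m\rfloor$ witnesses corresponding to the bit positions of a binary encoding gives a set shattered by $\{e_1,\dots,e_m\}$ in $H$, so $\vc{H}\geq\lfloor\log_2 m\rfloor$ and hence $\vc{H^*}<2^{\vc{H}+1}$ — but the bound you quote is correct. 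Second, and as you flag yourself, the exact coefficients $2^{\vc{H}+2}$ and $11$ cannot come from the original Haussler--Welzl statement (which carries $d$ inside the logarithm); you must use a Koml\'os--Pach--Woeginger-type bound of the form $\frac{d}{\varepsilon}(\ln\frac{1}{\varepsilon}+O(\ln\ln\frac{1}{\varepsilon}))$ and then chase constants. That arithmetic is genuine work but it is bookkeeping rather than a structural gap, and the approach as a whole is sound.
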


It follows that if a hypergraph $H$ has a fractional edge cover of small weight then it has a small edge cover unless its VC dimension is large. We will combine this fact with a straightforward upper bound on the VC dimension in terms of $\bfhw{H}$.

\begin{lemma}
\label{lem:vcfhw}
For any hypergraph $H$, it holds that
\[
\vc{H} \leq 2 \cdot \bfhw{H}
\]
\end{lemma}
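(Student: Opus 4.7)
The plan is to construct, for any shattered set $X \subseteq V(H)$ with $|X|=d$, a subhypergraph $H'\subseteq H$ with $\fhw{H'}\geq d/2$, which by definition of $\bfhw{}$ yields $d\leq 2\cdot\bfhw{H}$, and hence $\vc{H}\leq 2\cdot\bfhw{H}$. Since $X$ is shattered, for every pair $\{u,v\}\subseteq X$ with $u\neq v$ there exists some edge $e_{\{u,v\}}\in H$ with $e_{\{u,v\}}\cap X=\{u,v\}$. Choose one such edge for each pair and let
\[
H' \;\defeq\; \{e_{\{u,v\}} : u,v\in X,\ u\neq v\}.
\]

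The first key step is to argue that in \emph{any} tree decomposition $(T,(B_t)_{t\in V(T)})$ of $H'$, some bag contains all of $X$. For each $v\in X$ the set $T_v=\{t\in V(T) : v\in B_t\}$ induces a nonempty subtree of $T$ by property (ii) of tree decompositions, and for each pair $\{u,v\}$ the edge $e_{\{u,v\}}\in H'$ must sit in some bag, so $T_u\cap T_v\neq\emptyset$. The Helly property of subtrees of a tree then gives $\bigcap_{v\in X}T_v\neq\emptyset$, so there is a bag $B_t$ with $X\subseteq B_t$.

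The second step is to bound $\fraccovernum{H'[B_t]}$ from below. Crucially, because $e_{\{u,v\}}\cap X=\{u,v\}$, for any $w\in X$ and any edge $e_{\{u,v\}}$ we have $w\in e_{\{u,v\}}$ if and only if $w\in\{u,v\}$. Hence, for any fractional edge cover $\gamma$ of $H'[B_t]$, using that $X\subseteq B_t\subseteq V(H'[B_t])$,
\[
d \;=\; |X| \;\leq\; \sum_{w\in X}\sum_{e'\in H'[B_t],\, w\in e'}\gamma(e') \;=\; \sum_{e\in H'}\gamma(e\cap B_t)\,|e\cap X| \;=\; 2\sum_{e\in H'}\gamma(e\cap B_t),
\]
so $\sum\gamma\geq d/2$, giving $\fraccovernum{H'[B_t]}\geq d/2$. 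Thus the $fc$-width of every tree decomposition of $H'$ is at least $d/2$, i.e.\ $\fhw{H'}\geq d/2$, and therefore $\bfhw{H}\geq \fhw{H'}\geq d/2$, completing the argument.

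The only subtlety I anticipate is the bookkeeping around $H'[B_t]$: one must verify that restricting the shattering-chosen edges to $B_t$ preserves the crucial property that their intersection with $X$ remains exactly $\{u,v\}$ (which follows from $X\subseteq B_t$), so that the double-counting argument gives the clean factor of $2$ rather than a weaker bound.
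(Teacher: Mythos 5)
Your proof is correct and in spirit identical to the paper's: both use the shattered set $X$, place $X$ entirely inside one bag of some tree decomposition, and double-count to get $\fraccovernum{\cdot}\geq |X|/2$. The difference is procedural. The paper passes to the induced hypergraph $H[X]=2^X$, takes the complete graph $K_X$ as a subhypergraph of $H[X]$, and invokes the chain $\bfhw{H}\geq\bfhw{H[X]}\geq\fhw{K_X}$; the first inequality rests on the (unjustified in the paper) observation that $\fhw{}$ does not increase under taking induced hypergraphs. You instead keep a genuine subhypergraph $H'\subseteq H$ by picking one witness edge per pair, so $\bfhw{H}\geq\fhw{H'}$ is immediate from the definition of $\bfhw{}$, and the clique/Helly step then happens inside a tree decomposition of $H'$. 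This sidesteps the auxiliary claim about induced hypergraphs, at the cost of the small bookkeeping you already flagged — one must check that the restrictions $e_{\{u,v\}}\cap B_t$ still meet $X$ in exactly $\{u,v\}$ (which holds because $X\subseteq B_t$, as you note). One nit: the chain ``$X\subseteq B_t\subseteq V(H'[B_t])$'' is not right as written, since $V(H'[B_t])=B_t\cap V(H')\subseteq B_t$; what you actually need and do have is $X\subseteq V(H'[B_t])$, from $X\subseteq B_t$ and $X\subseteq V(H')$. Also, as in the paper's proof, the argument degenerates for $|X|\leq 1$, but there the bound is trivial.
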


\begin{proof}
Let $X \subseteq \vertices{H}$ be a subset of vertices of size $\vc{H}$ such that $H[X] = 2^X$. 
Let $K_X$ be the complete graph with vertex set $X$. 
Since $K_X$ is a subhypergraph of $H[X]$ and fhw does not increase by taking induced hypergraphs, 
it holds that $\bfhw{H} \geq \bfhw{H[X]} \geq \fhw{K_X}$. Now, let $(T, (B_t)_{t \in \vertices{T}})$ be a tree decomposition of $K_X$ of $fc$-width $\fhw{K_X}$. 
For each $t\in V(T)$, let $\gamma_t$ be a fractional edge cover of $K_X[B_t]$ such that $\sum_{e\in K_X[B_t]}\gamma_t(e)=\coverfrac{K_X[B_t]}$. 
Since $K_X$ is a clique on $X$, there exists $t^*\in V(T)$ such that $B_{t^*}=X$, and hence $K_X[B_{t^*}]=K_X$. It follows that 
\[
\fhw{K_X} \geq \sum_{e \in K_X} \gamma_{t^*}(e) \geq \frac{1}{2} \sum_{v \in X} \sum_{e \in K_X: v \in e} \gamma_{t^*}(e) \geq \frac{1}{2} |X|.
\]
Hence, 
\[
\bfhw{H} \geq \bfhw{H[X]} \geq \fhw{K_X} \geq \frac{1}{2} |X| = \frac{1}{2} \vc{H}.
\]
\end{proof}

\begin{proposition}
\label{prop:bhw-bfhw}
For any hypergraph $H$, it holds that
\[
\bhw{H} \leq 4^{\bfhw{H}+1} \cdot \bfhw{H} \cdot \log (11 \cdot \bfhw{H}).
\]
\end{proposition}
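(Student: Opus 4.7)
The plan is to fix an arbitrary subhypergraph $H' \subseteq H$ and show that $\hw{H'}$ is bounded by the right-hand side; taking the maximum over all $H'$ then yields the bound on $\bhw{H}$. First I would take a tree decomposition $(T,(B_t)_{t \in V(T)})$ of $H'$ witnessing $\fhw{H'}$, so that $\coverfrac{H'[B_t]} \leq \fhw{H'} \leq \bfhw{H'} \leq \bfhw{H}$ for every $t \in V(T)$. The same decomposition will serve as a (generalised) hypertree decomposition once we bound the integral cover number of each $H'[B_t]$.

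The key step is to apply Lemma~\ref{lem:cover-vc} bag-by-bag. For each $t$ we get
\[
\cover{H'[B_t]} \leq 2^{\vc{H'[B_t]}+2} \cdot \coverfrac{H'[B_t]} \cdot \log(11 \cdot \coverfrac{H'[B_t]}).
\]
To turn this into a uniform bound I need to control $\vc{H'[B_t]}$. Since $H'[B_t]$ is obtained from $H'$ by restricting to $B_t$, any $X \subseteq B_t$ witnessing VC dimension for $H'[B_t]$ also witnesses it for $H'$, so $\vc{H'[B_t]} \leq \vc{H'}$; and by Lemma~\ref{lem:vcfhw} applied to $H'$ we have $\vc{H'} \leq 2 \cdot \bfhw{H'} \leq 2 \cdot \bfhw{H}$. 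Combining these gives $\vc{H'[B_t]} \leq 2 \cdot \bfhw{H}$.

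Substituting back, together with the fact that $u \mapsto u\log(11u)$ is nondecreasing on $u \geq 1$, yields
\[
\cover{H'[B_t]} \leq 2^{2\bfhw{H}+2} \cdot \bfhw{H} \cdot \log(11 \cdot \bfhw{H}) = 4^{\bfhw{H}+1} \cdot \bfhw{H} \cdot \log(11 \cdot \bfhw{H}),
\]
which bounds the $c$-width of $(T,(B_t)_{t \in V(T)})$ and hence $\hw{H'}$. Taking the maximum over subhypergraphs $H'$ of $H$ finishes the proof. There is no real obstacle here: the argument is essentially a two-step plug-in of the two preceding lemmas, and the only subtlety is remembering that VC dimension does not grow when passing to the induced subhypergraph $H'[B_t]$, and that fhw (and VC dimension) behaves monotonically under taking subhypergraphs so that the witnessing quantities can all be replaced by $\bfhw{H}$.
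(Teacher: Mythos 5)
Your proof is correct and follows essentially the same route as the paper's: fix a subhypergraph $H'$, take a tree decomposition witnessing its fractional hypertreewidth, apply Lemma~\ref{lem:cover-vc} bag-by-bag, and bound the VC dimension via Lemma~\ref{lem:vcfhw}. The only cosmetic difference is that the paper bounds $\vc{H'[B_t]}$ directly by $\vc{H}$ and applies Lemma~\ref{lem:vcfhw} to $H$, whereas you pass through $\vc{H'} \leq 2\bfhw{H'} \leq 2\bfhw{H}$; both chains are valid and give the same bound.
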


\begin{proof}
Let $H'$ be a subhypergraph of $H$, and $(T, (B_t)_{t \in \vertices{T}})$ be a tree decomposition of $H'$ of $fc$-width at most $\bfhw{H}$. 
By Lemma~\ref{lem:cover-vc} and Lemma~\ref{lem:vcfhw}, for each bag $B_t$ we have
\begin{align*}
\cover{H'[B_t]} &\leq 2^{\vc{H'[B_t]}+2} \cdot \coverfrac{H'[B_t]} \cdot \log (11 \cdot \coverfrac{H'[B_t]})\\
&\leq 2^{\vc{H}+2} \cdot \bfhw{H} \cdot \log (11 \cdot \bfhw{H})\\
&\leq 4^{\bfhw{H}+1} \cdot \bfhw{H} \cdot \log (11 \cdot \bfhw{H})
\end{align*}
and hence the hypertreewidth of $H'$ is at most $4^{\bfhw{H}+1} \cdot \bfhw{H} \cdot \log (11 \cdot \bfhw{H})$. This is true for all choices of subhypergraph $H'$, so the claim follows.
\end{proof}

\begin{corollary}
\label{cor:collapse-bhw-bfhw}
A class of hypergraphs has bounded $\beta$-hypertreewidth if and only if it has bounded $\beta$-fractional hypertreewidth.
\end{corollary}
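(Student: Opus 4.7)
The plan is to derive both directions of the equivalence directly from the quantitative estimate in Proposition~\ref{prop:bhw-bfhw}, paired with the trivial reverse inequality $\bfhw{H}\leq \bhw{H}$. The latter holds because any edge cover $C$ of a subhypergraph $H'[B_t]$ yields a $\{0,1\}$-valued fractional edge cover of the same weight (assign weight $1$ to the edges of $C$ and weight $0$ to the remaining edges), so $\fhw{H'}\leq \hw{H'}$ bag-by-bag in any tree decomposition; taking the maximum over all subhypergraphs $H'\subseteq H$ gives $\bfhw{H}\leq \bhw{H}$.

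For the forward implication, I would fix a class $\mathcal{H}$ with $\bfhw{H}\leq c$ for every $H\in\mathcal{H}$ and plug this uniform bound into Proposition~\ref{prop:bhw-bfhw}, yielding $\bhw{H}\leq 4^{c+1}\cdot c\cdot \log(11 c)$ uniformly over $\mathcal{H}$; since the right-hand side depends only on $c$, the class has bounded $\beta$-hypertreewidth. The converse implication is immediate from the observation of the previous paragraph: a uniform bound $\bhw{H}\leq c'$ on $\mathcal{H}$ transfers to $\bfhw{H}\leq c'$ by the inequality $\bfhw{H}\leq \bhw{H}$.

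There is no real obstacle: the corollary is a qualitative repackaging of Proposition~\ref{prop:bhw-bfhw}, so once one observes that the trivial direction $\bfhw{}\leq \bhw{}$ does hold, both implications follow in one line each. All of the substantive work -- the VC-dimension bound from Lemma~\ref{lem:vcfhw} and the cover-number versus fractional-cover comparison from Lemma~\ref{lem:cover-vc} -- is already carried out in the proof of Proposition~\ref{prop:bhw-bfhw}.
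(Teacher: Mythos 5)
Your proof is correct and matches the paper's intent exactly: the corollary is an immediate consequence of the trivial inequality $\bfhw{H}\leq \bhw{H}$ (which the paper notes follows from the definitions) together with the quantitative bound of Proposition~\ref{prop:bhw-bfhw}. Nothing further is needed.
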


\section{Coverwidth and MIM-width} 
\label{sec:coverwidth}

In this section, we prove that bounded MIM-width strictly generalises bounded coverwidth. We start with some definitions. 
Let $H$ be a hypergraph and $<$ be an ordering of $V(H)$. 
For $x\in V(H)$, we define $H^x$ to be the set of edges of $H$ that can be reached from $x$ using only vertices $\leq x$. 
More formally, a walk from $x\in V(H)$ to $e\in H$ is a sequence $(x_1,e_1,x_2, e_2,\dots, x_n, e_n)$ with $n\geq 1$ such that $x=x_1$, $e=e_n$, $x_n\in e_n$ and $\{x_i,x_{i+1}\}\subseteq e_i$, for all $1\leq i\leq n-1$. 
Then $e\in H^{x}$ if and only if there is a walk $(x_1,e_1,x_2, e_2,\dots, x_n, e_n)$ from $x$ to $e$ with $x_i\leq x$, for all $1\leq i\leq n$. 
Note that $\{e\in H: x\in e\}\subseteq H^x$. 
We define $H^x[\geq x]:=H^x[V(H^x)_{\geq x}]=\{e \cap V(H^x)_{\geq x}: e \in H^x,  e \cap V(H^x)_{\geq x} \neq \emptyset\}$, 
where $V(H^x)_{\geq x}=\{y\in V(H^x): y\geq x\}$. Observe that $x\in V(H^x[\geq x])$. 
The \emph{coverwidth} of the ordering $<$ is $\max_{x\in V(H)} \bcovernum{H^x[\geq x]}$. 
The \emph{coverwidth} of $H$, denoted by $\cw{H}$, is the minimum coverwidth over all orderings of $V(H)$. 
It was shown in~\cite{Capelli16:phd} that bounded coverwidth implies tractability of Max-CSP:

\begin{theorem}[\protect{\cite{Capelli16:phd}}]
\label{theo:coverwidth-capelli}
Let $k\geq 1$ be fixed. There exists an algorithm which, given as input a Max-CSP instance $I$ with hypergraph $H$ and an ordering of $V(H)$ of coverwidth $\leq k$, 
computes $\opt{I}$ in time polynomial in $\Vert I \Vert$.
\end{theorem}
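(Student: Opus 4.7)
My plan is to derive Theorem~\ref{theo:coverwidth-capelli} as a corollary of Theorem~\ref{thm:alg} by producing, in polynomial time from the given ordering $<$ of $V(H)$ of coverwidth at most $k$, a point decomposition of $H$ of width $O(k)$ and polynomial size. Since the paper's framework (Proposition~\ref{prop:simplified-crz}) equates flat point-width and simplified point-width, I will aim for a \emph{simplified} point decomposition of width $O(k)$, which both streamlines the verification (no $\T$-structure is needed) and establishes the separately interesting claim, announced in Section~\ref{sec:conc}, that bounded coverwidth implies bounded flat point-width. The main tractability result then applies directly.

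\textbf{Natural first attempt.} Let $x_1<\cdots<x_n$ enumerate $V(H)$. Take $T$ to be the rooted path $t_\bot\leftarrow t_{x_n}\leftarrow\cdots\leftarrow t_{x_1}$ with $B_{t_\bot}=\emptyset$, and for each $x$ set
\[
    B_{t_x} \;\defeq\; \{(v,e)\in\points{H} : e\in H^x,\ v\geq x\}.
\]
A direct unfolding of definitions gives $H|_{B_{t_x}}=H^x[\geq x]$, so $\bcovernum{H|_{B_{t_x}}}\leq k$ by the coverwidth hypothesis. Condition (1) of Definition~\ref{def:spw} is immediate: for any edge $e$, the choice $x=\min_< e$ yields $e\in H^x$ via the trivial walk $(x,e)$ and every vertex of $e$ is $\geq x$, so $\points{\{e\}}\subseteq B_{t_x}$.

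\textbf{Main obstacle: connectivity.} The subtle part is condition (2) of Definition~\ref{def:spw}. Small examples (e.g.\ $H=\{\{1,3\},\{2\}\}$ with the natural order and $H'=H$, $v=3$) show that the bags above do \emph{not} by themselves satisfy connectivity: a vertex $v$ can appear in $V(H'|_{B_{t_x}})$ and $V(H'|_{B_{t_z}})$ for $x<z\leq v$ but be absent from an intermediate $V(H'|_{B_{t_y}})$, because the families $H^x,H^y,H^z$ are not nested. I expect this to be the only non-routine step. I would fix it by enlarging each bag just enough to bridge such gaps without inflating $\bcovernum{H|_{B_{t_x}}}$ beyond $O(k)$. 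Concretely, add to $B_{t_x}$ every point $(v,e)$ with $v\geq x$ such that $e\in H^{x'}$ for some $x'<x$ and $e\cap V(H)_{\geq x}\neq\emptyset$, i.e., every edge that was ``alive'' strictly before $t_x$ and still reaches $\geq x$. Bridging works because if a gap as above exists for $H',v$, then the edge witnessing $v\in V(H'|_{B_{t_x}})$ satisfies the enlargement condition at $t_y$. To bound the $\beta$-cover number after enlargement, I would appeal to a walk-manipulation argument in the spirit of Lemma~\ref{lem:capelli2}: any padding edge at $t_x$ can be charged to an edge in a size-$\leq k$ cover of some $H^{x'}[\geq x']$ with $x'<x$, pushed forward to $\geq x$; a careful bookkeeping of these charges bounds the overall $\bcovernum$ of $H|_{B_{t_x}}$ by $O(k)$. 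Verifying that this bookkeeping does not blow up is the technical heart of the argument.

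\textbf{Wrap-up.} Once condition (2) is established for the enlarged bags, $(T,(B_t)_{t\in V(T)})$ is a simplified point decomposition of $H$ of width $O(k)$ and polynomial size, computable from $<$ in polynomial time. Proposition~\ref{prop:simplified-crz} converts it to a flat point decomposition of the same width, and Theorem~\ref{thm:alg} then computes $\opt{I}$ in time polynomial in $\|I\|$, establishing Theorem~\ref{theo:coverwidth-capelli}.
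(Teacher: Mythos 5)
Your high-level strategy matches the paper's: pass from the given ordering to a \emph{simplified} point decomposition of width $O(k)$, invoke Proposition~\ref{prop:simplified-crz} to get a flat point decomposition, then apply Theorem~\ref{thm:alg}. You also correctly identify the obstacle — with $T$ a path through $V(H)$ in the order $<$ and bags $B_{t_x}=\{(v,e):e\in H^x,\,v\in e,\,v\geq x\}$, the connectivity condition~(2) of Definition~\ref{def:spw} genuinely fails. However, the bag-enlargement you propose does not preserve bounded $\beta$-cover number, so the plan breaks at precisely the step you flag as ``the technical heart.'' Concretely, take $H=\{e_1,\dots,e_n\}$ with $e_i=\{a_i,z_i\}$ on disjoint vertex pairs, ordered $a_1<\cdots<a_n<z_1<\cdots<z_n$. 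Every $H^{a_i}=H^{z_i}=\{e_i\}$, so this ordering has coverwidth $1$. Your rule adds to $B_{t_{a_n}}$ the point $(z_i,e_i)$ for each $i<n$ (since $e_i\in H^{a_i}$, $a_i<a_n$, and $e_i\cap V(H)_{\geq a_n}=\{z_i\}\neq\emptyset$). Then $H|_{B_{t_{a_n}}}$ contains the $n-1$ pairwise disjoint singletons $\{z_1\},\dots,\{z_{n-1}\}$, so $\bcovernum{H|_{B_{t_{a_n}}}}\geq n-1$ — not $O(k)$. No charging scheme can salvage a bound that is violated outright.

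The paper (Proposition~\ref{prop:cw-to-spw}) keeps exactly your bags $B_{t_x}$ unchanged and instead replaces the path by a different rooted tree: $t_y$ is the parent of $t_x$ when $y=\min_{<}(V(H^x[\geq x])\setminus\{x\})$ (or $y=x_{\max}$ if $H^x[\geq x]$ has only the vertex $x$), with root $t_{x_{\max}}$. In words, the parent of $t_x$ is not the next vertex in the order but the next vertex that $x$ ``still interacts with'' in $H^x$; in the counterexample this makes $t_{a_i}$ a child of $t_{z_i}$ directly, so the two occurrences of $z_i$ are adjacent and no padding is needed. Connectivity is then proved via Claim~\ref{claim:H-x} ($H^x\subseteq H^y$ along tree ancestors), and the width bound is immediate because $H|_{B_{t_x}}=H^x[\geq x]$. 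So the correct fix modifies the tree, not the bags; your proof would need to be reworked along those lines.
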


The main result of this section is the following:

\begin{proposition}
\label{prop:cw-to-spw}
For every hypergraph $H$, we have $\spw{H}\leq \cw{H}$.
\end{proposition}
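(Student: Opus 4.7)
My plan is to construct, from an ordering $<$ of $V(H)$ of coverwidth at most $k$, a simplified point decomposition of $H$ of width at most $k$. The underlying rooted tree $T$ will have a distinguished root $t_\bot$ together with one node $t_x$ for each $x \in V(H)$. For each $x \in V(H)$, define
\[
p(x) \defeq \min_{<} \bigcup_{e \in H^x}\bigl(e \cap V(H)_{>x}\bigr)
\]
whenever this set is non-empty, and declare the parent of $t_x$ in $T$ to be $t_{p(x)}$; otherwise, the parent of $t_x$ is $t_\bot$. Since $p(x) > x$ by construction, the parent chain strictly increases in $<$ and $T$ is a well-defined rooted tree. I set $B_{t_\bot} \defeq \emptyset$ and, for each $x \in V(H)$,
\[
B_{t_x} \defeq \{(y,e) \in \points{H} : e \in H^x \text{ and } y \geq x\}.
\]
A direct calculation shows $H|_{B_{t_x}} = H^x[\geq x]$, and hence $\bcovernum{H|_{B_{t_x}}} \leq \cw{H} \leq k$, so the width is at most $k$. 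Condition (1) of Definition~\ref{def:spw} is immediate: if $e \in H$ and $x \defeq \min_<(e)$, then the trivial walk $(x,e)$ places $e$ in $H^x$ and every vertex of $e$ is $\geq x$, so $\{(v,e) : v \in e\} \subseteq B_{t_x}$.

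The main obstacle is the connectivity condition (2), and the technical heart of the proof is the following walk-splicing lemma: \emph{if $x \leq y$ and $y \in V(H^x)$, then $H^x \subseteq H^y$.} To prove it I would fix a walk $W_0 = (x = z_1, g_1, z_2, \ldots, z_k, e_0)$ inside $H^x$ ending at some edge $e_0$ that contains $y$, and for any target $e \in H^x$ with witness walk $W_e = (x = u_1, f_1, u_2, \ldots, u_m, e)$, splice them into the combined walk
\[
(y,\, e_0,\, z_k,\, g_{k-1},\, z_{k-1},\, \ldots,\, g_1,\, x,\, f_1,\, u_2,\, \ldots,\, u_m,\, e)
\]
from $y$ to $e$. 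Every adjacency is forced by $W_0$ or $W_e$, and every vertex used belongs to $\{y\} \cup \{z_i\} \cup \{x\} \cup \{u_j\}$, all of which are $\leq y$; hence $e \in H^y$.

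With this lemma in hand, condition (2) follows by a chain argument. Fix $H' \subseteq H$ and $v \in V(H')$, and let
\[
S \defeq \{t_x \in V(T) : v \in V(H'|_{B_{t_x}})\} = \{t_x : x \leq v \text{ and } \exists\, e \in H' \cap H^x \text{ with } v \in e\}.
\]
Note $t_v \in S$ (choose any $e \in H'$ containing $v$; the trivial walk $(v,e)$ shows $e \in H^v$). For any $t_x \in S$ with $x < v$, a witness edge $e$ satisfies $v \in e \cap V(H)_{>x}$, so $p(x) \leq v$; moreover $p(x) \in V(H^x)$ by construction, so the lemma yields $e \in H^{p(x)}$ and hence $t_{p(x)} \in S$. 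Iterating, the parent chain $t_x, t_{p(x)}, t_{p(p(x))}, \ldots$ strictly increases under $<$ while staying $\leq v$ and entirely inside $S$, so it reaches $t_v$. Consequently every node of $S$ is connected to $t_v$ by a $T$-path lying in $S$, which means that the subgraph of $T$ induced by $S$ is a connected subtree. The only non-routine step is the walk-splicing lemma; the remaining verification is bookkeeping.
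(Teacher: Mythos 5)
Your proof is correct and takes essentially the same approach as the paper's. Both define the bags identically as $B_{t_x} = \{(y,e) : e \in H^x,\, y \in e,\, y \geq x\}$, both hinge on a walk-splicing lemma showing that a vertex $y \in V(H^x)$ with $y > x$ satisfies $H^x \subseteq H^y$, and both verify connectivity by chasing the parent chain and anchoring at $t_v$. Your version is slightly cleaner in two small respects. First, using a fresh empty root $t_\bot$ instead of rooting at $t_{x_{\max}}$ avoids the paper's special case for vertices $x$ with $V(H^x)_{>x} = \emptyset$; in the paper these get attached directly to $t_{x_{\max}}$, which is why their analogue of your lemma (Claim~\ref{claim:H-x}) carries the extra hypothesis $|V(H^z[\geq z])| \geq 2$. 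Second, you state the splicing lemma purely in terms of the ordering and reachability (rather than tree ancestry), so it has a one-step direct proof; the induction that the paper puts inside its claim is absorbed into your chain argument, which is equivalent but a bit more streamlined.
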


\begin{proof}
Fix an ordering $<$ of $V(H)$ of coverwidth $\leq k$, where $k:=\cw{H}$. Let $x_{\max}:=\max_{<}(V(H))$. We define $T$ to be the rooted tree with vertex set $\{t_x: x\in V(H)\}$ and root $t_{x_{\max}}$ such that 
$t_y$ is the parent of $t_x$ in $T$ if and only if $|V(H^x[\geq x])|\geq 2$ and $y=\min_{<}(V(H^x[\geq x])\setminus\{x\})$, or $|V(H^x[\geq x])|=1$ and $y=x_{\max}$. 
For $t_x\in V(T)$, we define $B_{t_x}:=\{(y,e): e\in H^x, y\in e, y\geq x\}$. 

We claim that $(T, (B_t)_{t\in V(T)})$ is a simplified point decomposition of $H$ of width $\leq k$. 
To see the bound on the width, note that $H|_{B_{t_x}}=H^x[\geq x]$. For condition (1) of simplified point decompositions, 
given $e\in H$, we have that $\{(y,e): y\in e\}\subseteq B_{t_x}$, where $x=\min_<(\{y: y\in e\})$. 
For condition (2), we need the following claim:

\begin{claim}
\label{claim:H-x}
Suppose that $t_x$ is a descendent of $t_y$ in $T$ and $|V(H^z[\geq z])|\geq 2$, where $t_z$ is the only child of $t_y$ that is ancestor of $t_x$. Then $H^x\subseteq H^y$.
\end{claim}

\begin{proof}
We show the claim by induction. For the base case, assume $t_y$ is the parent of $t_x$, and let $e\in H^x$. 
It follows that there is a walk $\pi_e$ from $x$ to $e$ using vertices $\leq x$. Since $|V(H^x[\geq x])|\geq 2$, we have 
$y=\min_{<}(V(H^x[\geq x])\setminus\{x\})$. In particular, there is $f\in H^x$ with $y\in f$, and hence a walk $\pi_f$ from $x$ to $f$ using vertices $\leq x$. 
We can concatenate $y,\pi_f^{-1}$ and $\pi_e$, where $\pi_f^{-1}$ is the reverse sequence of $\pi_f$, and obtain a walk from $y$ to $e$ using vertices $\leq y$ (since $x<y$). 
Hence, $e\in H^y$. 
Now suppose that $t_x$ is a descendant of $t_z$ and $t_y$ is the parent of $t_z$, where $x<z<y$. Let $e\in H^x$. By induction, $e\in H^z$. Using the same argument as above, we obtain that $e\in H^y$. 
\renewcommand{\qedsymbol}{$\blacksquare$}\end{proof}
 
Let $H'$ be a subhypergraph of $H$. Suppose that $x\in V(H'|_{B_{t_y}})\cap V(H'|_{B_{t_z}})$, and $t_w$ is in the unique path from $t_z$ to $t_y$ in $T$, where $x,y,z,w\in V(H)$. 
Assume first that $t_z$ is a descendant of $t_y$. Since $x\in V(H'|_{B_{t_z}})$, there is a point $(x,e)\in B_{t_z}$ such that $x\in e$ and $e\in H'$. 
By definition of $B_{t_z}$, we have that $e\in H^z$. Since $w\neq x_{\max}$, we can apply Claim~\ref{claim:H-x} and obtain that $e\in H^w$. 
Since $w<y\leq x$, we have $(x,e)\in {B_{t_w}}$. Therefore,  $x\in V(H'|_{B_{t_w}})$. 
Suppose now that $t_z$ and $t_y$ are incomparable in $T$. 
Since $x\in V(H'|_{B_{t_z}})$, there is $e\in H^z$ with $x\in e$ and $e\in H'$. 
Let $t_r$ be the only child of $t_x$ that is ancestor of $t_z$. Since $r\neq x_{\max}$, by Claim~\ref{claim:H-x}, we have that $e\in H^r$. 
As $r<x$, we have that $\{r,x\}\subseteq V(H^r[\geq r])$. We can then apply Claim~\ref{claim:H-x} and deduce that $e\in H^x$. In particular, 
$x\in V(H'|_{B_{t_x}})$. Since $t_z$ and $t_y$ are descendent of $t_x$, we obtain that $x\in V(H'|_{B_{t_w}})$ by applying the previous case. 
Hence condition (2) holds. 
\end{proof}

Together with Theorem~\ref{theo:mim-fpw} and Proposition \ref{prop:simplified-crz}, we obtain that $\mimw{H}\leq 4\cdot \cw{H}$, for every hypergraph $H$. 
In particular, we have:

\begin{corollary}
\label{coro:mim-cw}
Every class of hypergraphs of bounded coverwidth also has bounded MIM-width. 
\end{corollary}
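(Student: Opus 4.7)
The plan is to chain three bounds already proved in the excerpt into a single linear bound on MIM-width in terms of coverwidth, and then read off the qualitative statement about classes. Since this is a corollary of the preceding material, the proof should be essentially a one-line computation rather than anything substantive.

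First I would apply Proposition~\ref{prop:cw-to-spw} to obtain $\spw{H} \leq \cw{H}$ for every hypergraph $H$. Next I would invoke Proposition~\ref{prop:simplified-crz}, which gives the equality $\fpw{H} = \spw{H}$, yielding $\fpw{H} \leq \cw{H}$. Finally, the first inequality in Theorem~\ref{theo:mim-fpw} gives $\mimw{H} \leq 4 \cdot \fpw{H}$, so composing the three bounds produces the pointwise inequality $\mimw{H} \leq 4 \cdot \cw{H}$ for every hypergraph $H$.

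To conclude the corollary itself, I would then lift this pointwise inequality to a statement about classes: if $\mathcal{H}$ is a class with $\cw{H} \leq k$ for every $H \in \mathcal{H}$, then $\mimw{H} \leq 4k$ for every $H \in \mathcal{H}$, so $\mathcal{H}$ has bounded MIM-width. I do not foresee any real obstacle here; the only thing worth double-checking is that the three ingredient bounds compose in the right direction, i.e.\ that each of them upper-bounds $\mimw{\cdot}$, $\fpw{\cdot}$, $\spw{\cdot}$ by the next measure in the chain rather than the other way round, which is indeed what each of the cited results provides.
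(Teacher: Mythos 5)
Your proof is correct and matches the paper's own argument exactly: the paper also chains Proposition~\ref{prop:cw-to-spw}, Proposition~\ref{prop:simplified-crz}, and the first inequality of Theorem~\ref{theo:mim-fpw} to obtain $\mimw{H}\leq 4\cdot\cw{H}$, from which the statement about classes follows immediately.
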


It follows from the proofs of Propositions~\ref{prop:cw-to-spw}, \ref{prop:crz-point} and \ref{prop:alpha-to-mim}, that, given a hypergraph $H$ and an ordering of $V(H)$ of coverwidth $\leq k$, 
we can compute in time polynomial in $\Vert H\Vert$, a branch decomposition of $H$ of MIM-width $\leq 4k$. In particular, we obtain Theorem~\ref{theo:coverwidth-capelli} as a consequence of 
Theorem~\ref{theo:mim-stv}.

Finally, we show that the converse to Corollary~\ref{coro:mim-cw} does not hold:

\begin{proposition}
\label{prop:no-converse}
There exists a class of hypergraphs with bounded MIM-width and unbounded coverwidth. 
\end{proposition}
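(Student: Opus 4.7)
The plan is to exhibit an explicit family of hypergraphs $\{H_n\}_{n \ge 1}$ for which $\mimw(H_n)$ is bounded by a constant but $\cw(H_n) \to \infty$. By Proposition~\ref{prop:cw-to-spw} together with Proposition~\ref{prop:simplified-crz} and Theorem~\ref{theo:mim-fpw}, bounded coverwidth already implies bounded MIM-width (up to constant factors), so to separate the two measures the construction must leverage the extra flexibility of branching decompositions over linear orderings.

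The first step is to \emph{construct the family}. A natural strategy is to realise $H_n$ so that $\inc{H_n}$ lies in a class of bipartite graphs already known to have bounded MIM-width (for instance bipartite permutation graphs, interval bigraphs, circular-arc bigraphs, or bipartite graphs of bounded degree on one side) while the combinatorial structure of the hypergraph itself is intricate enough to defeat every linear elimination. The second step is to \emph{bound MIM-width}, which essentially follows for free from this choice of incidence graph: an explicit branch decomposition adapted from the chosen bipartite graph class gives $\mimw(H_n) = O(1)$, via $\mimw(H_n) = \mimw(\inc{H_n})$.

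The third step is to \emph{unbound coverwidth}. By Observation~\ref{obs:cn-mim}, $\bcovernum(H_n^x[\geq x])$ equals the induced matching number of $\inc{H_n^x[\geq x]}$. One fixes an arbitrary linear ordering $<$ of $V(H_n)$ and uses a pigeonhole argument: among the ``hub'' vertices of $H_n$ (those lying in many edges), at least one, call it $x$, must be ordered early enough that most of its incident edges lie in $H_n^x$. One then argues that after restricting to $V_{\geq x}$ these edges remain combinatorially ``spread out'' enough to support an induced matching of size $\Omega(n)$ in $\inc{H_n^x[\geq x]}$, forcing $\bcovernum(H_n^x[\geq x]) = \Omega(n)$ and hence $\cw(H_n) = \Omega(n)$.

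The main obstacle is the first step, namely identifying the right family. The two requirements pull in opposite directions: hypergraphs whose incidence graph has bounded MIM-width tend to admit canonical linear elimination orders that also keep coverwidth bounded (as is the case for interval hypergraphs, $\beta$-acyclic hypergraphs, or nested and laminar families), whereas hypergraphs for which every elimination order fails (such as complete bipartite hypergraphs $K_{n,n}$ or dense sunflowers with large cores) typically also have large induced matchings at \emph{every} balanced branch cut of $\inc{H_n}$, breaking bounded MIM-width. A successful construction must therefore exploit the genuine additional flexibility that branching decompositions enjoy over linear eliminations, in the spirit of the classical width gap between treewidth and pathwidth (where, for example, complete binary trees have treewidth $1$ but pathwidth $\Theta(\log n)$). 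Lifting such a gap from graphs to higher-arity hypergraphs, in a way that survives the $\bcovernum$-based width measures used in both $\cw$ and $\mimw$, is the principal technical challenge; once a suitable class is pinned down, the verifications of the MIM-width upper bound (via an explicit branch decomposition) and the coverwidth lower bound (via the ordering-invariant pigeonhole argument above) should be routine.
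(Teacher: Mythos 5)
Your proposal is a plan, not a proof: it names three steps, identifies Step~1 (constructing the family) as ``the principal technical challenge,'' and then stops precisely there, never exhibiting a hypergraph family nor verifying either bound. A correct proof must produce the explicit family; without it nothing is established. What is more, your guiding intuition about which constructions to avoid points you in the wrong direction. You caution that ``complete bipartite hypergraphs $K_{n,n}$ or dense sunflowers'' would have large induced matchings at every balanced branch cut, but complete bipartite graphs have induced matching number $1$, so incidence graphs built from a few overlapping complete bipartite blocks tend to have \emph{small} MIM-width. The paper's construction exploits exactly this: it takes $V(H_n)=X\cup Y$ with $|X|=|Y|=n$ and edges $e_x:=Y\cup\{x\}$ for $x\in X$ and $e_y:=X\cup\{y\}$ for $y\in Y$. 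The incidence graph $\inc{H_n}$ is then the union of the complete bipartite graph between $X$ and $\{e_y\}_{y\in Y}$, the complete bipartite graph between $Y$ and $\{e_x\}_{x\in X}$, and a perfect matching $\{x,e_x\}$, $\{y,e_y\}$; a linear branch decomposition that peels off $x_1,e_{x_1},x_2,e_{x_2},\dots$ and then $y_1,e_{y_1},\dots$ has MIM-width at most $2$.

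Your Step~3 heuristic (a pigeonhole among hub vertices) is also heavier than needed and, as stated, not obviously correct since you give no argument that edges ``remain combinatorially spread out'' after restriction. The actual coverwidth lower bound is much simpler: for \emph{any} ordering, look at the first vertex $z_1$, say $z_1\in X$. Then $H^{z_1}[\geq z_1]=\{e\in H_n: z_1\in e\}$ with no nontrivial restriction (since $z_1$ is minimum), and this set already contains all $n$ edges $e_{y_1},\dots,e_{y_n}$. The pairs $\{e_{y_i},y_i\}$ form an induced matching of size $n$ in $\inc{H^{z_1}[\geq z_1]}$, so by Observation~\ref{obs:cn-mim} we get $\bcovernum{H^{z_1}[\geq z_1]}\geq n$, hence $\cw{H_n}\geq n$. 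No hub-selection, no balancing argument. To complete your proof you would need to discover a family with these two features; as written, the proposal does not get there.
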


\begin{proof}
For every $n\geq 1$, we define $H_n$ to be the hypergraph with vertex set $X\cup Y$, where $X=\{x_1,\dots,x_n\}$ and $Y=\{y_1,\dots,y_n\}$ and edges 
$H=\{X\cup \{y\}: y\in Y\}\cup \{Y\cup \{x\}: x\in X\}$. Let $\mathcal{C}:=\{H_n: n\geq 1\}$. We also define $e_x:=Y\cup \{x\}$, for every $x\in X$; and $e_y:=X\cup \{y\}$, for every $y\in Y$. 

We first prove that $\mathcal{C}$ has unbounded coverwidth by showing that $\cw{H_n}\geq n$, for every $n\geq 1$. 
Let $z_1,\dots,z_{2n}$ be any ordering of $V(H_n)$ and assume without loss of generality that $z_1\in X$. 
Observe that $H^{z_1}[\geq z_1]=\{e\in H_n: z_1\in e\}$. Then we have $\{e_{y_1},\dots,e_{y_n}\}\subseteq H^{z_1}[\geq z_1]$. 
Note that $\{e_{y_1},y_1\}$, $\{e_{y_2},y_2\}$, $\dots$, $\{e_{y_n},y_n\}$ is an induced matching of $\inc{H^{z_1}[\geq z_1]}$. 
By Observation~\ref{obs:cn-mim}, we obtain that $\bcovernum{H^{z_1}[\geq z_1]}\geq n$, and hence, the coverwidth of the ordering $z_1,\dots,z_{2n}$ is $\geq n$. 
Since this holds for any ordering, we have that $\cw{H_n}\geq n$. 

Now we show that $\mimw{H_n}\leq 2$, for every $n\geq 1$. We define a branch decomposition $(T,\delta)$ for $\inc{H_n}$ as follows. 
Let $P$ be the rooted path 
\[t_{1,1},t_{1,2}, \dots,t_{n,1}, t_{n,2}, s_{1,1}, s_{1,2},\dots, s_{n,1}, s_{n,2}\]
with root $t_{1,1}$. 
The tree $T$ is obtained from $P$ by adding, for every $1\leq i\leq n$, fresh nodes $t'_{i,1}, t'_{i,2}$, whose parents are $t_{i,1}, t_{i,2}$, respectively; 
and by adding for every $1\leq i\leq n-1$, fresh nodes $s'_{i,1}, s'_{i,2}$, whose parents are $s_{i,1}, s_{i,2}$, respectively, and a fresh node $s'_{n,1}$ with parent $s_{n,1}$. 
For every $1\leq i\leq n$, we let $\delta(t'_{i,1})=x_i$ and $\delta(t'_{i,2})=e_{x_i}$; for every $1\leq i\leq n-1$, we let $\delta(s'_{i,1})=y_i$ and $\delta(s'_{i,2})=e_{y_i}$; 
and we set $\delta(s'_{n,1})=y_n$ and $\delta(s_{n,2})=e_{y_n}$. 

We claim that the MIM-width of $(T,\delta)$ is at most $2$. 
Let $t$ be an internal node (i.e., not a leaf) of $T$. Suppose that $t=t_{i,1}$ for some $1\leq i\leq n$ (the case $t=s_{i,1}$ is analogous). 
Then we have that $\inc{H_n}[V_t,V(\inc{H_n})\setminus V_t]$ is the disjoint union of two complete bipartite graphs: one with partition $(\{x_1,\dots,x_{i-1}\}, \{e_{y_1},\dots,e_{y_n}\})$ and 
the other with partition $(\{e_{x_1},\dots,e_{x_{i-1}}\}, \{y_1,\dots,y_n\})$. In particular, $\mim{\inc{H_n}[V_t,V(\inc{H_n})\setminus V_t]}\leq 2$. 
Now suppose that $t=t_{i,2}$ for some $1\leq i\leq n$ (again, the case $t=s_{i,2}$ is analogous). 
In this case, $\inc{H_n}[V_t,V(\inc{H_n})\setminus V_t]$ is the union of a complete bipartite graph with partition $(\{e_{x_1},\dots,e_{x_{i-1}}\}, \{y_1,\dots,y_n\})$ 
and the graph obtained from the complete bipartite graph with partition $(\{x_1,\dots,x_i\}, \{e_{y_1},\dots,e_{y_n}\})$ by adding the vertex $e_{x_i}$ and the edge $\{x_i,e_{x_i}\}$. 
Hence, $\mim{\inc{H_n}[V_t,V(\inc{H_n})\setminus V_t]}\leq 2$. We conclude that $\mimw{H_n}\leq 2$, and therefore, that $\mathcal{C}$ has bounded MIM-width. 
\end{proof}

\end{document}